\tikzset{%
	highlight/.style={rectangle,rounded corners,fill=red!15,draw,
		fill opacity=0.5,thick,inner sep=0pt}
}
\newtheorem{remark}{Remark}
\newtheorem{definition}{\textbf{Definition}}
\newtheorem{theorem}{\textbf{Theorem}}
\newtheorem{assumption}{\textbf{Assumption}}
\newtheorem{lemma}{\textbf{Lemma}}
\newtheorem*{proof*}{\textbf{Proof}}
\def\BibTeX{{\rm B\kern-.05em{\sc i\kern-.025em b}\kern-.08em
		T\kern-.1667em\lower.7ex\hbox{E}\kern-.125emX}}
\begin{document}
	\title{Cooperative constrained motion coordination \\ of networked heterogeneous vehicles}
	\author{Zhiyong Sun, \IEEEmembership{Member, IEEE}, Marcus Greiff, \IEEEmembership{Student Member, IEEE}, Anders Robertsson, \IEEEmembership{Senior Member, IEEE}, Rolf Johansson, \IEEEmembership{Fellow, IEEE}, and Brian D. O. Anderson, \IEEEmembership{Life Fellow, IEEE},
		\thanks{The research leading to these results has received funding from
			the Swedish Science Foundation (SSF) project ``Semantic mapping
			and visual navigation for smart robots'' (RIT15-0038), and a starting grant from Eindhoven Artificial Intelligence Systems Institute (EAISI), The Netherlands. }
		\thanks{Z. Sun is with Department of Electrical Engineering, Eindhoven University of Technology (TU/e), and also with Eindhoven Artificial Intelligence Systems Institute, Eindhoven, The Netherlands. Email: {\tt\small z.sun@tue.nl, sun.zhiyong.cn@gmail.com. } }
		\thanks{M. Greiff, A. Robertsson and R. Johansson are members of the LCCC Linnaeus Center and the ELLIIT Excellence Center at Lund University, Sweden. Emails: 
			{\tt\small \{marcus.greiff, rolf.johansson, anders.robertsson\}@control.lth.se}}
		\thanks{B. D. O. Anderson is with  School of  Engineering, The Australian National University, Acton, ACT 2601,   Australia. Email: 
			{\tt\small brian.anderson@anu.edu.au}}        
	}
	
	\maketitle
	
	\begin{abstract}
		We consider the problem of cooperative motion coordination   for multiple  heterogeneous mobile vehicles subject to various constraints. These include nonholonomic motion constraints, constant speed constraints, holonomic coordination constraints, and equality/inequality geometric constraints. We develop a general framework involving differential-algebraic equations and viability theory to  determine coordination feasibility for a coordinated motion control under heterogeneous vehicle dynamics and different types of coordination task constraints. 
		If a coordinated motion solution exists for the derived differential-algebraic equations and/or inequalities, a constructive algorithm is proposed to derive an equivalent dynamical system that  generates a set of feasible coordinated motions for each individual vehicle. In case studies on coordinating two vehicles, we derive analytical solutions to motion generation for two-vehicle groups consisting of car-like vehicles, unicycle vehicles, or vehicles with constant speeds, which serve as benchmark coordination tasks for more complex vehicle groups. The motion generation algorithm is well-backed by simulation data for a wide variety of coordination situations involving heterogeneous vehicles. 
		We then extend the  vehicle  control framework to deal with the cooperative coordination problem with time-varying coordination tasks and leader-follower structure. 
		We show several simulation experiments on multi-vehicle coordination under various constraints to validate the theory and the effectiveness of the proposed schemes. 
	\end{abstract}
	
	\begin{IEEEkeywords}
		Multi-vehicle coordination; networked heterogeneous systems;   time-varying constraints; affine distribution and codistribution; controlled-invariant set; unicycle vehicle; constrained motion control. \\
		
		Accompanying simulation videos are available at \url{https://youtu.be/wFx6x7Mep74}. 
	\end{IEEEkeywords}
	
	%%%%%%%%%%%%%%%%%%%%%%%%%%%%%%%%%%%%%%%%%%%%%%%%%%%%%%%%%%%%%%%%%%%%%%%%%%%%%%%%
	
	\section{Introduction}
	\subsection{Background and motivation}
	\IEEEPARstart{I}{n} the active research field of mobile robot/vehicle motion planning and control, multi-vehicle motion coordination and cooperative control have been and will remain attractive research topics, motivated by an increasing number of practical applications requiring multiple robots or vehicles to cooperatively perform coordinated tasks \cite{knorn2016overview,dames2017detecting,schwager2017multi}. These include multi-robot formation control, area coverage and surveillance, coordinated target tracking, to name a few \cite{oh2015survey,liu2018distributed}. A fundamental problem in multi-vehicle coordination is to plan feasible motion schemes and trajectories for each individual vehicle which should satisfy both kinematic or dynamic requirements for all vehicles, and inter-vehicle geometric constraints that describe the nature of a given coordination task. Typically, an individual vehicle is subject to various   motion constraints, e.g., determined by fundamental physics such as Newton's laws, together with constraints inherent to
	the vehicle on e.g., speed, rate of linear or angular acceleration,  etc. These  constraints limit possible motion directions, while a coordinated motion to achieve a predefined coordination task then further imposes inter-vehicle motion constraints, related to, for example, vehicle spacing, whether by equality constraints or inequality constraints. Recent advances on  heterogeneous multi-robot system \cite{rizk2019cooperative} have revealed  the great potential of cooperative control involving heterogeneous autonomous systems in achieving complex tasks, while the heterogeneity enhances coordination performance for a networked robotic group but also  makes the coordination control a more challenging problem. Till now, how to secure  cooperative constrained motion coordination involving networked heterogeneous systems still remains an open problem, which motivates the research in this paper.

	\subsection{Related work}
	\subsubsection{Cooperative feasible coordination}
	Coordination tasks with multiple mobile vehicles often involve  various types of inter-vehicle constraints, typically described by equality or inequality functions of inter-vehicle geometric variables. For example, a practical coordinated motion may be described by some inequality constraints that require a bounded inter-vehicle distance between mobile vehicles; i.e., a lower bound to guarantee collision avoidance, and an upper bound to avoid communication loss due to excessively long ranges. Furthermore, in multi-robot visibility maintenance control, which requires vehicles' headings to lie in a bounded cone of field of view, coordination constraints are modelled by 
	certain leader-follower geometric inequality functions. All these practical coordination control scenarios call for a general framework for multi-vehicle coordination planning and control under various constraints.
	
	The first key question on cooperative multi-vehicle coordination is to determine \textit{coordination task feasibility}. We say a coordination task is \textit{feasible}, if   dynamically feasible and cooperative motions are available for all vehicles that satisfy both vehicles' kinematic motion constraints and various geometric constraints in the coordination task.  
	The seminal paper by Tabuada \textit{et al.} \cite{tabuada2005motion}   studied the motion feasibility problem in the context of  multi-vehicle formation control. Via the tools of differential geometry, feasibility conditions were derived for a group of mobile vehicles to maintain formation specifications (described by strict equality constraints) in each vehicle's kinematic  motions. Recently, the motion feasibility problem in multi-vehicle formation and cooperative control has attracted renewed interest  in the control and robotics community. The paper \cite{maithripala2011geometric} discusses coordination control with dynamically feasible vehicle motions, and solves a rigid formation shape maintenance task and formation reconfiguration problem. 
	Our preliminary work  \cite{Sun2016feasibility} has investigated the formation and coordination feasibility with mobile robotic systems modelled by control affine nonlinear systems with drift terms (which include fully-actuated systems, under-actuated systems, and nonholonomic vehicles). More recently, the work by Colombo and Dimarogonas  \cite{colombo2018motion} extends the motion feasibility condition in \cite{tabuada2005motion} to multi-vehicle formation control systems on Lie groups. Cooperative transport control using multiple autonomous vehicles can also be formulated as a motion feasibility problem. In a recent paper  \cite{hajieghrary2017cooperative}, the authors discussed cooperative transport of a buoyant load using two autonomous surface vehicles (ASVs) via a differential geometric approach. The ASV's dynamics are described by the standard unicycle-type equations with nonholonomic constraints, while any two joint-vehicles assume a cooperative task to maintain a fixed distance between them. 
	
	We remark that the above referenced papers \cite{tabuada2005motion,maithripala2011geometric,Sun2016feasibility,colombo2018motion,hajieghrary2017cooperative} only discussed formation or coordination control for multiple vehicles with \textit{strict equality} functions. This paper will focus on a more general problem in multi-vehicle coordination control that also includes inequality constraints, and a mix of equality and inequality constraints, as well as time-varying constraints that can describe more complex coordination tasks.   
	
	\subsubsection{Heterogeneous multi-robot systems}
	Cooperative control of multiple heterogeneous multi-robot systems
	has received particular interest recently in the control and robotics community, as it is believed that the heterogeneity of robotic systems improves the performance and versatility in achieving a cooperative task, since different autonomous systems can complement  each other for complex tasks that may not be achieved by homogeneous systems. Recent studies on heterogeneous multi-robotic systems have focused on typical coordination tasks such as swarm \cite{  dorigo2013swarmanoid,    prorok2017impact}, task allocation and routing plan \cite{ sakamoto2020routing,   mayya2021resilient, whitzer2020coordinating}, and delivery tasks \cite{mathew2015planning} involving autonomous heterogeneous UAV-UGV coordination.   It has been further reviewed in \cite{ayanian2019dart} that diversity and heterogeneity can greatly enhance autonomy and cooperative  missions in a robot team. However, since heterogeneous robotic systems comprise individual systems of different kinematics and various motion constraints,  coordinated motion is more challenging to treat since all individual motion constraints should be satisfied. This motivates the research in this paper which aims to develop a general framework to address heterogeneous %vehicle coordination while realizing 
	cooperative multi-vehicle tasks.

	\subsubsection{Autonomous vehicles with motion constraints}
	The problem of maintaining holonomic equality constraints in autonomous  vehicle coordination is also relevant to the framework of virtual holonomic control (VHC). VHC involves a relation (usually described by an equality constraint) among the configuration variables of a mechanical or robotic system which does not physically exist \cite{freidovich2008periodic,    shiriaev2007virtual}. Such constraints are controlled to be invariant via feedback controllers \cite{maggiore2013virtual}. In this paper, we present a \textit{multi-vehicle} framework that includes equality constraints as a special case, and develop admissible control inputs that maintain both equality and inequality constraints in coordination tasks while satisfying nonholonomic motion constraints.  In general,  nonholonomic motion constraints arise from individual vehicle's motion equations, and holonomic constraints come from the coordination tasks described by equality functions or geometric inequalities. 
	Our tools to solve the feasible coordination problem for multiple vehicles with various constraints are an interplay of differential geometry for nonlinear control \cite{isidori1995nonlinear}, viability theory \cite{aubin2009viability} and differential-algebraic equations and inequalities. One of the key tools to address feasible coordination and motion generation with \textit{inequality} motion/coordination constraints is  viability theory \cite{aubin2009viability}, which has relevance in set-invariance control \cite{blanchini1999set}  (the term controlled-invariance set is also used in this context).   It has been used in solving coordination control problem for under-actuated autonomous vehicles in \cite{panagou2014cooperative}, autonomous vehicle  racing control  in \cite{liniger2017real} and visibility maintenance for multiple robotic systems in \cite{panagou2013viability}. 
	
	\subsection{Contributions}
	In this paper, a synthesis of coordination control that respects vehicles' kinematic constraints (which often feature nonholonomic and under-actuated motion constraints) and inter-vehicle constraints (which include holonomic formation constraints, inequality functions or a mix of various constraints) will be provided. To determine the coordination feasibility, a condition in the form of a set of general differential-algebraic equality with inequalities  will be derived that incorporates both vehicle kinematics and geometric coordination functions.  We will further devise an efficient algorithm to solve the proposed feasibility  equations and inequalities  that generate feasible trajectories for all vehicles to achieve a coordination task. We will consider two typical modellings for multiple vehicle coordination control, one based on undirected graph and the other based on leader-follower directed graph. In both cases we present feasibility conditions for vehicle coordination;  feasible motions and vehicle trajectories, if they exist,  can be generated by the devised  algorithm. 
	
	To address time-varying coordination tasks, we will develop a general theory on temporal viability and controlled temporal invariant set. The multi-vehicle control framework is then extended to deal with constrained motion coordination with time-varying geometric task functions. To illustrate the proposed coordination framework and motion generation theory, we also present several application examples on coordinating two or more vehicles with  heterogeneous kinematics and with equality or inequality coordination constraints between inter-vehicle distances or headings. For case studies of typical vehicle models, we derive analytical solutions to motion generation for two-vehicle groups consisting of car-like vehicles, unicycle vehicles, or vehicles with constant (but non-identical) speeds, with motion models in ambient spaces of distinct dimensions, which serve as benchmark coordination tasks for more complex vehicle groups. 
	
	Some preliminary results are presented in conference papers \cite{Sun2016feasibility, sun2019feasible}, which mainly dealt with simple coordination tasks for mobile vehicles. Apart from consolidating these previous publications,  the present paper aims to provide a coherent study to solve the heterogeneous vehicle motion coordination problem, and present a comprehensive framework on constrained vehicle coordination with various (time-invariant or time-varying) tasks and improved motion generation algorithms. 
	
	\subsection{Paper organization}
	This paper is organized as follows. Section~\ref{Sec:preliminary} provides preliminary knowledge of differential geometry, affine distribution/codistribution and vehicle models, and introduces the problem formulation for heterogeneous multi-vehicle coordination.  In Section~\ref{sec:constraints_formulation} we formulate motion constraints arising from vehicles' kinematics and coordination tasks in a unified framework. Section~\ref{sec:coordination_feasibility} presents two key theorems to determine coordination feasibility and devises a constructive algorithm for coordinated motion generation for heterogeneous vehicles. Case studies and examples on coordinating two or more homogeneous and heterogeneous vehicles are shown in Section~\ref{sec:case_study}. Cooperative vehicle coordination under time-varying task functions and leader-follower structures is presented in Section~\ref{sec:time_varying_coordination} and Section~\ref{sec:leader_follower_coordination}, respectively. 
	Concluding remarks in Section~\ref{sec:conclusions} close this paper. 
	
	% \subsection{Notations} The notations in this paper are fairly standard. For a real-valued function $f$, we denote $f \in C^0, C^1, C^\infty$ to mean that $f$ is continuous, continuously differentiable, or smooth, respectively. For   any differentiable function function $f(x): \mathbb{R}^n \rightarrow \mathbb{R}$, its gradient is denoted as $\nabla_x f = [\frac{\partial}{x_1}f, \frac{\partial}{x_2}f, \cdots, \frac{\partial}{x_n}f]^\top$, and its Jacobian (partial derivative) is defined as $\frac{\partial f}{\partial x} = \nabla_x^\top f$. 
	
	\section{Preliminary and problem formulation} \label{Sec:preliminary}
	In this section we briefly introduce a few concepts and tools to solve the constrained coordination problem, which mainly include  differential geometry approaches on nonlinear control systems from \cite{nijmeijer1990nonlinear, murray1994mathematical} and set-invariance control from \cite{aubin2009viability,blanchini2008set}. 
	
	\subsection{Affine distribution, codistribution and vehicle models}
	A distribution  $\Delta(x)$ on $\mathbb{R}^n$ is an assignment of a linear subspace
	of $\mathbb{R}^n$ at each point $x$. Given a   set of $k$ smooth vector fields $X_1(x), X_2(x), \cdots, X_k(x)$, we define the smooth distribution as $$\Delta(x) = \text{span}\{X_1(x), X_2(x), \cdots, X_k(x)\}.$$ A vector field $X$ belongs to a distribution $\Delta$ if $X(x) \in \Delta(x)$, $\forall x \in \mathbb{R}^n$. A point $x \in\mathbb{R}^n$  is said to be a \textit{regular} point of the distribution $\Delta(x)$ if there exists a neighborhood $U$ of $x$ such that, $\forall \bar x \in U$, the associated distribution $\Delta(\bar x)$ is nonsingular (i.e., of full column rank). In this paper we suppose all smooth distributions under consideration have a constant full column rank and the points associated with a distribution are regular.

	A codistribution assigns to each $x$ a subspace in the dual space denoted by $(\mathbb{R}^{n})^\star$. Given a distribution $\Delta$,  we construct the associated codistribution $\Delta^\perp$ by the annihilator of $\Delta$, which is the set of all covectors that annihilates all vectors in   $\Delta(x)$
	(see \cite[Chapter 1]{isidori1995nonlinear})
	$$\Delta^\perp(x) = \{\omega(x) \in  (\mathbb{R}^{n})^\star  | \left \langle \omega(x),   X \right \rangle = 0, \,\, \forall X \in \Delta \}.$$
	Suppose $x$ is a regular point of a smooth distribution $\Delta$. Then $x$ is also a regular point of the associated codistribution $\Delta^\perp$, where $\Delta^\perp$ is smooth with a constant full rank. Furthermore, $\Delta^\perp$ can be spanned by a set of smooth covectors that are linearly independent at each point in the neighborhood of $x$ \cite{isidori1995nonlinear}. 
	
	In this paper, we model each individual vehicle's motion equations  by the following general form (i.e., control-affine system)
	\begin{align} \label{eq:system_drift}
		\dot p_i = f_{i,0}(p_i) + \sum_{j=1}^{l_i} f_{i,j}(p_i) u_{i,j},
	\end{align}
	where $p_i \in    \mathbb{R}^{n_i}$ is the state of vehicle $i$  (where $n_i$ denotes the dimension of state space for vehicle $i$), $f_{i,0}$ is a smooth drift term, and $u_{i,j}$ is the \emph{scalar} control input associated with the smooth vector field $f_{i,j}$, and $l_i$ is the number of vector field functions.
	%Note that the set of vector fields $f_{i,j}$, $j = 0,1, \cdots, l$ can be thought of defining a subspace of the tangent plan at each point.
	Such a nonlinear control-affine system~\eqref{eq:system_drift} with a drift term is very general in that it describes many different types of real-life vehicle kinematics and motion control systems, including fully actuated vehicles, under-actuated systems, and vehicle models subject to holonomic/nonholonomic motion constraints (see the several typical vehicle examples provided in Section~\ref{sec:kinematic_model}).
	
	% {\color{red}TODO: a short remark that the control affine system (1) can describe fully actuated systems, under-actuated systems, vehicle motions with holonomic constraints, and with nonholonomic constraints. Also note that the model (1) is a kinematic model, and comment on the dynamics model. Cite AD Lewis or Bullo \cite{lewis2000simple, lewis1999mechanical}}
	
	\subsection{Viability theory and set-invariance control}
	In this paper, we will provide a general framework to solve complex coordination problems with mixed equality/inequality constraints. A key tool to address inequality constraints is   viability theory and set-invariance control \cite{blanchini1999set,aubin2009viability}. We now introduce some background knowledge, concepts and theorems on viability theory and set-invariance control. 
	
	\begin{definition}
		(\textbf{Viability and viable set}, \cite{aubin2009viability}) Consider a control system described by a differential equation $\dot x(t) = f(x(t))$ with the state $x \in \mathbb{R}^n$. A subset $\mathcal{F}$ of $\mathbb{R}^n$ enjoys the viability property for the system $\dot x(t) = f(x(t))$  if for every initial state $x(0) \in \mathcal{F}$, there exists   at least one solution to the system
		starting at $x(0)$ which is viable in the time interval $[0, \bar t\,)$ (where $\bar t >0$ denotes the longest time instant of extending solutions) in the sense that $$\forall t \in [0, \bar t \,), x(t) \in \mathcal{F}.$$
	\end{definition}
	
	% We assume the solution of the differential system $\dot x(t) = f(x(t), u(t))$, modeling vehicle control systems under constraints, is well defined. %Generalizations of the viability theory are also possible, by using the set-valued analysis and differential inclusion \cite{aubin2009set}. 
	% When a differential equation involves discontinuous right-hand side (e.g., switching controls), we understand its solutions in the sense of Filippov \cite{cortes2008discontinuous}. 
	
	Now define a distance function for a point $y$ to a closed set $\mathcal{F}$ as $d_\mathcal{F}(y)=:  \inf\limits_{z \in \mathcal{F}} \|y-z\|$ (where $\|\cdot\|$ denotes a relevant norm). We recall the notion of a  contingent cone \cite{aubin2009viability} as follows. 
	\begin{definition} \label{def:contingent_cone}
		(\textbf{Contingent cone}) Let $\mathcal{F}$ be a closed subset of $\mathcal{X} \subset \mathbb{R}^n$ and $x$ belongs to $\mathcal{F}$. The
		contingent cone to $\mathcal{F}$ at $x$ is the set
		\begin{align}
			T_\mathcal{F}(x) = \left\{v \in \mathcal{X} |\,\,\,\,  \liminf\limits_{h \rightarrow 0^+} \frac{d_\mathcal{F}(x +hv)}{h} =0  \right\}. %_{h \right 0^+}  % \frac{d_K(x +hv)}{h} =0 
		\end{align}
	\end{definition}
	It has been shown in \cite{blanchini1999set} that though the distance function $d_\mathcal{F}(y)$ depends on
	the considered norm, the set $T_\mathcal{F}(x)$ does not. Furthermore, the set $T_\mathcal{F}(x)$ is non-trivial (i.e., not identical to $\mathbb{R}^n$) only on the boundary of $\mathcal{F}$. 
	
	A key result in the set-invariance analysis, the celebrated Nagumo theorem, is stated as follows.
	\begin{theorem}  \label{Theorem_Nagumo}
		(\textbf{Nagumo theorem}, \cite{blanchini2008set}) Consider the system
		$\dot x (t) = f (x(t))$, and assume that, for each initial condition in
		a set $\mathcal{X} \subset \mathbb{R}^n$, it admits a globally unique solution. Let $\mathcal{F} \subset \mathcal{X}$ be
		a closed   set. Then the set $\mathcal{F}$ is positively
		invariant for the system if and only if
		\begin{align} \label{eq:Nagumo1}
			f (x(t)) \in T_\mathcal{F}(x), \,\,\, \forall x \in \mathcal{F}, 
		\end{align}
		where $T_\mathcal{F}(x)$ denotes the \textit{contingent cone} of $\mathcal{F}$ at $x$. 
	\end{theorem}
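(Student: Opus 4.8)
The plan is to prove the two implications of the equivalence separately. The ``only if'' direction I would obtain from the first‑order expansion of the flow; the ``if'' direction is the substantive one, and for it I would estimate the distance of the trajectory to $\mathcal{F}$ and close the argument with Gr\"onwall's inequality.

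\emph{Necessity.} Suppose $\mathcal{F}$ is positively invariant and fix $x\in\mathcal{F}$; let $x(\cdot)$ be the solution with $x(0)=x$, so that $x(h)\in\mathcal{F}$ for all small $h>0$. Continuity of $s\mapsto f(x(s))$ gives $x(h)=x+hf(x)+o(h)$, hence $d_\mathcal{F}(x+hf(x))\le\|x+hf(x)-x(h)\|=o(h)$, and therefore $\liminf_{h\to0^+}d_\mathcal{F}(x+hf(x))/h=0$, i.e.\ $f(x)\in T_\mathcal{F}(x)$ by Definition~\ref{def:contingent_cone}.

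\emph{Sufficiency.} Assume $f(x)\in T_\mathcal{F}(x)$ for all $x\in\mathcal{F}$; I would use in addition that $f$ is locally Lipschitz, the standing condition under which the hypothesized uniqueness of solutions holds. Fix $x_0\in\mathcal{F}$, let $x(\cdot)$ be the solution on its maximal existence interval, fix a compact subinterval $[0,T]$, and let $L$ and $M$ be, respectively, a Lipschitz constant of $f$ and a bound for $\|f\|$ on a compact neighbourhood of $x([0,T])$. Set $w(t):=d_\mathcal{F}(x(t))$. The argument then has three steps. \textbf{(i)} From $|w(t)-w(s)|\le\|x(t)-x(s)\|\le M|t-s|$, the map $w$ is Lipschitz, hence absolutely continuous and differentiable almost everywhere, and $w(t)=\int_0^t w'(s)\,ds$ since $w(0)=0$. \textbf{(ii)} At any $t$ where $w'(t)$ exists I would show $w'(t)\le L\,w(t)$: choosing a nearest point $z\in\mathcal{F}$ to $x(t)$, so that $\|x(t)-z\|=w(t)$, the hypothesis $f(z)\in T_\mathcal{F}(z)$ furnishes $h_k\downarrow0$ and $z_k\in\mathcal{F}$ with $\|z+h_kf(z)-z_k\|=o(h_k)$; since $x(t+h_k)=x(t)+h_kf(x(t))+r_k$ with $\|r_k\|\le\tfrac12LMh_k^2$, the triangle inequality yields
\begin{align*}
 w(t+h_k) &\le \|x(t+h_k)-z_k\| \\
 &\le \|x(t)-z\|+h_k\|f(x(t))-f(z)\|+\|z+h_kf(z)-z_k\|+\|r_k\| \\
 &\le w(t)+h_kL\,w(t)+o(h_k),
\end{align*}
so $(w(t+h_k)-w(t))/h_k\le L\,w(t)+o(1)$, and since $w$ is differentiable at $t$ the right-hand difference quotient has limit $w'(t)$, whence $w'(t)\le L\,w(t)$. \textbf{(iii)} Combining (i) and (ii), $w\ge0$, $w(0)=0$ and $w(t)\le L\int_0^t w(s)\,ds$, so Gr\"onwall's inequality forces $w\equiv0$ on $[0,T]$, i.e.\ $x(t)\in\mathcal{F}$ for all $t\in[0,T]$; since $[0,T]$ was an arbitrary compact subinterval of the maximal existence interval, $\mathcal{F}$ is positively invariant.

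The hard part is that the contingent-cone condition is only a $\liminf$ statement: at the projected point it supplies a \emph{single} sequence $h_k\downarrow0$ along which the trajectory returns $o(h_k)$-close to $\mathcal{F}$, which on its own says nothing about $d_\mathcal{F}(x(t))$ at a generic time. The device that overcomes this is to exploit the Lipschitz, hence almost-everywhere differentiable, regularity of $w=d_\mathcal{F}\circ x$: at a point of differentiability the one-sided limit of the difference quotient genuinely exists, so a bound obtained along one sequence automatically upgrades to a bound on $w'(t)$, and almost-everywhere differentiability is exactly what the Gr\"onwall estimate needs. A last point to be careful about is that the nearest-point assignment $t\mapsto z$ may be neither unique nor continuous, but the estimate in step (ii) invokes it only at the fixed instant $t$, so no regularity of that assignment is required.
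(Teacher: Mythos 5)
The paper does not prove Theorem~\ref{Theorem_Nagumo} at all: it is quoted verbatim from the set-invariance literature (\cite{blanchini2008set}) as background, so there is no in-paper argument to compare against. Your proof is the standard textbook argument for Nagumo's theorem and is essentially sound. The necessity half is exactly right: positive invariance plus the first-order expansion $x(h)=x+hf(x)+o(h)$ immediately yields $d_\mathcal{F}(x+hf(x))=o(h)$, hence membership in $T_\mathcal{F}(x)$. The sufficiency half correctly identifies and resolves the real difficulty --- that the contingent-cone hypothesis only supplies information along one sequence $h_k\downarrow 0$ at the \emph{projected} point $z$, not at $x(t)$ itself --- by transferring the estimate through the Lipschitz constant of $f$ and exploiting the a.e.\ differentiability of the Lipschitz function $w=d_\mathcal{F}\circ x$ to upgrade a one-subsequence bound to a bound on $w'(t)$, after which Gr\"onwall closes the argument. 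This is precisely the mechanism used in the classical proofs (Aubin; Blanchini--Miani), and your closing discussion of why the $\liminf$ is enough shows you understand where the subtlety lies.

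Two points deserve attention. First, a small localization gap in step (ii): you take $L$ to be a Lipschitz constant of $f$ on a compact neighbourhood of $x([0,T])$, but the nearest point $z$ to $x(t)$ lies at distance $w(t)$ from the trajectory, and nothing yet guarantees $z$ belongs to that neighbourhood when $w(t)$ is not small --- which is circular, since smallness of $w$ is what you are trying to prove. The routine fix is to note that $w(0)=0$ and $w$ is continuous, run the Gr\"onwall estimate only on the interval where $w\le\delta$ (with $\delta$ the thickness of the neighbourhood), conclude $w\equiv 0$ there, and extend by a supremum/connectedness argument; you should make this explicit. Second, your added hypothesis that $f$ is locally Lipschitz is stronger than the stated assumption of globally unique solutions (uniqueness can hold under weaker Osgood-type conditions), so strictly speaking you prove the theorem in the Lipschitz setting of the cited source rather than under the literal hypotheses of the statement; this is the standard compromise and worth a remark, not a repair.
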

	
	% Generalizations of the Nagumo theorem and viability theory are also possible, by using the set-valued analysis and differential inclusion \cite{aubin2009set}.
	%For the definition of contingent cone, see \cite{aubin2009viability}. 
	
	If $x$ is an interior point in the set $\mathcal{F}$, then $T_\mathcal{F}(x) = \mathbb{R}^n$. Therefore, the condition in Theorem \ref{Theorem_Nagumo} is only meaningful when $x \in \text{bnd}(\mathcal{F})$, where $\text{bnd}(\mathcal{F})$ denotes the boundary of $\mathcal{F}$. Thus, the condition in \eqref{eq:Nagumo1} can be equivalently stated  as
	\begin{align}
		f (x(t)) \in T_\mathcal{F}(x), \,\,\, \forall x \in \text{bnd}(\mathcal{F}). 
	\end{align}
	The above condition clearly has an intuitive and geometric interpretation: if at $x \in \text{bnd}(\mathcal{F})$, the derivative
	$\dot x   = f(x(t))$ points inside or  is tangent to $\mathcal{F}$, then the
	trajectory $x(t)$ remains in $\mathcal{F}$.

	Consider a viable set $\mathcal{F}$ parameterized by a finite set of inequalities of the form 
	\begin{align} \label{eq:set_K}
		\mathcal{F} = \{x \in \mathcal{X} \,\,| \,\,g_i(x) \leq 0, i = 1, \cdots, m \},
	\end{align}
	where $g_i(x): \mathbb{R}^n \rightarrow \mathbb{R}$ are continuously differentiable functions \footnote{The condition that the $g_i(x)$ are continuously differentiable functions can be relaxed to requiring that the  $g_i(x)$ are Fr{\'e}chet differentiable at the point $x$ \cite{di1994contingent}. } defined in the set $\mathcal{X} \subset \mathbb{R}^n$.  Denote by $\chi(x)$ the set of \textit{active} constraints: $\chi(x) = \{i = 1, \cdots m \,\,| \,\,g_i(x) = 0\}$. Then the calculation of the contingent cone is simplified as $T_\mathcal{F}(x) = \mathbb{R}^n$ for $\chi(x) = \emptyset$ (i.e., at the interior point $x$ of the set $\mathcal{F}$), and 
	\begin{align} \label{viable_condition_inequality}
		T_\mathcal{F}(x) = \{ v \in \mathcal{X} | \left \langle v, \nabla g_i(x)\right \rangle \leq 0, \forall i \in \chi(x) \}.
	\end{align}
	For the   set  $\mathcal{F}$ parameterized as in \eqref{eq:set_K}, a consequence of Nagumo theorem is the following lemma for a controlled-invariant set. 
	\begin{lemma} \label{lemma:invariant_set_control}
		(\textbf{Set-invariance in control}, \cite{blanchini1999set,blanchini2008set})  Consider a set $\mathcal{F}$ parameterized as in \eqref{eq:set_K}. Then the set $\mathcal{F}$ is positively invariant under the dynamic control system $\dot x(t) = f(x(t), u(t))$ if $\dot x(t) \in \mathbb{R}^n$ for $\chi(x) = \emptyset$, and $\dot x(t) \in T_\mathcal{F}(x)$ with the contingent cone $T_\mathcal{F}(x)$ of \eqref{viable_condition_inequality}, or equivalently 
		\begin{align}
			\left< \nabla g_i(x),  f(x(t), u(t))  \right> \leq 0, \,\,\, \forall i \in \chi(x).
		\end{align}
	\end{lemma}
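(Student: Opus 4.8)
The plan is to derive Lemma \ref{lemma:invariant_set_control} as a direct corollary of the Nagumo theorem (Theorem \ref{Theorem_Nagumo}) together with the explicit computation of the contingent cone in \eqref{viable_condition_inequality}. First I would observe that under a fixed (measurable/feedback) control input $u(t)$, the closed-loop system $\dot x(t) = f(x(t), u(t))$ becomes a differential equation of the form $\dot x = \tilde f(x)$ (or, more carefully, $\dot x = \tilde f(x,t)$ if one wants to allow time-varying feedback, in which case one passes to the autonomous augmented system on $\mathbb{R}^n \times \mathbb{R}$); assuming existence and uniqueness of solutions for each initial condition in $\mathcal{X}$, Theorem \ref{Theorem_Nagumo} applies verbatim and tells us that $\mathcal{F}$ is positively invariant if and only if $\tilde f(x) \in T_\mathcal{F}(x)$ for all $x \in \mathcal{F}$.

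Next I would split the condition $\tilde f(x) \in T_\mathcal{F}(x)$ according to whether $x$ is interior to $\mathcal{F}$ or on its boundary. For $x$ with $\chi(x) = \emptyset$ — equivalently, an interior point of $\mathcal{F}$ — we already noted $T_\mathcal{F}(x) = \mathbb{R}^n$, so the inclusion is automatic and imposes no constraint; this is the ``$\dot x(t) \in \mathbb{R}^n$ for $\chi(x) = \emptyset$'' clause in the statement. For $x$ with $\chi(x) \neq \emptyset$ (necessarily a boundary point), I would substitute the explicit description \eqref{viable_condition_inequality} of the contingent cone, namely $T_\mathcal{F}(x) = \{v \in \mathcal{X} \mid \langle v, \nabla g_i(x)\rangle \leq 0, \ \forall i \in \chi(x)\}$. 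Taking $v = f(x(t), u(t))$, the inclusion $f(x(t),u(t)) \in T_\mathcal{F}(x)$ is then literally the system of scalar inequalities $\langle \nabla g_i(x), f(x(t),u(t)) \rangle \leq 0$ for all $i \in \chi(x)$, which is exactly the displayed conclusion of the lemma. I would note that the lemma as stated only claims the ``if'' direction (sufficiency), so one does not even need the full force of the equivalence in Nagumo's theorem — the forward implication suffices — though the converse also holds along the same lines.

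The one technical point that needs care, and which I expect to be the main obstacle, is the justification of the explicit formula \eqref{viable_condition_inequality} for the contingent cone of a set cut out by smooth inequalities. This is where a constraint qualification implicitly enters: in general $T_\mathcal{F}(x)$ is only contained in the set on the right-hand side of \eqref{viable_condition_inequality}, and equality requires a regularity hypothesis such as the Mangasarian--Fromovitz constraint qualification (or the linear independence of the active gradients $\{\nabla g_i(x) : i \in \chi(x)\}$, consistent with the paper's standing assumption that distributions and codistributions have constant full rank). Since \eqref{viable_condition_inequality} is asserted as already established in the text preceding the lemma (with the differentiability relaxation to Fr\'echet differentiability noted in the footnote, citing \cite{di1994contingent}), I would simply invoke it; a careful write-up would flag that this is the step carrying the regularity assumption, and that the functions $g_i$ being continuously differentiable on $\mathcal{X}$ together with the regular-point hypothesis is what makes the contingent cone computable in closed form. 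With \eqref{viable_condition_inequality} in hand, the remainder of the proof is the essentially bookkeeping substitution described above, so the lemma follows immediately.
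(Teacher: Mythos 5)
Your proposal is correct and follows exactly the route the paper intends: the paper offers no written proof of Lemma~\ref{lemma:invariant_set_control}, presenting it as a cited result that is ``a consequence of Nagumo theorem'' via the explicit contingent-cone formula \eqref{viable_condition_inequality}, which is precisely the Nagumo-plus-substitution argument you give. Your additional remark about the constraint qualification needed for \eqref{viable_condition_inequality} to hold with equality is a valid refinement of a point the paper leaves implicit, but it does not change the approach.
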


	\subsection{Problem formulation}
	
	Consider a group of $n$ vehicles, whose motion equations are described by the control-affine systems \eqref{eq:system_drift} with possibly different kinematics and/or drift terms. We assign the vehicle group with a coordination task, described by inter-vehicle geometric equality or inequality constraints that incorporate formation, flocking or other cooperative tasks. The four key problems on cooperative constrained motion coordination to be addressed in this paper are the following:
	\begin{itemize}
		\item (\textit{Mixed kinematics, equality and inequality constraints}) Model networked homogeneous or heterogeneous vehicle coordination tasks with both equality and inequality constraints;
		\item  (\textit{Coordination feasibility}) Determine whether a group of   heterogeneous vehicles can perform a coordination task that satisfies various motion constraints and coordination function constraints;
		\item (\textit{Constrained motion coordination}) If the coordination task with various constraints is feasible, determine feasible motions that generate trajectories for an $n$-vehicle group to realize a cooperative coordination  task; 
		\item (\textit{Time-varying coordination tasks}) Determine whether a group of   heterogeneous vehicles can perform a coordination task with various time-varying constraints; and if feasible, determine coordinated motions for all vehicles. 
	\end{itemize}

	\section{Formulation and transformation of time-invariant coordination constraints} \label{sec:constraints_formulation}
	\subsection{Motion constraints arising from   vehicle kinematics: affine (co)-distribution formulation} \label{sec:constraint_kinematics}
	In this subsection we follow the techniques in \cite{murray1994mathematical,Sun2016feasibility} to formulate each vehicle's kinematic motion constraints using (affine) distributions and codistributions. 
	A vehicle's motion equation modelled by a  nonlinear control-affine system \eqref{eq:system_drift} with drifts can be equivalently described by the following \textit{affine} distribution
	\begin{align} \label{eq:distribution}
		\Delta_i(p_i)  = f_{i,0}(p_i) + \text{span}\{f_{i,1}(p_i), f_{i,2}(p_i), \cdots, f_{i,l_i}(p_i)\}.
	\end{align}
	For the system \eqref{eq:system_drift} with drifts that is equivalently described by \eqref{eq:distribution}, one can obtain a corresponding transformation with equivalent motion  constraints via the construction of  a spanning set of covectors of the codistribution
	\begin{align} \label{eq:drift_equi}
		\omega_{i, j}(p_i) \dot p_i = q_{i, j}(p_i),  \quad j =1, \cdots, n_i - l_i,
	\end{align}
	where  the term $q_{i, j}$ is due to the existence of the drift term $f_{i,0}$.  We collect all the \textit{row covectors} $\omega_{i, j}$ to construct the codistribution matrix 
	\begin{align}
		\Omega_{K_i}(p_i) = [\omega_{i,1}^{\top}, \; \omega_{i,2}^{\top},\; \cdots,\; \omega_{i,n_i - l_i}^{\top}]^{\top},
	\end{align}
	and similarly define 
	\begin{align}
		T_{K_i}(p_i) = [q_{i,1}, q_{i,2}, \cdots, q_{i,n_i - l_i}]^{\top}.
	\end{align}
	Then one can rewrite \eqref{eq:drift_equi} in a compact form using affine codistributions as follows
	\begin{align} \label{eq:dynamics_constraint_transformation}
		\Omega_{K_i}(p_i)  \dot p_i = T_{K_i}(p_i),
	\end{align}
	where the subscript ``$K$'' stands for \emph{kinematics}.
	Furthermore, we collect all the kinematic motion constraints for all the $n$ vehicles in a composite form
	\begin{subequations}
		\begin{align}
			\Omega_K &= [\Omega_{K_1}^{\top}, \Omega_{K_2}^{\top}, \cdots, \Omega_{K_n}^{\top}]^{\top},  \\
			T_K &= [T_{K_1}^{\top}, T_{K_2}^{\top}, \cdots, T_{K_n}^{\top}]^{\top}.
		\end{align}
	\end{subequations}
	For ease of notation, we denote all of the vehicles' states    by the composite state vector $P = [p_1^{\top}, p_2^{\top}, \cdots, p_n^{\top}]^{\top}$. Thus, the overall kinematic motion constraint for all the vehicles can equivalent be stated compactly via its codistribution as $\Omega_K(P) \cdot \dot P = T_K(P)$.
	
	\begin{remark}
		The kinematic motion equation of the drift-free vehicle 
		\begin{align} \label{eq:system_drift_free}
			\dot p_i = \sum_{j=1}^{l_i} f_{i,j}(p_i) u_{i,j},
		\end{align}
		can be described in an equivalent form by the codistribution 
		\begin{align} \label{eq:drift_free_equi}
			\omega_{i, j}(p_i) \dot p_i = 0,  \,\,\,j =1, \cdots, n_i - l_i,
		\end{align}
		i.e.,  the term  $q_{i, j}$ in \eqref{eq:drift_equi} becomes zero. The above transformation is based on the idea that  a distribution generated by vector fields $\{f_{i,1}, \cdots, f_{i,l_i}\}$ of a nonlinear control system can be equivalently formulated by its  annihilating codistribution \cite{nijmeijer1990nonlinear}. Note that each $\omega_{i, j}(p_i)$ in \eqref{eq:drift_free_equi} is a \emph{row} covector in the dual space $(\mathbb{R}^{n_i})^\star$.
	\end{remark}
	
	\subsection{Motion constraints arising from time-invariant coordination} \label{sec:coordination_constraints}
	In this subsection we formulate motion constraints from coordination tasks using distributions/codistributions. We consider two types of constraints, \textit{equality} constraints and \textit{inequality} constraints, which both  involve inter-vehicle geometric relationships, in modelling a general form of coordination tasks. 
	\subsubsection{Networked vehicle interaction as an undirected graph}
	We assume a networked multi-vehicle control system modelled by an undirected graph $\mathcal{G}$, in which we use   $\mathcal{V}$ to denote its vertex set and $\mathcal{E}$ to denote the edge set.  The vertices consist of   $n$ homogeneous or heterogeneous vehicles, each modelled by the general dynamical equation \eqref{eq:system_drift} with possibly different motion constraints. The graph consists of  $m$ edges, each associated with one or multiple inter-vehicle constraints describing a coordination task.
	
	\subsubsection{Coordination with equality constraints}

	A family of equality constraints $\Phi$ is indexed by the edge set, denoted as $\Phi_\mathcal{E} = \{\Phi_{ij}(p_i, p_j)\}_{(i,j)}$ with  $ {(i,j)} \in \mathcal{E}$. For each edge $(i,j)$, $\Phi_{ij}(p_i, p_j)$ is a continuously differentiable vector function of the states $p_i$ and $p_j$  defining the coordination constraints between the vehicle pair $i$ and $j$.
	%; that is, $\Phi_{ij}(p_i, p_j) = [\Phi_{ij,1}(p_i, p_j), \cdots, \Phi_{ij,\phi_{ij}}(p_i, p_j)]^\top$ where $\phi_{ij}$ denotes the number equality constraints associated with edge $(i,j)$. 
	The constraint for edge $(i,j)$ is enforced if $\Phi_{ij} (p_i, p_j) = 0$.  Such equality constraints can be used to describe general coordinate control problems, such as formation shape control, distance maintenance, tracking and coverage control. For example, in formation shape control, the constraint vector function $\Phi_{ij}$ can be geometric functions of desired relative positions, or desired bearings, or desired distances between vehicles $i$ and $j$ in a target formation (see e.g., \cite{oh2015survey}). To satisfy the equality constraint for edge $(i,j)$, it should hold that
	\begin{align} \label{eq:constraints_ij}
		\frac{\text{d}}{\text{d}t} \Phi_{ij}(p_i, p_j) = \frac{\partial \Phi_{ij}}{\partial p_i} \dot p_i + \frac{\partial \Phi_{ij}}{\partial p_j} \dot p_j   = 0.
	\end{align}
	
	We collect the equality constraints for all the edges and define an overall  constraint denoted by $\Phi_\mathcal{E}(P)  = [\cdots, \Phi_{ij}^{\top}, \cdots]^{\top} = \bf 0$. A coordination task is  maintained if $\Phi_\mathcal{E} (P) = 0$ is enforced for all the edges.  Coordination feasibility with equality constraints means that the constraints are strictly satisfied along the trajectories of all vehicles in time, which implies the equivalent differential condition 
	\begin{align} \label{eq:constraints_all_1}
		\frac{\text{d}}{\text{d}t} \Phi(P) =\frac{\partial \Phi}{\partial P} \dot P   = 0,
	\end{align}
	
	By identifying a codistribution matrix $\Omega_E(P)$  associated with the Jacobian $\partial \Phi/\partial P$ in the nominal dual coordinate bases $[\text{d}P]$, 
	%\begin{align}
	%T_F = -\left[
	%\begin{array}{c}
	%\frac{\partial C_1}{\partial t}\\
	%\frac{\partial C_2}{\partial t}\\
	%\vdots \\
	%\frac{\partial C_n}{\partial t}\\
	%\end{array}
	%\right]
	%\end{align}
	we   can reexpress Eq.~\eqref{eq:constraints_all_1} as
	\begin{equation}\label{eq:equality_constraint_overall1}
		\Omega_E(P) \dot P = 0.
	\end{equation}
	where the subscript ``$E$'' stands for \emph{equality} constraints.  Thus, the vector field $\dot P$ constrained by Eq. \eqref{eq:equality_constraint_overall1} represents possible motions for all the vehicles that respect the coordination equality constraint.
	
	\subsubsection{Coordination with inequality constraints}
	
	Now we consider a feasible coordination problem involving \textit{inequality} constraints. A family of inequality coordination constraints $\mathcal{I}_\mathcal{E} = \{\mathcal{I}_{ij}(p_i, p_j)\}_{(i,j)}$ is indexed by the edge set $\mathcal{E}$, and  each edge $(i, j)$ is associated with a vector function $\mathcal{I}_{ij}(p_i, p_j)$ which is assumed continuously differentiable. The constraints for the edge $(i, j)$ are enforced if $\mathcal{I}_{ij}(p_i, p_j) \leq 0,\;\;\forall t>0$. Now we consider the subset of \textit{active} constraints among all the edges
	\begin{align}
		\chi(P) = \{(i, j), \;i, j = 1,2, \cdots, n\; |\; \mathcal{I}_{ij}(p_i, p_j) = 0\}.
	\end{align}
	We remark that the set $\chi(P)$ is a dynamic set along time, which only involves the edge set with active constraints when the condition $\mathcal{I}_{ij}(p_i, p_j) \leq 0$ is about to be violated.   An inequality constraint for edge $(i,j)$ is maintained then if 
	\begin{align}
		\frac{\text{d}}{\text{d}t} \mathcal{I}_{ij} (p_i, p_j)= \frac{\partial \mathcal{I}_{ij}}{\partial p_i} \dot p_i + \frac{\partial \mathcal{I}_{ij}}{\partial p_j} \dot p_j \leq  0, \,\,\,\forall\,(i, j) \in \chi(P).
	\end{align}
	The active constraints in the edge set $\chi(P)$ generate a codistribution
	\begin{align}
		\Omega_I(P) = [\cdots, \Omega_{I, ij}^\top, \cdots]^\top, \,\,\,\forall (i, j) \in \chi(P),
	\end{align}
	where the   subscript ``$I$'' stands for \emph{inequality} constraints, and   $\Omega_{I, ij}(p_i, p_j)$ is obtained by the Jacobian of the vector function $\mathcal{I}_{ij}$ using the nominal coordinate bases $[\text{d}p_i, \text{d}p_j]$ associated with the active constraint $\mathcal{I}_{ij}(p_i, p_j) = 0$. 
	%: $\Omega_{I, ij}^\top  = \{\partial I_{ij}(p_i, p_j)\}$ with $\{\partial I_{ij}(p_i, p_j) (\dot p_i, \dot p_j)  = 0\}$. In a compact form, there holds $\Omega_I \dot P = 0$, $\forall (v_i, v_j) \in \chi(P)$. 
	Based on the Nagumo theorem and Lemma~\ref{lemma:invariant_set_control},  to guarantee the validity of the inequality constraints, the control input $u(t) = [u_1(t)^\top, \cdots, u_n(t)^\top]^\top$ for each vehicle should be designed such that $\Omega_I(P) \dot P(P, u(t)) \leq 0$, $\forall (i, j) \in \chi(P)$. 
	
	\begin{remark}
		The expressions of the codistributions  $\Omega_E$ for equality constraints  and $\Omega_I$ of active inequality constraints are both coordinate-free and   independent of the enumeration of edge sets. However, one can always choose the nominal coordinate bases $[\text{d}P]$ to represent the codistributions $\Omega_E$, $\Omega_K$ and $\Omega_I$ in a matrix form. 
	\end{remark}
	
	\section{Coordination feasibility and motion generation with time-invariant constraints} \label{sec:coordination_feasibility}
	\subsection{Coordination feasibility of  multiple vehicles with inequality task constraints}
	
	We now state the following theorem on a feasible coordination for an $n$-vehicle group with kinematic constraints and inequality constraints in a coordination task. 
	\begin{theorem}\label{thm:feasible_equality}
		The coordination task for multi-vehicle systems described by \eqref{eq:system_drift} and with inequality coordination constraints has feasible motions if and only if the following mixed differential-algebraic equations with inequalities have solutions
		\begin{subequations}
			\begin{align}
				\Omega_K(P) \dot P &= T_K(P),  \label{eq:theorem1_motion_kine} \\  
				\Omega_I(P)\dot P  & \leq  0,\,\,\, \forall (i, j) \in \chi(P), \label{eq:theorem1_motion_coordination}
			\end{align}
		\end{subequations}
		where $\chi(P)$ denotes the set of active constraints among all the edges. 
	\end{theorem}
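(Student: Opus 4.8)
The plan is to establish the two implications separately, using the codistribution construction of Section~\ref{sec:constraint_kinematics} to handle the kinematic part and the Nagumo theorem (Theorem~\ref{Theorem_Nagumo}) together with Lemma~\ref{lemma:invariant_set_control} to handle the inequality part. The unifying observation is that a coordinated motion is feasible precisely when there exists an absolutely continuous trajectory $P(t)$ whose velocity $\dot P(t)$ at almost every time lies in the affine distribution $\Delta_K(P)$ generated by the vehicles' vector fields \emph{and} points into (or tangent to) the admissible set $\mathcal{F}=\{P\,|\,\mathcal{I}_{ij}(p_i,p_j)\le 0,\ \forall(i,j)\}$ whenever a constraint is active.

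\emph{Necessity.} Suppose a feasible coordinated motion exists, i.e.\ there are control inputs $u_{i,j}(t)$ and a resulting trajectory $P(t)$ that satisfies every vehicle's kinematic equation~\eqref{eq:system_drift} and keeps $\mathcal{I}_{ij}(p_i,p_j)\le 0$ for all edges and all $t\ge 0$. By construction of the annihilating codistribution in~\eqref{eq:drift_equi}--\eqref{eq:dynamics_constraint_transformation}, the covectors $\omega_{i,j}$ span $\Delta_i^\perp$ and $q_{i,j}$ accounts for the drift, so $\dot p_i\in\Delta_i(p_i)$ is equivalent to $\Omega_{K_i}(p_i)\dot p_i=T_{K_i}(p_i)$; stacking over $i$ yields~\eqref{eq:theorem1_motion_kine}. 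For the inequalities, fix $t$ and an index $(i,j)\in\chi(P(t))$, so $\mathcal{I}_{ij}(p_i(t),p_j(t))=0$ while $\mathcal{I}_{ij}\le 0$ for times slightly before and after; differentiability of $\mathcal{I}_{ij}$ then forces $\tfrac{d}{dt}\mathcal{I}_{ij}\le 0$ at $t$, which is exactly $\Omega_{I,ij}(P)\dot P\le 0$. Hence $P(t)$ solves~\eqref{eq:theorem1_motion_kine}--\eqref{eq:theorem1_motion_coordination}.

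\emph{Sufficiency.} Conversely, suppose $P(t)$ solves the mixed system~\eqref{eq:theorem1_motion_kine}--\eqref{eq:theorem1_motion_coordination} on $[0,\bar t)$. Because $\Omega_{K_i}$ spans the annihilator $\Delta_i^\perp$ at the (regular) point $p_i$, the constraint~\eqref{eq:theorem1_motion_kine} is equivalent to $\dot p_i(t)-f_{i,0}(p_i(t))\in\operatorname{span}\{f_{i,1},\dots,f_{i,l_i}\}$, so there exist scalar inputs $u_{i,j}(t)$ (unique once a basis is fixed) such that~\eqref{eq:system_drift} holds; this recovers a dynamically admissible motion for each vehicle. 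For coordination feasibility we must verify $P(t)\in\mathcal{F}$ for all $t$: apply Lemma~\ref{lemma:invariant_set_control} to the closed-loop system with the $g_i$ there taken as the component functions of the $\mathcal{I}_{ij}$. The hypothesis $\Omega_I(P)\dot P\le 0$ for all active indices is precisely $\langle\nabla\mathcal{I}_{ij}(P),\dot P\rangle\le 0$, $\forall(i,j)\in\chi(P)$, i.e.\ $\dot P\in T_\mathcal{F}(P)$ via~\eqref{viable_condition_inequality}; at non-active points $T_\mathcal{F}(P)=\mathbb{R}^{N}$ and nothing is required. Nagumo's theorem then gives positive invariance of $\mathcal{F}$, so $\mathcal{I}_{ij}\le 0$ is maintained along $P(t)$, and the coordinated motion is feasible.

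\emph{Main obstacle.} The delicate step is the sufficiency argument for the inequality part: one must ensure that the solution $P(t)$ of the differential-algebraic inclusion is regular enough (absolutely continuous with the velocity lying in the contingent cone a.e.) for the Nagumo/set-invariance machinery to apply, and that the dynamic active set $\chi(P)$ is correctly handled as constraints switch between active and inactive — precisely where the boundary condition of Theorem~\ref{Theorem_Nagumo} bites. Under the standing regularity assumptions on the distributions (constant full rank, regular points) and the $C^1$ assumption on the $\mathcal{I}_{ij}$, these conditions are met, but they are the crux and should be spelled out rather than taken for granted.
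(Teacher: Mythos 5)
Your proof is correct and follows essentially the same route as the paper's: the kinematic equality \eqref{eq:theorem1_motion_kine} is justified by the annihilating-codistribution equivalence of Section~\ref{sec:constraint_kinematics}, and the inequality \eqref{eq:theorem1_motion_coordination} by the Nagumo Theorem~\ref{Theorem_Nagumo} together with Lemma~\ref{lemma:invariant_set_control}. The paper states this in two sentences, whereas you usefully spell out both implications and flag the regularity caveats (recoverability of the inputs $u_{i,j}$ from $\dot P$ at regular points, and the handling of the switching active set $\chi(P)$) that the published proof leaves implicit.
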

	
	\begin{proof}
		The transformation of the kinematic motion constraints in \eqref{eq:theorem1_motion_kine} is equivalent   to the distributions and co-distribution description as detailed in Section~\ref{sec:constraint_kinematics}.  The motion inequalities in \eqref{eq:theorem1_motion_coordination} are both   necessary and sufficient to ensure all coordination inequality tasks are satisfied for the overall vehicle group due to the Nagumo  Theorem~\ref{Theorem_Nagumo} and the controlled-invariant set condition of Lemma \ref{lemma:invariant_set_control}. 
	\end{proof}

	\subsection{Coordination feasibility of multiple vehicles with mixed equality and inequality task   constraints}
	We now consider   coordination tasks with both equality and inequality constraints.  
	Together with the active inequality constraints, one can state the following theorem that determines coordination feasibility with various constraints. 
	\begin{theorem} \label{thm:equality_inequality_1}
		The coordination task for multi-vehicle systems described by \eqref{eq:system_drift} and with both equality and inequality constraints has feasible motions if and only if the following \textit{mixed} differential-algebraic equations and inequalities have solutions
		\begin{subequations}
			\begin{align}
				\Omega_K(P) \dot P &= T_K(P),  \label{eq:theorem2_motion_kine} \\
				\Omega_E(P) \dot P &= 0,  \label{eq:theorem2_motion_coordination_eq} \\
				\Omega_I(P) \dot P  & \leq  0,\,\,\, \forall (i, j) \in \chi(P), \label{eq:theorem2_motion_coordination_ineq}
			\end{align}
		\end{subequations}
		where $\chi(P)$ denotes the set of active constraints among all the edges. 
	\end{theorem}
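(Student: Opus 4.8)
The plan is to reduce the claim to \textbf{Theorem~\ref{thm:feasible_equality}} by noting that an equality constraint is exactly a two-sided inequality constraint, and then to reuse the Nagumo-based argument already established. Two ingredients are routine and I would state them first. The kinematic part of the equivalence, \eqref{eq:theorem2_motion_kine}, is just the affine (co)distribution reformulation of \eqref{eq:system_drift} from Section~\ref{sec:constraint_kinematics}: a vector $\dot P$ is realizable by \emph{some} admissible control $u(t)$ iff $\Omega_K(P)\dot P = T_K(P)$. For the equality part, maintaining $\Phi_\mathcal{E}(P)\equiv 0$ along a trajectory started from a point with $\Phi_\mathcal{E}(P(0))=0$ is, by the chain rule, equivalent to $\frac{\mathrm d}{\mathrm dt}\Phi_\mathcal{E}(P)=\frac{\partial\Phi}{\partial P}\dot P=0$ for all $t$, i.e.\ \eqref{eq:theorem2_motion_coordination_eq} once $\Omega_E(P)$ is identified with the Jacobian $\partial\Phi/\partial P$ in the bases $[\mathrm dP]$, exactly as in \eqref{eq:constraints_all_1}–\eqref{eq:equality_constraint_overall1}.

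Next I would make the reduction explicit. Write each equality constraint $\Phi_{ij}(p_i,p_j)=0$ as the pair $\Phi_{ij}\le 0$ and $-\Phi_{ij}\le 0$; on the feasible set these are \emph{always} active, so they contribute the rows $\pm\,\partial\Phi_{ij}/\partial(p_i,p_j)$ to the active-constraint codistribution. Appending these rows to $\Omega_I$ converts the mixed system \eqref{eq:theorem2_motion_kine}–\eqref{eq:theorem2_motion_coordination_ineq} into one of the form covered by \textbf{Theorem~\ref{thm:feasible_equality}}: the kinematic DAE \eqref{eq:theorem1_motion_kine} together with \eqref{eq:theorem1_motion_coordination}, whose augmented active block now contains $\langle\nabla\Phi_{ij},\dot P\rangle\le 0$ and $\langle-\nabla\Phi_{ij},\dot P\rangle\le 0$, i.e.\ $\langle\nabla\Phi_{ij},\dot P\rangle=0$. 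Applying \textbf{Theorem~\ref{thm:feasible_equality}} to this augmented constraint family gives both directions at once. Necessity: along any feasible trajectory the Nagumo condition \eqref{eq:Nagumo1} must hold on $\mathrm{bnd}(\mathcal F)$, which forces \eqref{eq:theorem2_motion_kine}–\eqref{eq:theorem2_motion_coordination_ineq} to admit the solution $\dot P(t)$. Sufficiency: given a solution of the DAE/inequality system and an initial state with $\Phi_\mathcal{E}(P(0))=0$ and $\mathcal{I}_\mathcal{E}(P(0))\le 0$, Lemma~\ref{lemma:invariant_set_control} (equivalently, the viability-theoretic existence of a viable selection) produces a coordinated motion respecting all kinematic, equality and inequality constraints.

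The step I expect to need the most care is the contingent-cone computation for the \emph{combined} feasible set $\mathcal F=\{P:\Phi_\mathcal{E}(P)=0\}\cap\{P:\mathcal{I}_\mathcal{E}(P)\le 0\}$. The simplified formula \eqref{viable_condition_inequality} for $T_\mathcal{F}(x)$ presupposes a constraint qualification, and encoding $\Phi_{ij}=0$ as the antiparallel pair $\pm\Phi_{ij}\le 0$ makes the Mangasarian–Fromovitz qualification fail outright. What rescues the argument is the paper's standing regularity hypothesis that the relevant (co)distributions have constant full column rank: this makes $\{\Phi_\mathcal{E}=0\}$ a smooth embedded submanifold with tangent space $\ker\Omega_E$, and — together with a transversality/compatibility condition between $\Omega_E$ and the active rows of $\Omega_I$ — yields $T_\mathcal{F}(P)=\ker\Omega_E(P)\cap\{v:\Omega_I(P)v\le 0\}$, so that Nagumo's condition factors precisely into \eqref{eq:theorem2_motion_coordination_eq}–\eqref{eq:theorem2_motion_coordination_ineq}. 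I would flag this qualification explicitly (noting it is inherited from the blanket assumptions of Section~\ref{Sec:preliminary}) before concluding.
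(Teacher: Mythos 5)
Your proposal is correct and follows essentially the same route as the paper: the kinematic part via the codistribution equivalence of Section~\ref{sec:constraint_kinematics}, the equality part as the degenerate (two-sided, always-active) case of the set-invariance Lemma~\ref{lemma:invariant_set_control}, and the inequality part via the Nagumo Theorem~\ref{Theorem_Nagumo} --- which is exactly the three-step sketch the paper gives. Your extra observation that the contingent-cone formula \eqref{viable_condition_inequality} requires a constraint qualification when equalities are encoded as antiparallel inequality pairs, rescued here by the standing full-rank assumptions on the codistributions so that $T_\mathcal{F}(P)=\ker\Omega_E(P)\cap\{v:\Omega_I(P)v\le 0\}$, is a point the paper's proof silently passes over.
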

	
	%\begin{proof}
	%See full proof in the Appendix~\ref{app:thmB}.
	%\end{proof}
	\begin{proof}
		The transformation of the kinematic motion constraints in \eqref{eq:theorem2_motion_kine} is equivalent   to the distributions and co-distribution description of the motion equation \eqref{eq:system_drift} as detailed in Section~\ref{sec:constraint_kinematics}. 
		The equality condition in \eqref{eq:theorem2_motion_coordination_eq} is due to the special case of equality constraint maintenance for the controlled-invariant set of Lemma \ref{lemma:invariant_set_control}. 
		The motion inequalities in \eqref{eq:theorem2_motion_coordination_ineq} are both   necessary and sufficient to ensure all coordination inequality tasks are satisfied for the overall vehicle group due to the Nagumo  Theorem~\ref{Theorem_Nagumo}.
	\end{proof}
	
	% The above theorem is a generalization of the main result of \cite{tabuada2005motion} which derived a feasibility condition for multi-vehicle formation with only equality constraints. 
	% Again, the expression of the codistribution $\Omega_E$ and $\Omega_I$ is coordinate-free and is also independent of the enumeration of edge sets. One can present them in a matrix form using the bases $[\text{d}P]$ of the dual space for the convenience of calculations. 
	
	\subsection{Generating vehicle's motion and trajectory for a feasible coordination} \label{sec:motion}
	The feasibility conditions presented in Theorems \ref{thm:feasible_equality} and~\ref{thm:equality_inequality_1} involve the determination of the existence of solutions for a  differential-algebraic equation (or a mixed inequality with motion equations). Solving these   equations with inequalities also leads to feasible motions that generate trajectories for each individual vehicle that meets both its own kinematic motion constraints and the inter-vehicle constraints for performing a coordination task. Generally speaking, when a solution exists that meets the differential-algebraic equations/inequalities, then such a solution is not unique. Any feasible trajectories can be generated by possible motions as described by the solutions of these equations/inequalities. 
	
	We remark that available approaches in numerical differential geometry and nonlinear control (see e.g., \cite{kwatny2000nonlinear}) are helpful and can be employed in solving these algebraic equations/inequalities. Furthermore, certain commercial software 
	(e.g., \textit{Matlab} or \textit{Mathematica}) has powerful toolboxes available that can perform symbolic computations if the number of symbolic variables is within a reasonable scale. 
	%They provide an alternative approach for solving the equations/inequalities in the theorems that generate admissible trajectories for a feasible coordination. 
	
	We propose Algorithm \ref{algorithm:feasibility_undirected} that presents a constructive approach to determine coordination feasibility and motion generation for the multi-vehicle coordination control under both equality and inequality constraints (Theorem \ref{thm:equality_inequality_1}). 
	
	% The procedure of feasible motion generation includes the following steps: 
	% \begin{enumerate}
	%     \item Solve the equalities in \eqref{eq:theorem2_motion_kine} and \eqref{eq:theorem2_motion_coordination_eq}
	%     \item Choose an appropriate set of  (virtual) inputs, and solve the inequality in \eqref{eq:theorem2_motion_coordination_ineq};
	%     \item Return a feasible motion equation under abstraction;
	%     \item Solve the actual input for each individual vehicle to generate feasible motions. 
	% \end{enumerate}

	\begin{algorithm}[t!]
		Initialization: $\Omega_{K_i}$, $T_{K_i}$, $\Omega_{E}$,   $\chi(P)$, $\Omega_I$;\;
		Construct the overall kinematic codistribution matrix~$\Omega_K$ and the vector $T_{K}$.\;
		
		\BlankLine
		\While{Running}{
			\BlankLine
			\textit{Solve equality} $ \left[ \begin{array}{c}
				\Omega_K\\
				\Omega_E   
			\end{array} \right]  \dot P=\left[ \begin{array}{c}
				T_K \\
				0   
			\end{array} \right]
			$\\
			\uIf{Solution does not exist}{
				Return: \textit{Coordination task is infeasible} (No solution);\;
				Condition checking STOP.\;
			}
			\Else{
				Calculate a special solution to the above equality constraint equation, denoted by $\bar K$;\;
				Determine $\kappa$ vectors that span $\text{Null}\left(\left[ \begin{array}{c}
					\Omega_K\\
					\Omega_E   
				\end{array} \right]\right)$, denoted by $K_1, K_2, \cdots, K_\kappa$.\;
			}
			\BlankLine
			\uIf{$\chi(P) = \emptyset$ (No active inequality constraint) }{
				Generate feasible motions $\dot P = \bar K + \sum_{l=1}^\kappa K_l  w_l$, where $w_l$ is a set of (constant or time-varying) virtual inputs that activate the associated vector field  $K_l$;\;
				Return: a set of feasible motions $\dot P = \bar K + \sum_{l=1}^\kappa K_l  w_l$ (according to different choices of $ w_l$).\;
			}
			\Else{
				Calculate and obtain a codistribution matrix $\Omega_I$ for active equality constraints, with $\forall (i, j) \in \chi(P)$;\;
				\For{$l=1, 2, \cdots, \kappa$}{
					Select a set of vectors $w_l$  \\
					\If{$\Omega_I (\bar K + \sum_{l=1}^\kappa K_l  w_l)    \leq  0$ for certain $ w_l$ }{ 
						Return: A feasible motion $\dot P =\bar K + \sum_{l=1}^\kappa K_l  w_l$. \;
					}
				}
				\If{$l > \kappa$}{Return: Feasible solution not found. Try again.}
			}
			
		}
		\caption{Coordination  feasibility checking and motion generation.}
		\label{algorithm:feasibility_undirected}
	\end{algorithm}
	
	Following Algorithm  \ref{algorithm:feasibility_undirected},  a  feasible motion for the overall heterogeneous networked system can be generated by the following equivalent dynamical system
	\begin{align} \label{eq:abstract_motion}
		\dot P = \bar K + \sum_{l=1}^\kappa K_l  w_l
	\end{align}
	where $w_l$ is an input that activates the associated vector field  $K_l$.
	When a feasible motion is determined with a set of virtual input $w_l$, the actual control input $u_i = [u_{i,1}, \cdots, u_{i,l_i}]^\top$ can be readily calculated via each vehicle's kinematic equations as follows. 
	We construct the compact form of the motion equations of all vehicles
	\begin{align}
		\dot P = \underbrace{\left[
			\begin{array}{c}
				f_{1,0}\\
				f_{2,0}\\
				\vdots \\
				f_{i,0} \\
				\vdots \\
				f_{n,0}
			\end{array}
			\right]}_{F_0} + \underbrace{
			\begin{bmatrix}
				F_{1} &  &  &  &  & \\ 
				& F_{2} &  &  &  & \\ 
				&  &  \ddots &  &  & \\ 
				&  &   & F_{i} &  & \\
				&  &   &   & \ddots &  & \\
				&  &   &   &   &  F_{n}
		\end{bmatrix}}_{F} \underbrace{
			\left[
			\begin{array}{c}
				u_{1}\\
				u_{2}\\
				\vdots \\
				u_{i} \\
				\vdots \\
				u_{n}
			\end{array}
			\right]}_{u}
	\end{align}
	where $F_i = [f_{i,1}, f_{i,2}, \cdots, f_{i,l_i}]$, and $F$ is a block diagonal matrix that collects all vector field matrix functions $F_i$ of each vehicle $i$. By the condition that the distribution matrix $F_i$ has a full column rank (i.e., the vehicle states $p_i$ for the spanned distribution matrix $F_i(p_i)$ are regular points), the block diagonal matrix $F$ is also of full column rank. Therefore, given a set of virtual inputs $w_l$ that generates feasible coordinated motions for all vehicles in Equation~\eqref{eq:abstract_motion}, the following equation has a unique solution in terms of the actual control input $u$,
	\begin{align}
		\dot P &=  F_{0}+ Fu = \bar K + \sum_{l=1}^\kappa K_l  w_l  
	\end{align}  
	and the control input for the heterogeneous vehicle group can be derived by 
	\begin{align}
		u &= (F^\top F)^{-1}F^\top \left(\bar K + \sum_{l=1}^\kappa K_l  w_l - F_{0} \right)
	\end{align}

	\section{Coordination of unicycles with multiple vehicles under inter-vehicle geometric constraints} \label{sec:case_study}
	In this section, we consider several application examples with case studies to illustrate the proposed coordination theory and algorithms. These application examples involve the coordination of homogeneous or heterogeneous vehicles subject to various combinations of constraints. Since unicycle vehicle control is a benchmark problem in robotic motion generation, we will in particular discuss  unicycle coordination control with multiple vehicles.

	\subsection{Typical vehicle  models and their distribution/codistribution formulation of   kinematic motion constraints} \label{sec:kinematic_model}
	We consider four types of heterogeneous vehicles:  unicycle-type vehicle with nonholonomic motion constraints,  constant-speed vehicle with both nonholonomic motion and speed constraints,  car-like vehicle with under-actuated motions, and fully-actuated vehicle modeled by integrators.
	
	\subsubsection{Nonholonomic unicycle  vehicle}
	The unicycle vehicle is described by \begin{align} \label{eq:example_unicycle}
		\dot x_i &= v_i \,\, \text{cos}(\theta_i), \nonumber \\
		\dot y_i &= v_i \,\, \text{sin}(\theta_i),   \\
		\dot \theta_i &=   u_i,  \nonumber
	\end{align}
	where the state variable is $p_i = [x_i, y_i, \theta_i]^{\top} \in  \mathbb{R}^3$. The unicycle vehicle in the model equation \eqref{eq:example_unicycle} should satisfy ``rolling without slipping", making
	it  a nonholonomic vehicle. 
	In an equivalent and  compact form, we can write
	\begin{align}
		\dot p_i = [\dot x_i, \dot y_i, \dot \theta_i]^{\top} = f_{i,1} v_i + f_{i,2} u_i,
	\end{align}
	with
	\begin{align} \label{eq:vector_field_vehicle2}
		f_{i,1} = \left[
		\begin{array}{ccc}
			\text{cos}(\theta_i)\\
			\text{sin}(\theta_i)\\
			0
		\end{array}
		\right],
		f_{i,2} = \left[
		\begin{array}{ccc}
			0\\
			0\\
			1
		\end{array}
		\right].
	\end{align}
	
	The kinematic constraint for a nonholonomic unicycle  vehicle is described by the distribution $\Delta_i = \text{span}\{f_{i,1}, f_{i,2}\}$, which can be equivalently stated by the annihilating codistribution
	\begin{align}
		\Omega_{K_i} = \Delta_i^\perp = \text{span}\{\text{sin}(\theta_i) \text{d}x_i-\text{cos}(\theta_i) \text{d}y_i\},
	\end{align}
	which has a full rank for all values of $\theta_i$. 
	\subsubsection{Unicycle-type vehicle with constant speed}
	Now consider a nonholonomic unicycle-type vehicle with \textit{constant-speed constraints}. The motion equations of such vehicles can   be described by \eqref{eq:example_unicycle}, where the  speed term $v_i$ is \textit{fixed} and
	the only control input is $u_i$ that steers the vehicle's orientations. Such equations have been commonly used to model the planar kinematics of fixed-wing UAVs with speed constraints \cite{beard2012small, sun2021collaborative}. 
	By introducing the two vector fields  
	\begin{align} \label{eq:vector_field_vehicle2}
		f_{i,0} = \left[
		\begin{array}{ccc}
			v_i\text{cos}(\theta_i)\\
			v_i\text{sin}(\theta_i)\\
			0
		\end{array}
		\right],
		f_{i,1} = \left[
		\begin{array}{ccc}
			0\\
			0\\
			1
		\end{array}
		\right],
	\end{align}
	we can rewrite the constant-speed vehicle model as a control affine system with a drift term
	\begin{equation} \label{eq:constant_speed_model}
		\dot p_i = [\dot x_i, \dot y_i, \dot \theta_i]^{\top} = f_{i,0} + f_{i,1} u_{i}.
	\end{equation}
	Denote the  two linearly independent covectors  of   the codistribution   as $\omega_{i,1}$ and $\omega_{i,2}$. 
	With the dual vector basis $(\text{d}x_i, \text{d}y_i, \text{d}\theta_i)$, one can write an explicit expression  for a pair of linearly independent covectors  (see \cite{Sun2016feasibility})
	\begin{align}
		\omega_{i,1} &=  \text{sin}(\theta_i) \text{d}x_i - \text{cos}(\theta_i)\text{d}y_i, \nonumber \\
		\omega_{i,2} &=  \text{cos}(\theta_i)\text{d}x_i+\text{sin}(\theta_i)\text{d}y_i . \nonumber
	\end{align}
	The affine codistribution is obtained as $\Omega_{K,i} = [
	\omega_{i,1}^{\top},
	\omega_{i,2}^{\top}
	]^{\top}$, and there holds   $\Omega_{K,i}f_{i,1} = 0$ and $\Omega_{K,i}f_{i,0} = T_{K_i}$, where $T_{K_i} = [q_{i,1}, q_{i,2}]^{\top} = [0 , v_i]^{\top}$. 
	
	\subsubsection{Car-like vehicle}
	Further consider a car-like vehicle, whose kinematic equation is described by (see \cite{de1998feedback})
	\begin{align} \label{eq:car_model}
		\dot x_i &= u_{i,1}   \text{cos}(\theta_i), \nonumber \\
		\dot y_i &= u_{i,1}  \text{sin}(\theta_i),   \nonumber \\
		\dot \theta_i &= u_{i,1} (1/l_i)  \text{tan} (\phi_i),    \nonumber \\
		\dot \phi_i &=  u_{i,2},
	\end{align}
	with the state variables $p_i = (x_i, y_i, \theta_i, \phi_i) \in   \mathbb{R}^4$, where $(x_i, y_i)$ are the Cartesian coordinates of the rear wheel, $\theta_i$ is the orientation angle of the vehicle
	body with respect to the $x$ axis,  $\phi_i \in (-\pi/2, \pi/2)$ is the steering angle, and $l_i$ is the distance between the midpoints of the two wheels. The control inputs $u_{i,1}$ and $u_{i,2}$ are the \textit{driving} and the \textit{steering} velocity inputs, respectively.
	The model \eqref{eq:car_model} describes kinematic motions for a typical rear-wheel-driving car, which is subject to two nonholonomic motion constraints (rolling without slipping sideways for each wheel, respectively). 
	In an equivalent compact form, one can write
	\begin{align}
		\dot p_i = [\dot x_i, \dot y_i, \dot \theta_i, \dot \phi_i]^{\top} = f_{i,1} u_{i,1} + f_{i,2} u_{i,2},
	\end{align}
	with 
	\begin{align}
		f_{i,1} = 
		\left[
		\begin{array}{c}
			\text{cos}(\theta_i)\\
			\text{sin}(\theta_i)\\
			(1/l_i)  \text{tan} (\phi_i)\\
			0
		\end{array}
		\right],
		f_{i,2} = \left[
		\begin{array}{c}
			0 \\
			0 \\
			0 \\
			1
		\end{array}
		\right].
	\end{align}
	The kinematic constraints  by the two vector fields $f_{i,1}$ and $f_{i,2}$ are described by the distribution $\Delta_i = \text{span}\{f_{i,1}, f_{i,2}\}$. After some algebraic calculations, one can   equivalently state the kinematic constraints of the car-like vehicle by the annihilating codistribution:
	\begin{align}
		\Omega_{K_i} &= \Delta_i^\perp \nonumber \\
		&=  [\text{sin}(\theta_i+ \phi_i) \text{d}x_i-\text{cos}(\theta_i+\phi_i) \text{d}y_i - l_i \text{cos}(\phi_i)\text{d}\theta_i, \nonumber \\
		&\,\,\,\,\,\,\,\,\,\,\text{sin}(\theta_i) \text{d}x_i-\text{cos}(\theta_i) \text{d}y_i],
	\end{align}
	which has a full and constant rank for all relevant values of the variables. 
	\begin{figure}[t]
		\begin{center}
			\vspace{-10pt}
			\includegraphics[width=0.4\textwidth]{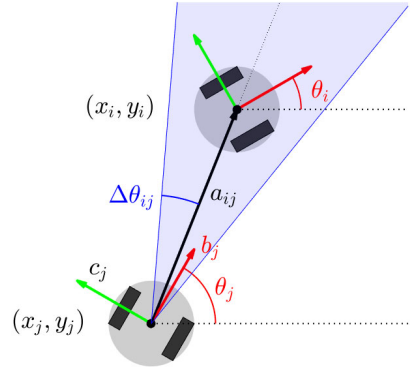}
			\vspace{-20pt}
			\caption{Illustration of a visibility inequality constraint, $\mathcal{I}^{(4)}_{ij}$, bounding the direction $b_j$ to the blue cone defined by $a_{ij}$ and the angle $\Delta\theta_{ij}$.}
			\label{fig:geometry_1}
		\end{center}
	\end{figure}
	
	\subsubsection{Fully-actuated vehicle modeled by integrators}
	We also consider fully-actuated vehicles whose dynamical equation is %described by single integrators
	\begin{align} \label{eq:integrator}
		\dot x_i = u_{i,1}, \,\,\,\,
		\dot y_i = u_{i,2},
	\end{align}
	where $u_{i,1}, u_{i,2}$ are the control inputs which directly generate corresponding velocities in the $x$ and $y$ coordinates. For such fully-actuated vehicles,  the two vectors $f_{i,1} = [1, 0]^{\top}$ and $f_{i,2} =[0, 1]^{\top}$ span $\mathbb{R}^2$ ($\Delta_i = \mathbb{R}^2$) and thus the codistribution is null ($\Omega_{K_i} = \emptyset$).
	\subsection{Typical geometric coordination  constraints: distance, heading and visibility functions} \label{sec:modelling_constraints}
	Consider two of the previously defined vehicles in the form~\eqref{eq:system_drift} sub-indexed $i$ and $j$, respectively, as illustrated in Figure~\ref{fig:geometry_1}. A common coordination task may include a simple inter-vehicle distance constraint, with 
	\begin{align}\label{eq:distanceconstraint}
		\Phi_{ij}^{d}:  \frac{1}{2}(x_i - x_j)^2 +  \frac{1}{2}(y_i - y_j)^2  -  \frac{1}{2}d_{ij}^2 = 0,
	\end{align}
	for some $d_{ij}>0$, which generates a codistribution matrix
	\begin{equation}
		\Omega_{E, ij}^{d} = [(x_i - x_j)(\text{d}x_i - \text{d}x_j) + (y_i - y_j)(\text{d}y_i - \text{d}y_j)].
	\end{equation}
	Practical coordination tasks may also include a distance constraint in terms of a two-sided inequality,
	\begin{small}
		\begin{align}  \label{eq:distance_inequality_constraint}
			\mathcal{I}_{ij}^{d}:  \frac{1}{2}(d_{ij}^{\mathrm{min}})^2\leq  \frac{1}{2}(x_i - x_j)^2 +  \frac{1}{2}(y_i - y_j)^2 \leq  \frac{1}{2}(d_{ij}^{\mathrm{max}})^2,  
		\end{align}
	\end{small}
	with $d_{ij}^{\mathrm{max}}>d_{ij}^{\mathrm{min}}>0$, and codistribution matrix given by $\Omega_{I, ij}^{d}=\Omega_{E, ij}^{d}$ if the right inequality becomes active, or   $\Omega_{I, ij}^{d}= -\Omega_{E, ij}^{d}$ if the left inequality becomes active. 
	
	Other tasks may require heading constraints in the form
	\begin{align} \label{eq:heading_constraint_linear}
		\Phi_{ij}^{h}: \theta_i - \theta_j = \delta_{ij},
	\end{align}
	for some constant $\delta_{ij}>0$. The corresponding codistribution of this constraint takes the form $\Omega_{E, ij}^{h} = [\theta_i \text{d} \theta_i - \theta_j \text{d}\theta_j]$.  A more general and relaxed form of the heading constraint is given by
	\begin{align} \label{eq:heading_constraint3}
		\mathcal{I}_{ij}^{h}: \Delta\theta_{ij}^{\mathrm{min}}\leq\text{arctan}\Big(\frac{y_i-y_j}{x_i - x_j}\Big) - \theta_j \leq \Delta\theta_{ij}^{\mathrm{max}},
	\end{align}
	referred to as a visibility constraint. Such an inequality constraint  has  been used in modelling visibility maintenance control in multi-robotic systems \cite{panagou2014cooperative, liu2018vision}. However, the inequality heading constraint in the form of \eqref{eq:heading_constraint3}  is restrictive due to the range of the arctangent function. Consequently, we consider an equivalent inequality constraint. Assume that $\Delta\theta_{ij}=-\Delta\theta_{ij}^{\mathrm{min}}=\Delta\theta_{ij}^{\mathrm{max}}$, and define the directions $a_{ij} := [x_i - x_j, y_i - y_j]$, $b_{j} := [\cos(\theta_j), \sin(\theta_j)]$, $c_{j} := [-\sin(\theta_j), \cos(\theta_j)]$. Then we can write the heading constraint in Equation~\eqref{eq:heading_constraint3} as
	\begin{align} \label{eq:heading_inequality_constraint3}
		\mathcal{I}_{ij}^{a}: \cos(\Delta\theta_{ij})\langle a_{ij}, a_{ij} \rangle^{1/2}\leq  \langle a_{ij},b_{j} \rangle.
	\end{align}
	Here, it can be verified that the associated codistribution matrix is
	\begin{align}\label{eq:headingconst}
		\begin{array}{c}
			\Omega_{I,ij}^{a}\hspace{-1pt}=\hspace{-1pt}\dfrac{\langle a_{ij}, c_j\rangle}{\sqrt{\langle a_{ij}, a_{ij} \rangle}} \Bigg(
			\dfrac{1}{\langle a_{ij}, a_{ij} \rangle} \Bigg\langle a_{ij},
			\hspace{-3pt}
			\Bigg[
			\begin{array}{c}
				\hspace{-5pt}\text{d}x_i-\text{d}x_j\hspace{-5pt}\\
				\hspace{-5pt}\text{d}y_j-\text{d}y_i\hspace{-5pt}
			\end{array}
			\Bigg]
			\hspace{-1pt}
			\Bigg\rangle
			\hspace{-2pt}+
			\hspace{-2pt}\text{d}\theta_j\Bigg),
		\end{array}
	\end{align}
	when the inequality constraint \eqref{eq:heading_inequality_constraint3} becomes active. 
	\begin{remark}\label{rem:singA}
		It should be noted that the constraint~\eqref{eq:headingconst} may become singular when vehicles $i$ and $j$ collide (i.e., $a_{ij} = 0$), due to the division by $\langle a_{ij}, a_{ij} \rangle$. This case can always be excluded by satisfying the distance constraints in \eqref{eq:distanceconstraint} or  \eqref{eq:distance_inequality_constraint}. Therefore, the generated co-distribution matrices that model the geometric coordination tasks are always well-defined and have constant ranks under the geometric inter-vehicle constraints.  
	\end{remark}
	
	\subsection{Coordinating two nonholonomic unicycle vehicles}\label{sec:twounicycles}
	
	In the first example, we consider two unicycle vehicles which are to cooperatively maintain a constant inter-vehicle distance \eqref{eq:distanceconstraint} and a bounded heading displacement or visibility inequality constraint as described above by one of  \eqref{eq:heading_constraint3} or \eqref{eq:heading_inequality_constraint3}. 
	Now we construct a joint codistribution matrix from the nonholonomic kinematic motion constraints and the distance equality constraint
	\begin{align}
		\Omega =  
		\left[
		\begin{array}{cccccc}
			\text{sin}(\theta_1)& - \text{cos}(\theta_1)& 0 &0 &0 &0   \\
			0 &0 &0 & \text{sin}(\theta_2) & -\text{cos}(\theta_2)&  0   \\
			x_1 - x_2 &y_1 - y_2 & 0     &x_2 - x_1 &y_2 - y_1 &  0
		\end{array} \right] 
	\end{align}
	with $T =  [T_K^\top, T_E^\top]^\top = [0, 0, 0]^\top$. Solving the equations $\Omega(P)\dot P = T$ yields the feasible motion solutions $\dot P = \sum_{i=1}^3 w_iK_i$, where
	\begin{align}  \label{eq:solution_unicycle}
		K_1  &=
		\left[
		0,
		0,
		1,
		0,
		0,
		0   
		\right]^\top,   
		K_2   =
		\left[
		0,
		0,
		0,
		0,
		0,
		1   
		\right]^\top \nonumber \\
		K_3  &=
		\left[
		\begin{array}{c}
			\text{cos}(\theta_1)\left(\text{cos}(\theta_2)(x_1 - x_2) + \text{sin}(\theta_2) (y_1 - y_2) \right)   \\
			\text{sin}(\theta_1)\left(\text{cos}(\theta_2)(x_1 - x_2) + \text{sin}(\theta_2) (y_1 - y_2) \right)   \\
			0   \\
			\text{cos}(\theta_2)\left(\text{cos}(\theta_1)(x_1 - x_2) + \text{sin}(\theta_1) (y_1 - y_2) \right)   \\
			\text{sin}(\theta_2)\left(\text{cos}(\theta_1)(x_1 - x_2) + \text{sin}(\theta_1) (y_1 - y_2) \right)   \\
			0  
		\end{array} \right]
	\end{align}
	It is clear that the virtual controls $w_1$ and $w_2$ generate the angular speeds for each vehicle, respectively, while the term $K_3$ maintains a constant desired distance between them (assuming that initially the distance constraint is met). Furthermore, the solution with $K_1$ and $K_2$ and virtual control inputs $w_1$ and $w_2$ possesses the motion freedoms to generate an admissible angular input that achieves desired heading re-orientations to satisfy the heading or visibility inequality in the form of  \eqref{eq:heading_constraint_linear}-\eqref{eq:heading_inequality_constraint3}. For example, when the heading inequality constraint becomes active in the sense that $\theta_1 - \theta_2 -\delta_{12}^+ = 0$ which renders a codistribution $\Omega_{I, 12}^{(2)}=[\theta_i \text{d} \theta_i - \theta_j \text{d}\theta_j]$, any $w_1K_1$ with a negative $w_1$, or any $w_2K_2$ with a positive $w_2$, is a feasible solution guaranteeing $\Omega_{I, 12}^{(2)} \dot P \leq 0$ that generates feasible motions for the vehicle group.  The same principle is also applied to other types of heading inequality constraints in the form of \eqref{eq:heading_constraint3} or \eqref{eq:heading_inequality_constraint3}, while a feasible motion always exists to ensure that the heading or visibility inequality constraint is always satisfied. In summary, we have the following lemma on coordination feasibility and motion generation for a two-unicycle vehicle group.
	
	\begin{lemma} \label{lemma:two_vehicle_exm1}
		Consider two unicycle-type vehicles, each described by \eqref{eq:example_unicycle},   with a coordination task of maintaining a constant inter-vehicle distance   $d_{12}$ and a bounded heading displacement or visibility inequality constraint. Suppose initially at  time $t = 0$ both constraints are met. By using the above derived control   solutions with the vector functions $K_1, K_2, K_3$:
		\begin{itemize}
			\item The coordination task with inter-vehicle distance constraint is preserved by the derived control vector fields with any $w_1, w_2, w_3$.
			\item If initially the heading/visibility  inequality is satisfied, then a feasible control always exists (with the possible use of non-zero $w_l$) that preserves both distance equality and heading/visibility inequality constraints. 
		\end{itemize}
	\end{lemma}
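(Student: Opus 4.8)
The plan is to dispatch the two bullets separately, leaning on the null-space construction of Section~\ref{sec:motion} for the equality part and on Nagumo's theorem (Lemma~\ref{lemma:invariant_set_control}) for the inequality part. For the first bullet, I would observe that the matrix $\Omega$ displayed above is exactly $[\Omega_K^\top,\,(\Omega_{E,12}^{d})^\top]^\top$ for this two-unicycle problem and that the accompanying right-hand side vanishes ($T_K=0$ since a unicycle is drift-free, and the distance equality contributes a zero entry), so by Theorem~\ref{thm:equality_inequality_1} the set of feasible coordinated motions is precisely $\mathrm{Null}\,\Omega$. A short computation, as the surrounding text indicates, gives $\Omega K_l=0$ for $l=1,2,3$, and $K_1,K_2,K_3$ are linearly independent at regular points (where $K_3\neq0$), so they span this three-dimensional null space. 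Then, for any (constant or time-varying) $w_1,w_2,w_3$ and any trajectory $P(\cdot)$ of $\dot P=\sum_l w_l K_l$, the third component of $\Omega\dot P=0$ reads $\frac{\mathrm d}{\mathrm dt}\Phi_{12}^{d}(P)=\Omega_{E,12}^{d}\dot P=0$; since $\Phi_{12}^{d}(P(0))=0$ by hypothesis, $\Phi_{12}^{d}(P(t))\equiv0$, i.e.\ the inter-vehicle distance $d_{12}$ is maintained. This proves the first bullet and, at the same time, reduces the second bullet to keeping only the heading/visibility inequality satisfied, since the distance is now preserved regardless of the $w_l$.

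For the second bullet I would invoke Lemma~\ref{lemma:invariant_set_control}: it suffices that, at every instant at which the heading/visibility inequality is active, the chosen $\dot P$ satisfies $\Omega_I\dot P\le0$, where $\Omega_I$ is the active codistribution; away from the active boundary the contingent cone is all of $\mathbb{R}^6$ and any $w_l$ (in particular $w_l=0$) is admissible, so only a boundary-activated correction is needed. I would then evaluate $\Omega_I K_l$ on the active boundary in the two admissible cases. For a bounded heading-displacement constraint (active side $\theta_1-\theta_2=\delta_{12}^{+}$, cf.\ \eqref{eq:heading_constraint_linear}), $\Omega_I$ involves only the $\theta_1,\theta_2$ coordinates, so $\Omega_I K_3=0$ while $\Omega_I K_1$ and $\Omega_I K_2$ are nonzero with opposite signs; choosing for instance $w_1<0$, $w_2=w_3=0$ (or $w_2>0$, $w_1=w_3=0$) yields $\Omega_I\dot P\le0$. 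For the visibility constraint \eqref{eq:heading_constraint3}/\eqref{eq:heading_inequality_constraint3}, \eqref{eq:headingconst} shows $\Omega_{I}^{a}$ has no $\mathrm d\theta_1$ term, so $\Omega_{I}^{a}K_1=0$, while $\Omega_{I}^{a}K_2=\langle a_{12},c_2\rangle/\sqrt{\langle a_{12},a_{12}\rangle}$. The key point is that $\Omega_{I}^{a}K_2\neq0$ exactly when the constraint is active: activity means $\langle a_{12},b_2\rangle=\cos(\Delta\theta_{12})\sqrt{\langle a_{12},a_{12}\rangle}<\sqrt{\langle a_{12},a_{12}\rangle}$, so $b_2$ is not collinear with $a_{12}$ and hence $\langle a_{12},c_2\rangle\neq0$, while $a_{12}\neq0$ because the distance constraint excludes a collision (cf.\ Remark~\ref{rem:singA}). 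Taking $w_2$ with sign opposite to $\langle a_{12},c_2\rangle$ and $w_1=w_3=0$ then gives $\Omega_{I}^{a}\dot P\le0$. In both cases an admissible control exists on the active boundary; together with the first paragraph, Lemma~\ref{lemma:invariant_set_control} and Nagumo's Theorem~\ref{Theorem_Nagumo} give positive invariance of the set carved out by the active inequality, so both constraints hold for all $t\ge0$.

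The main obstacle is the non-degeneracy step in the second bullet: one must ensure that, on the active boundary, the admissible-control set $\{(w_1,w_2,w_3):\Omega_I(\sum_l w_lK_l)\le0\}$ is nonempty for each admissible inequality type, which reduces to showing that at least one of $\Omega_IK_1,\Omega_IK_2$ is nonzero there. For the visibility constraint this is the claim $\Omega_{I}^{a}K_2\neq0$, which has to be tied carefully to the geometric meaning of ``active'' and to the exclusion of collisions via the distance constraint; everything else ($\Omega K_l=0$, the sign bookkeeping for the $w_l$, and the passage through Lemma~\ref{lemma:invariant_set_control}) is routine.
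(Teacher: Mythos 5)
Your proof is correct and follows essentially the same route as the paper's own (informal) argument preceding the lemma: the distance equality is preserved because $K_1,K_2,K_3$ span the null space of the joint codistribution $\Omega$ with $T=0$, and the heading/visibility inequality is handled via Lemma~\ref{lemma:invariant_set_control} by choosing the sign of $w_1$ or $w_2$ whenever the constraint is active. You in fact supply a detail the paper dispatches with ``the same principle is also applied'': the explicit check that $\Omega_{I}^{a}K_2=\langle a_{12},c_2\rangle/\sqrt{\langle a_{12},a_{12}\rangle}\neq 0$ on the active visibility boundary (using orthogonality of $b_2,c_2$ and the exclusion of collisions by the distance constraint), which is exactly the non-degeneracy needed for the admissible-control set to be nonempty.
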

	
	\begin{remark}
		Following the approach proposed in Section~\ref{sec:motion} and based on the derived abstraction motion   equations $\dot P = w_1 K_1 + w_2 K_2 + w_3 K_3$ with the motion vector fields in \eqref{eq:solution_unicycle}, we can recover the actual control input ($v_i, u_i, i = 1,2$) for the nonholonomic unicycle vehicles of \eqref{eq:example_unicycle} as follows
		\begin{align}
			\text{Vehicle 1}: 
			\begin{cases}
				v_1\,\,= &w_3\left({\cos}(\theta_2)(x_1 - x_2) + {\sin}(\theta_2) (y_1 - y_2) \right) \nonumber \\
				u_1\,\,= &w_1 \nonumber \\
			\end{cases} \\
			\text{Vehicle 2}:   
			\begin{cases}
				v_2\,\,=& w_3\left({\cos}(\theta_1)(x_1 - x_2) + {\sin}(\theta_1) (y_1 - y_2) \right) \nonumber \\
				u_2\,\,=& w_2
			\end{cases}
		\end{align}
		The recent papers  \cite{hajieghrary2017cooperative, hajieghrary2018differential} have discussed cooperative transport of a buoyant load via ASVs modelled by unicycle equations, while the transport task is constrained by  a constant distance between two unicycle vehicles. The authors of \cite{hajieghrary2017cooperative, hajieghrary2018differential} have studied  a differential geometric approach   to derive feasible cooperative motions.  We note that using the proposed method in this paper, more constraints (including distance and headings) can be incorporated in the cooperative motion design for unicycle coordination. 
	\end{remark}

	% Remark on the real control input:
	% Let $P = [x_1, y_1, \theta_1, x_2, y_2, \theta_2]^\top$. Then $\dot P = w_1 K_1 + w_2 K_2 + w_3 K_3$
	% \begin{align}
	%     \dot x_1 &=  
	% \end{align}

	\subsection{Coordinating a unicycle and a constant-speed vehicle} \label{sec:constspeedunicycle}
	Now we consider a coordination task that involves a constant-speed vehicle and a general unicycle vehicle, aiming to maintain inter-vehicle distance equality and heading inequality constraints for a coordination task. This is a typical coordination control problem incorporating a fixed-wing UAV (with speed constraints) and ground mobile objects such as ground vehicles \cite{wang2014vision,wang2019coordinated}. To solve this problem, we find the codistribution matrix from the kinematic motion equations and coordination equality constraint, constructed by 
	\begin{align}
		\Omega =
		\left[
		\begin{array}{cccccc}
			\text{sin}(\theta_1)& - \text{cos}(\theta_1)& 0 &0 &0 &0   \\
			\text{cos}(\theta_1)&  \text{sin}(\theta_1) &0 &0 &0 &0 \\
			0 &0 &0 & \text{sin}(\theta_2) & -\text{cos}(\theta_2)&  0    \\
			x_1 - x_2 &y_1 - y_2 & 0     &x_2 - x_1 &y_2 - y_1 &  0
		\end{array} \right]
	\end{align}
	with $T =  [T_K^\top, T_E^\top]^\top = [0, v_1,0,0,0]^\top$. The algebraic equation $\Omega(P)\dot P = T$ is solved by
	\begin{align}
		\bar K = 
		\left[
		\begin{array}{c}
			v_1 \text{cos}(\theta_1)   \\
			v_1 \text{sin}(\theta_1)   \\
			0   \\
			\frac{\text{cos}(\theta_2)\left(v_1 \text{cos}(\theta_1)(x_1 - x_2) + v_1 \text{sin}(\theta_1)(y_1 - y_2)\right)}{\text{cos}(\theta_2)(x_1 - x_2) + \text{sin}(\theta_2)(y_1 - y_2)}    \\
			\frac{\text{sin}(\theta_2)\left(v_1 \text{cos}(\theta_1)(x_1 - x_2) + v_1 \text{sin}(\theta_1)(y_1 - y_2)\right)}{\text{cos}(\theta_2)(x_1 - x_2) + \text{sin}(\theta_2)(y_1 - y_2)}    \\
			0   
		\end{array} \right]
	\end{align}
	and $
	K_1  =
	\left[
	0,
	0,
	1,
	0,
	0,
	0   
	\right]^\top,
	K_2  =
	\left[
	0,
	0,
	0,
	0,
	0,
	1    \right]^\top
	$, which enables an abstraction of the coordination system of feasible motion equations $\dot P = \bar K + \sum_{l=1}^2 w_l K_l$. An analysis analogous to Lemma \ref{lemma:two_vehicle_exm1} delivers  the following result.
	\begin{lemma}
		Consider a unicycle-type vehicle and a constant-speed vehicle in a coordination group to maintain inter-vehicle distance equality and heading inequality or visibility constraints described in Section~\ref{sec:modelling_constraints}. By using the above-derived control solutions $\bar K, K_1, K_2$: 
		\begin{itemize}
			\item The coordination task with inter-vehicle distance is preserved with the derived  control vector fields for any $w_1$ and $w_2$. 
			\item If initially the heading or visibility inequality is satisfied, then a feasible motion always exists  with possible use of non-zero $w_1$ or $w_2$ that preserves both distance equality and heading/visibility inequality constraints. 
		\end{itemize}
	\end{lemma}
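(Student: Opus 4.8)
The plan is to mirror the argument structure already used in Lemma~\ref{lemma:two_vehicle_exm1}, specializing it to the mixed unicycle / constant-speed pair. First I would establish that the displayed $\bar K$ together with $K_1,K_2$ indeed solve $\Omega(P)\dot P = T$: this is a direct substitution check. The two nonholonomic rows and the constant-speed row for vehicle~$1$ force $\dot p_1 = [v_1\cos\theta_1,\,v_1\sin\theta_1,\,\ast]^{\top}$ with the third component free, which $K_1$ supplies; the nonholonomic row for vehicle~$2$ forces $\dot p_2$ to be a multiple of $[\cos\theta_2,\sin\theta_2,\ast]^{\top}$ with the heading component free, which $K_2$ supplies; and the distance row $\Omega_{E,12}^d \dot P = 0$ pins down that multiple, yielding exactly the fourth and fifth components of $\bar K$. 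Since $F$ has full column rank for both vehicle models (the unicycle distribution $\{f_{1,1},f_{1,2}\}$ and the constant-speed pair $\{f_{2,0},f_{2,1}\}$ are everywhere independent), the abstraction $\dot P = \bar K + w_1 K_1 + w_2 K_2$ is a genuine reparametrization of admissible motions, and by Theorem~\ref{thm:equality_inequality_1} feasibility of the coordination task is equivalent to solvability of these algebraic relations.

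Next, for the first bullet, I would invoke Lemma~\ref{lemma:invariant_set_control} in its equality-constraint form: since every motion of the abstract system satisfies $\Omega_E(P)\dot P = 0$ by construction (the distance row is one of the rows we solved), $\frac{d}{dt}\Phi_{12}^d(P) \equiv 0$ along all such trajectories, so $\Phi_{12}^d = 0$ is invariant and the inter-vehicle distance is preserved for arbitrary $w_1,w_2$. One should note, as in Remark~\ref{rem:singA}, that $\bar K$ is well-defined precisely because the denominator $\cos\theta_2(x_1-x_2) + \sin\theta_2(y_1-y_2) = \langle a_{21}, b_2\rangle$ is nonzero; I would point out that on the invariant distance set with $d_{12}>0$ one has $a_{12}\neq 0$, and the heading/visibility constraint keeps $b_2$ from being orthogonal to $a_{12}$, so the codistribution stays full rank — exactly the situation handled in Section~\ref{sec:modelling_constraints}.

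For the second bullet, I would appeal to the Nagumo theorem (Theorem~\ref{Theorem_Nagumo}) and the controlled-invariant-set condition: it suffices to show that whenever a heading inequality becomes active — say $\theta_1 - \theta_2 - \delta_{12} = 0$ giving codistribution $\Omega_{I,12}^h = [\theta_1\,\mathrm{d}\theta_1 - \theta_2\,\mathrm{d}\theta_2]$, or the visibility form with $\Omega_{I,12}^a$ of \eqref{eq:headingconst} — one can choose $w_1,w_2$ so that $\Omega_{I,12}^h\dot P \le 0$. Since $\dot\theta_1 = w_1$ (from $K_1$) and $\dot\theta_2 = w_2$ (from $K_2$) are independent free inputs, this is an affine inequality in $(w_1,w_2)$ with a nonzero gradient, hence always solvable; for the visibility constraint $\Omega_{I,12}^a\dot P \le 0$ the same reasoning applies because the $\mathrm{d}\theta_2$ coefficient in \eqref{eq:headingconst} is $\langle a_{12},c_2\rangle/\langle a_{12},a_{12}\rangle$, which is nonzero away from the cone boundary, so $w_2$ alone can steer $b_2$ back into the cone. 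Combining the two bullets: starting from an initial state meeting both constraints, the abstract dynamics with an appropriately selected (piecewise constant or continuous) virtual input keeps both the distance equality and the heading/visibility inequality satisfied for all $t>0$.

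The main obstacle I anticipate is not the algebra but the case analysis at the boundary of the visibility cone: one must verify that the coefficient of $\mathrm{d}\theta_2$ in $\Omega_{I,12}^a$ does not vanish simultaneously with the constraint becoming active, i.e. that $\langle a_{12}, c_2\rangle \neq 0$ when $\langle a_{12}, b_2\rangle = \cos(\Delta\theta_{12})\|a_{12}\|$. Geometrically this holds because on the cone boundary $b_2$ makes angle exactly $\Delta\theta_{12}\in(0,\pi/2)$ with $a_{12}$, so its orthogonal complement direction $c_2$ has nonzero projection onto $a_{12}$; I would state this explicitly and note it is the precise analogue of the nonsingularity observation in Remark~\ref{rem:singA}. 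With that point dispatched, the feasibility of a corrective $w_2$ (and symmetrically $w_1$ for the linear heading constraint) follows immediately, completing the proof.
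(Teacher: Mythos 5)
Your proposal is correct and follows essentially the same route as the paper, which simply derives $\bar K, K_1, K_2$ and asserts the result by an ``analysis analogous to Lemma~\ref{lemma:two_vehicle_exm1}'': the distance row is built into the solved equality system, and the free heading rates $\dot\theta_1 = w_1$, $\dot\theta_2 = w_2$ furnish the corrective action required by the Nagumo/controlled-invariance condition when an inequality becomes active. Your additional checks --- that $\langle a_{12}, c_2\rangle \neq 0$ on the cone boundary and that the visibility constraint bounds the denominator $\langle a_{12}, b_2\rangle$ of $\bar K$ away from zero --- are details the paper leaves implicit (cf.\ Remark~\ref{rem:singA}), and they strengthen rather than alter the argument.
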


	\subsection{Coordinating a nonholonomic unicycle and a car-like vehicle}
	Consider a two-vehicle group, one described by the unicycle equation and the other by a car-like dynamics. The two vehicles assume a task to cooperatively maintain a constant distance $d_{12}$ and a heading or visibility inequality constraint.
	
	The joint codistribution matrix from both the kinematic constraints and the distance equality constraint can be obtained as (using the standard bases for the dual space $[\text{d}x_1,\text{d}x_2, \cdots, \text{d}\phi_2,\text{d}\theta_2]$): 
	\begin{align}
		\Omega = [&\text{sin}(\theta_1) \text{d}x_1-\text{cos}(\theta_1) \text{d}y_1, \nonumber \\
		&\text{sin}(\theta_2+ \phi_2) \text{d}x_2-\text{cos}(\theta_2+\phi_2) \text{d}y_2 - l_2 \text{cos}(\phi_2)\text{d}\theta_2, \nonumber \\
		&\text{sin}(\theta_2) \text{d}x_2-\text{cos}(\theta_2) \text{d}y_2,  \nonumber \\
		&(x_1 - x_2)(\text{d}x_1 - \text{d}x_2) + (y_1 - y_2)(\text{d}y_1 - \text{d}y_2)].
	\end{align}
	The solution to the algebraic equation $\Omega(P)\dot P = T = 0$ is obtained as
	$
	\dot P = \sum_{l=1}^3 w_l K_l
	$ with 
	$
	K_1  =
	\left[
	0,
	0,
	1,
	0,
	0,
	0,
	0   
	\right]^\top,
	K_2  =
	\left[
	0,
	0,
	0,
	0,
	0,
	0,
	1   
	\right]^\top  
	$, and 
	\begin{align}
		K_3  =
		\left[
		\begin{array}{c}
			\text{cos}(\theta_1)\left(\text{cos}(\theta_2)(x_1 - x_2) + \text{sin}(\theta_2) (y_1 - y_2) \right)   \\
			\text{sin}(\theta_1)\left(\text{cos}(\theta_2)(x_1 - x_2) + \text{sin}(\theta_2) (y_1 - y_2) \right)   \\
			0   \\
			\text{cos}(\theta_2)\left(\text{cos}(\theta_1)(x_1 - x_2) + \text{sin}(\theta_1) (y_1 - y_2) \right)   \\
			\text{sin}(\theta_2)\left(\text{cos}(\theta_1)(x_1 - x_2) + \text{sin}(\theta_1) (y_1 - y_2) \right)   \\
			\frac{1}{l_2} \text{tan}\phi_2 \left(\text{cos}(\theta_1)(x_1 - x_2) + \text{sin}(\theta_1) (y_1 - y_2) \right)\\
			0  
		\end{array} \right]
	\end{align}
	The coordination feasibility and motion generation result is summarized in the following lemma.
	\begin{lemma}\label{lem:visconst}
		Consider a two-vehicle group consisting of a  unicycle-type vehicle and a car-like vehicle, with a coordination task of maintaining a constant inter-vehicle distance   $d_{12}$ and a   heading/visibility constraint described in Section~\ref{sec:modelling_constraints}.  By using the above derived control solutions $K_1, K_2, K_3$: 
		\begin{itemize}
			\item The inter-vehicle distance is preserved with the above derived control vector fields for any choice of $w_l$. 
			\item If initially the heading/visibility inequality is satisfied, then a feasible control always exists that preserves both distance equality and heading/visibility inequality constraints. 
		\end{itemize}
	\end{lemma}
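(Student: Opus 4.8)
The plan is to reuse, essentially verbatim up to notation, the two-step argument behind Lemma~\ref{lemma:two_vehicle_exm1}: invoke Theorem~\ref{thm:equality_inequality_1} together with the controlled-invariance condition of Lemma~\ref{lemma:invariant_set_control} (Nagumo's theorem). For the first bullet I would verify by direct substitution that $\Omega(P)K_l=0$ for $l=1,2,3$, where $\Omega$ is the $4\times 7$ joint codistribution displayed above and $T=T_K=T_E=0$. Row~$1$ (the unicycle nonholonomic covector, supported on $\mathrm{d}x_1,\mathrm{d}y_1$) annihilates $K_1,K_2$ trivially and annihilates $K_3$ because the $(\dot x_1,\dot y_1)$ block of $K_3$ is a scalar multiple of $(\cos\theta_1,\sin\theta_1)=f_{1,1}$, which lies in $\ker\Omega_{K_1}$; rows~$2$ and~$3$ (the two car-like covectors) annihilate $K_1,K_2$ trivially and annihilate $K_3$ because the $(\dot x_2,\dot y_2,\dot\theta_2)$ block of $K_3$ is exactly a scalar multiple of $f_{2,1}$, which by construction lies in $\ker\Omega_{K_2}$; row~$4$ (the distance covector) annihilates $K_1,K_2$ trivially and annihilates $K_3$ once $(x_1-x_2)(\dot x_1-\dot x_2)+(y_1-y_2)(\dot y_1-\dot y_2)$ is expanded and the common scalar factor cancelled. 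A short rank count --- generic configuration, with $a_{12}=[x_1-x_2,y_1-y_2]\neq 0$ ruled out by Remark~\ref{rem:singA} and $\phi_2\in(-\pi/2,\pi/2)$ so that $\cos\phi_2>0$ --- shows $\Omega$ has full row rank $4$, hence $\dim\ker\Omega=7-4=3$ and $\{K_1,K_2,K_3\}$ is a basis of $\ker\Omega$. Consequently every $\dot P=\sum_l w_l K_l$ solves \eqref{eq:theorem2_motion_kine}--\eqref{eq:theorem2_motion_coordination_eq}, and since $\Phi_{12}^{d}=0$ at $t=0$, Lemma~\ref{lemma:invariant_set_control} in its equality special case gives invariance of the distance constraint for every choice of the $w_l$, which is the first bullet.

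For the second bullet, suppose the heading, respectively visibility, constraint is about to be violated and let $\Omega_I$ denote the corresponding active-constraint codistribution of Section~\ref{sec:modelling_constraints} --- a covector supported on $\mathrm{d}\theta_1,\mathrm{d}\theta_2$ in the bounded-heading case \eqref{eq:heading_constraint_linear}, or the covector \eqref{eq:headingconst} in the visibility case \eqref{eq:heading_inequality_constraint3}. By Theorem~\ref{thm:equality_inequality_1} and Lemma~\ref{lemma:invariant_set_control} it suffices to exhibit $(w_1,w_2,w_3)$ with $\Omega_I\big(\sum_l w_l K_l\big)\le 0$. I would compute $\Omega_I K_l$, $l=1,2,3$: since $K_1$ and $K_2$ excite only $\dot\theta_1$ and $\dot\phi_2$, while $K_3$ additionally excites the planar velocities and $\dot\theta_2$, in the bounded-heading case $\Omega_I K_2=0$, $\Omega_I K_1$ equals the $\mathrm{d}\theta_1$-coefficient of $\Omega_I$ (so a single suitably signed $w_1$ --- the unicycle steering input --- makes the expression nonpositive wherever that coefficient is nonzero), and $\Omega_I K_3$ contributes through $\dot\theta_2=l_2^{-1}\tan\phi_2\big(\cos\theta_1(x_1-x_2)+\sin\theta_1(y_1-y_2)\big)$; in the visibility case, depending on which vehicle plays the constrained role $j$ in \eqref{eq:headingconst}, either $\Omega_I K_1\neq 0$ outright, or $\Omega_I K_1=\Omega_I K_2=0$ with $\Omega_I K_3$ generically nonzero, in which case a $w_3$ of the opposite sign --- which by the first part simultaneously keeps the distance fixed --- yields $\Omega_I\dot P\le 0$. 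In the residual non-generic configurations the trivial choice $w_1=w_2=w_3=0$, an instantaneously halted yet admissible motion of both kinematic models, already gives $\Omega_I\dot P=0$; hence a feasible motion always exists, which is the second bullet. The actual inputs $(v_1,u_1)$ and $(u_{2,1},u_{2,2})$ are then recovered from $\dot P=F_0+Fu=\sum_l w_l K_l$ by the full-column-rank inversion of $F$ as in Section~\ref{sec:motion}.

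The genuinely delicate point is not the algebra but the book-keeping at degenerate configurations: one must show that $\operatorname{span}\{K_1,K_2,K_3\}$ always intersects the contingent cone of the feasible set. Because $\dot P=0$ lies in that span and pairs to zero with every constraint gradient, mere existence is in fact never in jeopardy; what needs care is pinpointing exactly when $\Omega_I K_1$ or $\Omega_I K_3$ vanishes (e.g.\ $\theta_1=0$ in the $\theta\,\mathrm{d}\theta$-form, or $a_{12}$ aligned with the unicycle heading) so that the statement remains informative, and it is there that the standing regularity assumptions --- full-rank codistributions, $\phi_2\in(-\pi/2,\pi/2)$, and $a_{12}\neq 0$ via Remark~\ref{rem:singA} --- enter.
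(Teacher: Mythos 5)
Your proposal is correct and follows essentially the same route as the paper, which proves this lemma implicitly through the preceding derivation of $K_1,K_2,K_3$ as a spanning set for $\text{Null}(\Omega)$ and an appeal to the same sign-selection argument used for Lemma~\ref{lemma:two_vehicle_exm1}. Your version merely makes explicit what the paper leaves to the reader — the annihilation checks, the rank count, and the $\dot P=0$ fallback at degenerate configurations (valid here since both vehicles are drift-free) — so it is a faithful, more detailed rendering of the paper's argument.
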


	\subsection{Simulation study: complex heterogeneous coordination}\label{sec:complexleaderfollower}
	To show the versatility and demonstrate the veracity of the theory, we give a final example consisting of two unicycle vehicles with states $p_i = [x_i,y_i,\theta_i]^{\top}$, $i\in\{1,2\}$ obeying the kinematics in~\eqref{eq:example_unicycle}, and a single integrator model $[x_3,y_3]^{\top}$ obeying the kinematics in~\eqref{eq:integrator}. The matrix $\Omega_K\in \mathbb{R}^{2\times 8}$ is formed with $T_K = [0, 0]^{\top}$ as in Section~\ref{sec:twounicycles}, and one unicycle vehicle is constrained to deliver a desired reference velocity. We consider this reference in terms of the smooth velocities  $v_{r}(t), u_{r}(t)$, and enforce it through the equality constraints given by the affine codistributions,
	\begin{equation}
		\cos(\theta_1)\text{d}x_1 + \sin(\theta_1)\text{d}y_1 = v_{r}(t), \quad \text{d}\theta_1 = u_{r}(t),
	\end{equation}
	which may be represented in a compact matrix form with $\Omega_E\in \mathbb{R}^{2\times 8}$ with $T_E = [v_{r}(t), u_{r}(t)]^{\top}\in \mathbb{R}^2$. In this case and for the simulation purpose, we let
	$v_{r}(t) = 2\sin(\tfrac{5t}{4}), \quad u_{r}(t) = 3\cos(\tfrac{7t}{4})$. 
	\begin{figure} 
		\centering
		\includegraphics[width=0.8\columnwidth]{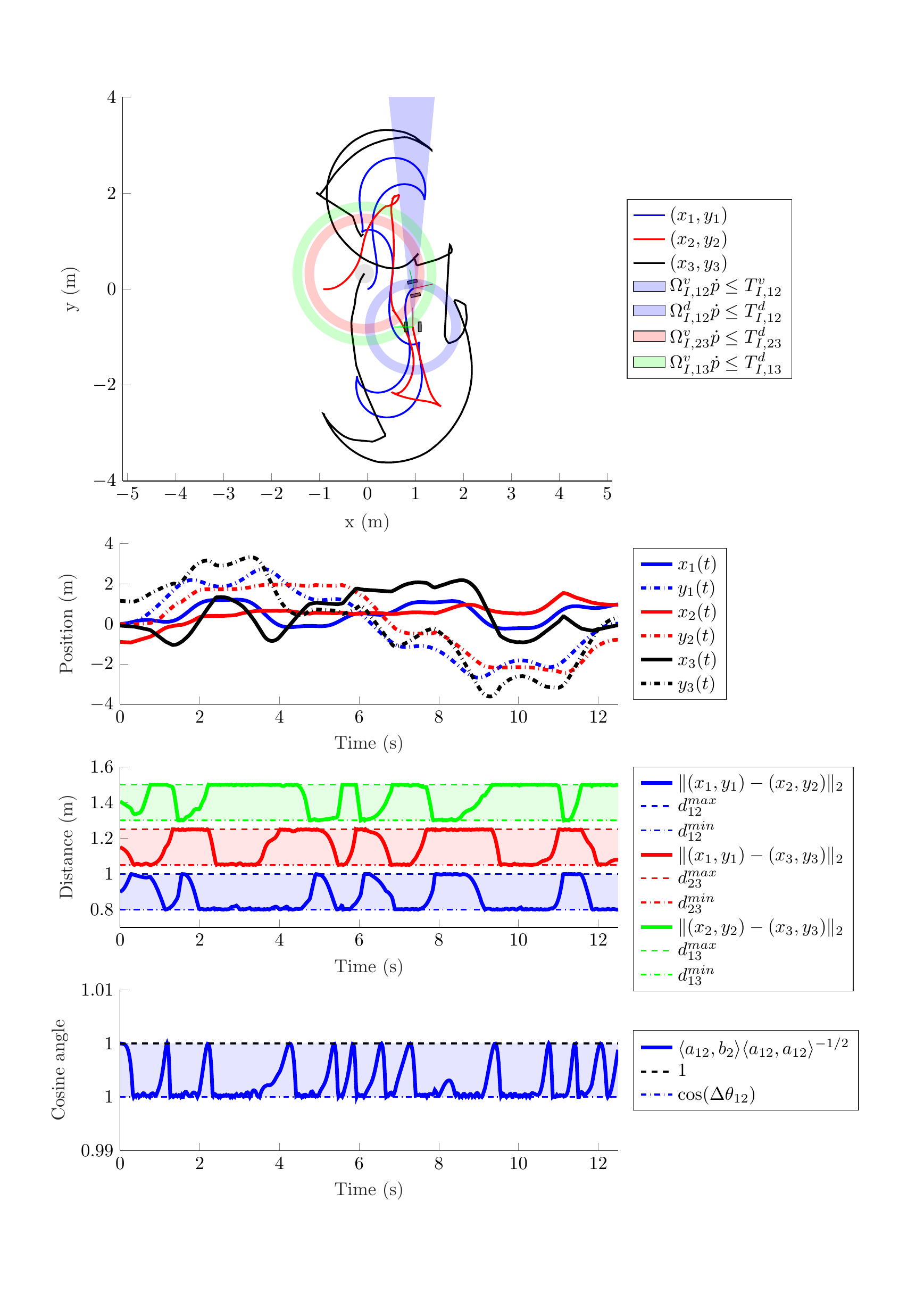} 
		\caption{Complex coordination example. The first vehicle ($i = 1$, in blue) is set to follow a predefined velocity profile, the motions of the second and third vehicles ($i = 2$ in red and $i =3$ in black) are constrained such that distance inequality constraints are satisfied between all vehicles (depicted as rings). Furthermore, a visibility constraint is added for the second vehicle (depicted as a cone). \textit{Top:} The system configurations and constraints at $t = 12.5$. \textit{Top, center}: The positional trajectories of the system. \textit{Bottom, center}: The distance inequality constraints, and the actual inter-vehicle distances in the system response. \textit{Bottom}: The visibility constraint and the corresponding visibility functions in the system response. A video of the simulation can be accessed through~\cite{greiff2021video}.} \label{fig:complexleaderfollower}
	\end{figure}
	In addition to these equality constraints, we form several inequality constraints. Vehicle $i=2$ is to maintain visibility of the vehicle $i=1$, and we pose a distance inequality constraint $\mathcal{I}_{12}^{d}$ with very narrow bounds, $d_{12}^{\mathrm{min}}=0.8$ and $d_{12}^{\mathrm{max}}=1.0$ (m), and a visibility constraint with a very tight angle bound $\Delta\theta_{12}=0.1$ (rad). Combined, the constraints define  an extremely narrow feasible region, illustrated as the intersection of the blue regions in Fig~\ref{fig:complexleaderfollower}. In addition to these constraints, we seek to enforce distance inequality constraints between all of the vehicles. To this end, we define two additional inequality constraints enforcing lower and upper bounds on the distance $\|(x_j,y_j)-(x_3,y_3)\|$ for $j \in\{1,2\}$, characterized by  the bounds $d_{13}^{\mathrm{min}} = 1.05$, $d_{13}^{\mathrm{max}} = 1.25$, $d_{23}^{\mathrm{min}} = 1.3$, and $d_{23}^{\mathrm{max}} = 1.5$. In total, this results in three kinematic equality constraints, two  equality constraints for reference trajectories, and seven time-invariant inequality constraints. By the same reasoning as in Lemma~\ref{lem:visconst}, we expect any initially feasible solution to remain feasible, as is observed in Figure~\ref{fig:complexleaderfollower}. Despite the fast and erratic movement of first vehicle ($i=1$), the constraints are satisfied at all times, and this vehicle remains visible to vehicle $i=2$ (it always resides in the union of the blue regions), and all of the distance inequality constraints are satisfied at all times. {\footnote{Several simulation videos on constrained motion coordination of multiple homogeneous or heterogeneous vehicles can be found in \url{https://youtu.be/wFx6x7Mep74}.}}
	
	In this particular example, there exist many feasible motions, and the control signals are selected so as to minimize a slew-rate constraint on the system velocities $\min_{w_l}\|(\mathrm{d}^2/\mathrm{d} t^2){P}(t)\|_{2}^2$. In other words, we seek to keep system velocities  as constant as possible when not switching vector fields. The intuition is that this can help with following moving vehicles with minimal switching, but we emphasize many other criteria for the selection of $w_l$ can be considered for numerical implementations.

	\section{Heterogeneous multi-vehicle coordination with time-varying constraints} \label{sec:time_varying_coordination}
	In this section, we develop a general framework for treating constrained motion coordination control with time-varying constraints, including mixed equality and inequality task constraints. First, a general  temporal viability theory will be developed, followed by controlled temporal invariance for control affine systems. Then, we present the feasibility theorem and motion generation for multiple   heterogeneous vehicles with mixed time-varying coordination constraints. 
	
	\subsection{Temporal viability theory for time-varying constraints} \label{sec:temporal_viability}
	In this subsection \footnote{Parts of results in Section \ref{sec:temporal_viability}-\ref{sec:temporal_viability_regulation} are presented in \cite{greiff2019temporal}; here we present full details and proofs of the main results. } we present several general concepts on temporal viability with time-varying constraints, and develop novel results on controlled temporal invariance for dynamical control systems. 
	
	Consider the following time-varying control dynamical system described by a general ordinary differential equation
	\begin{align} \label{eq:control_system}
		\dot x(t) = f(x(t), u(t), t),
	\end{align}
	where $x \in \mathbb{R}^n$ is the state variable, $u \in \mathbb{R}^l$ is the control input vector, and $f \in \mathbb{R}^n$ is a (possibly time-varying) vector field of the state $x(t)$, control input $u(t)$ and the time $t$. 
	
	Following the conventional definition of viability theory and viable set \cite{aubin2009viability},  we define \textit{temporal} viability and \textit{time-varying} viable set. 
	
	\begin{definition}
		(\textbf{Temporal viability and time-varying viable set}) Consider a control system described by a differential equation $\dot x(t) = f(x(t), t)$  in \eqref{eq:control_system}. A subset $\mathcal{F}(t) \subset \mathbb{R}^n$ enjoys the \textit{temporal} viability property for the system \eqref{eq:control_system} under the time interval $t \in [\tilde t, \bar t)$ if for every initial state $x(\tilde t) \in \mathcal{F}(\tilde t)$ at time~$\tilde t$, there exists at least one solution to the system
		starting at $x(\tilde t)$ which is viable in the time interval $[\tilde t, \bar t)$ in the sense that 
		\begin{align}
			\forall t \in [\tilde t, \bar t), \,\,\, x(t) \in \mathcal{F}(t).
		\end{align}
		
		The set $\mathcal{F}(t)$ is then termed a \textit{time-varying} viable set for the dynamical control system \eqref{eq:control_system} over the time interval $[\tilde t, \bar t)$. 
	\end{definition}
	
	In the following, without loss of generality we will assume the initial time $\tilde t = 0$. Further, the time $\bar t$ then denotes the maximum existence time until which the  solution of the dynamical system  \eqref{eq:control_system} can be extended.  If the solution of the dynamical system \eqref{eq:control_system} can be extended to infinity, we may also consider all the positive times, i.e., $\bar t \rightarrow \infty$. 
	{\footnote{When a differential equation that models a dynamical control system involves discontinuous right-hand side (e.g., switching controls), we understand its solutions in the sense of Filippov \cite{cortes2008discontinuous}. In the case of discontinuous dynamic control systems, the theorem on controlled set invariance is generalized to weak invariance, and the developed main results still apply \cite{cortes2008discontinuous}.  }}
	
	We  define a temporal distance function for a point $y$ to a   time-varying set $\mathcal{F}(t)$ as $d^t(y, \mathcal{F}(t)) := \inf\limits_{z \in \mathcal{F}(t)} \|y-z(t)\|$ at time $t$.  As in \cite{aubin2009set} and \cite{hauswirth2018time}, we define the temporal contingent cone as follows.
	
	\begin{definition} \label{def:temporal_contingent}
		(\textbf{Temporal contingent cone}) Let $\mathcal{F}(\hat t)$ be a closed subset of $\mathcal{X} \subset \mathbb{R}^n$ and $x(t)$ belongs to $\mathcal{F}(t)$ at time $t = \hat t$. The temporal
		contingent cone \footnote{We remark that, as has been shown in \cite{hauswirth2018time}, the temporal contingent cone is not necessarily a cone. Similarly to \cite{hauswirth2018time}, 
			we follow the convention and term it `temporal contingent cone' as it reduces to the standard contingent cone in Definition \ref{def:contingent_cone} when the set $\mathcal{F}$ is time independent.} to $\mathcal{F}(t)$ at $x(t)$ and time $\hat t$ is the set
		\begin{align}
			T_\mathcal{F}^{\hat t}(x) = \left\{v \in \mathcal{X} |\,\,\,\,  \liminf\limits_{h \rightarrow 0^+} \frac{d^{\hat t}\left(x (\hat t)+hv, \mathcal{F}(\hat t +h)\right)}{h} =0  \right\}. %_{h \right 0^+}  % \frac{d_K(x +hv)}{h} =0 
		\end{align}
	\end{definition}
	
	% \{Note I: it has been shown in \cite{hauswirth2018time} that the temporal contingent cone is not necessarily a cone---probably we may change it to a different name...\}
	
	% \{Note II: In the case that the set $\mathcal{F}$ is  time-invariant,  the contingent cone is irrelevant to the norm used in the definition of a point-set distance function. I think this also holds true for the case of temporal contingent cone---but we may need a proof to show this. Otherwise we can simply use the standard Euclidean 2-norm.\}

	Now consider a   time-varying  set $\mathcal{F}$ parameterized by a set of time-varying function inequalities of the form
	\begin{align} \label{eq:parameterize_set_F}
		\mathcal{F}(t) = \{(x(t),t) | \,\,| \,\,g_i(x(t), t) \leq 0, i = 1, \cdots, m\},
	\end{align}
	where $g_i(x(t), t): \mathbb{R}^n \times \mathbb{R} \rightarrow \mathbb{R}$. 
	We impose the following assumption on each time-varying function $g_i(x(t), t)$ for deriving a well-defined temporal contingent cone with favourable properties. 
	\begin{assumption} \label{assum:function_g}
		The function   $g_i(x(t), t): \mathbb{R}^n \times \mathbb{R} \rightarrow \mathbb{R}$ is a $C^1$  function of the state variable $x(t)$, and is Lipschitz continuous with respect to  the time variable~$t$.  
	\end{assumption}
	
	Denote by $\chi(x, t)$ the time-varying set of \textit{active} constraints at time $t$: $\chi(x, t) = \{(i, t) \,\,| \,\,g_i(x, t) = 0\}$.
	The following result characterizes an explicit formula of temporal contingent cone $T_\mathcal{F}^{t}(x)$ when the set $\mathcal{F}(t)$ is parameterized by  a set of time varying inequalities  as in \eqref{eq:parameterize_set_F} under Assumption \ref{assum:function_g}.
	\begin{lemma} \label{lemma_temporal_cone}
		Consider the time-varying set $\mathcal{F}(t)$ parameterized by a set of time-varying inequalities with functions $g_i(x(t), t)$ as in~\eqref{eq:parameterize_set_F}. Assume that the Jacobian matrix  $\nabla_x^\top g_\chi(x(t), t)$ is of full rank for the points $(x(t), t)$ with the stacked active constraint function vector $g_\chi(x(t), t) = [\cdots, g_i(x(t), t)\cdots]^\top, i \in\chi(x(t), t)$. Then the temporal contingent cone is described by 
		\begin{align} \label{eq:inactive_set_cone}
			T_\mathcal{F}^{t}(x) = \mathbb{R}^n, \,\,\,\,\text{if}\,\,\chi(x, t) = \emptyset
		\end{align}
		% and 
		% \begin{align} \label{eq:active_set_cone}
		%     T_\mathcal{F}^{t}(x) = \left\{v \middle| \begin{bmatrix}
		%      \nabla_x g(x(t), t) \\
		%       \nabla_t g(x(t), t)
		%     \end{bmatrix}^T 
		%     \begin{bmatrix}
		%      v \\
		%       1
		%     \end{bmatrix} \leq 0 \right\} \nonumber \\
		%     \,\,\,\,\forall \{(x(t), t)|  g(x(t), t) =0\}
		% \end{align}
		
		% \begin{align}
		%     T_\mathcal{F}^{t}(x) = \mathbb{R}^n, \,\,\,\,\forall \{(x(t), t)|  g(x(t), t) <0\}
		% \end{align}
		and 
		\begin{align} \label{eq:active_set_cone2}
			T_\mathcal{F}^{t}(x) = \left\{v \middle| \begin{bmatrix}
				(\nabla_x^\top g_\chi(x(t), t)),
				\nabla_t g_\chi(x(t), t)
			\end{bmatrix}  
			\begin{bmatrix}
				v \\
				1
			\end{bmatrix} \leq 0 \right\}, \nonumber \\
			\,\,\,\,\text{if}\,\,\chi(x, t) \neq \emptyset.
		\end{align}
		
	\end{lemma}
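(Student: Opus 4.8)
The plan is to reduce the claim to a local, first-order analysis of the defining inequalities, exactly mirroring the static case \eqref{viable_condition_inequality} but carrying the explicit time dependence through the Lipschitz assumption on $g_i$ in the $t$-variable. First, the case $\chi(x,t) = \emptyset$: if no constraint is active at $(x,t)$, then by continuity of the $g_i$ in both arguments (Assumption~\ref{assum:function_g} gives $C^1$ in $x$ and Lipschitz in $t$, hence joint continuity), there is an $h_0 > 0$ and a neighborhood of $x$ on which $g_i(\cdot, t+h) < 0$ for all $i$ and all $h \in [0,h_0)$. Thus for any $v \in \mathbb{R}^n$ the point $x + hv$ lies in $\mathcal{F}(t+h)$ for small $h$, so $d^{t}(x+hv, \mathcal{F}(t+h)) = 0$ and the $\liminf$ in Definition~\ref{def:temporal_contingent} is trivially $0$. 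Hence $T_\mathcal{F}^{t}(x) = \mathbb{R}^n$, which is \eqref{eq:inactive_set_cone}.

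The substantive case is $\chi(x,t) \neq \emptyset$. Write $g_\chi$ for the stacked vector of active functions; by hypothesis $\nabla_x^\top g_\chi(x,t)$ has full row rank. I would prove the two inclusions separately. For ``$\subseteq$'': suppose $v \in T_\mathcal{F}^{t}(x)$, so there is a sequence $h_k \to 0^+$ and points $z_k \in \mathcal{F}(t+h_k)$ with $\|x + h_k v - z_k\| = o(h_k)$. For each active index $i$, expand $g_i$ along the segment: using the $C^1$-in-$x$ property, $g_i(x + h_k v, t + h_k) = g_i(z_k, t+h_k) + \nabla_x g_i(\cdot)\cdot(x + h_k v - z_k) = g_i(z_k,t+h_k) + o(h_k) \le o(h_k)$ since $z_k \in \mathcal{F}$; and by a first-order Taylor expansion of $t \mapsto g_i(x,t)$ along $t$ (Lipschitz continuity plus differentiability a.e., or better, assuming $g_i$ is also $C^1$ in $t$ as the statement's use of $\nabla_t g_\chi$ implicitly requires) together with $C^1$-ness in $x$, $g_i(x + h_k v, t+h_k) = g_i(x,t) + h_k\big(\nabla_x g_i(x,t)^\top v + \nabla_t g_i(x,t)\big) + o(h_k) = h_k\big(\nabla_x g_i(x,t)^\top v + \nabla_t g_i(x,t)\big) + o(h_k)$, because $g_i(x,t) = 0$ for $i$ active. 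Combining, $h_k\big(\nabla_x g_i^\top v + \nabla_t g_i\big) + o(h_k) \le o(h_k)$; dividing by $h_k$ and letting $k \to \infty$ gives $\nabla_x g_i(x,t)^\top v + \nabla_t g_i(x,t) \le 0$ for every $i \in \chi$, which is precisely the matrix inequality in \eqref{eq:active_set_cone2}.

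For the reverse inclusion ``$\supseteq$'', let $v$ satisfy $\nabla_x^\top g_\chi(x,t)\,v + \nabla_t g_\chi(x,t) \le 0$ componentwise. I would construct an explicit viable point. The rank assumption on $\nabla_x^\top g_\chi$ lets me pick, by the implicit function theorem applied to the map $(s, h) \mapsto g_\chi(x + hv + \nabla_x^\top g_\chi(x,t)\, s,\, t+h)$, a correction $s(h) = o(h)$ (in fact $s(h)$ can be taken $O(h)$ with the linear term killed by the inequality, but only the inactive-at-the-limit directions need correcting: strictly negative active constraints stay negative for small $h$, so only those with $\nabla_x g_i^\top v + \nabla_t g_i = 0$ need the correction, and on those the Jacobian restricted appropriately is still full rank) so that $x + hv + \nabla_x^\top g_\chi(x,t)\,s(h) \in \mathcal{F}(t+h)$ for all small $h > 0$. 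Then $d^{t}(x+hv, \mathcal{F}(t+h)) \le \|\nabla_x^\top g_\chi(x,t)\,s(h)\| = o(h)$, so the $\liminf$ is $0$ and $v \in T_\mathcal{F}^{t}(x)$. I expect the main obstacle to be the regularity bookkeeping in this direction: cleanly handling the mixture of active constraints that are ``tangent'' (equality in the limiting inequality) versus ``strictly entering'' (strict inequality), so that the implicit-function-theorem correction is applied only where needed and the full-rank hypothesis genuinely suffices. This is the analogue of the standard nondegeneracy argument behind \eqref{viable_condition_inequality}, adapted to the moving set, and it is essentially the content of the corresponding lemmas in \cite{di1994contingent} and \cite{hauswirth2018time}; I would cite those for the delicate part and give the Taylor-expansion computation in full.
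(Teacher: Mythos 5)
Your proposal is correct. Note first that the paper does not actually prove Lemma~\ref{lemma_temporal_cone}: it states only that the detailed calculation of the contingent cone follows similarly to that of \cite{hauswirth2018time} and is omitted, so your write-up supplies an argument the paper defers entirely to the literature. What you give is essentially the standard constraint-qualification argument from that reference: the inactive case by joint continuity, the inclusion ``$\subseteq$'' by a first-order expansion along the sequence realizing the $\liminf$, and ``$\supseteq$'' by a correction enabled by the full-row-rank hypothesis. Two small remarks. First, you are right to flag that Lipschitz continuity in $t$ (Assumption~\ref{assum:function_g}) does not by itself furnish the derivative $\nabla_t g_\chi$ appearing in \eqref{eq:active_set_cone2}; the remark following the lemma in the paper concedes that one may further suppose $g_i$ is $C^1$ in both arguments, so your added hypothesis matches the paper's intent. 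Second, in the reverse inclusion the correction term should be $(\nabla_x^\top g_\chi)^\top s$ (an $n$-vector) rather than $\nabla_x^\top g_\chi \, s$ as written; and an alternative that avoids the tangent-versus-strictly-entering bookkeeping is to use full row rank to produce $w$ with $\nabla_x^\top g_\chi(x,t)\, w = -\mathbf{1}$, observe that $v+\epsilon w$ satisfies all active first-order inequalities strictly so that $x+h(v+\epsilon w)\in\mathcal{F}(t+h)$ for all small $h>0$, conclude $d^{t}(x+hv,\mathcal{F}(t+h))\le \epsilon h\|w\|$, and let $\epsilon\to 0^+$. Either route is valid, and both rest on the same full-rank (constraint-qualification) hypothesis stated in the lemma.
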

	
	A detailed calculation of the contingent cone and the proof follows similarly to that of \cite{hauswirth2018time} and is omitted here. 
	% The case with active constraint (the set under \eqref{eq:active_set_cone}) has also been discussed in \cite{hauswirth2018time}. 
	% If a subset of the inequality constraints of the vector function $g$ becomes active, then the formula \eqref{eq:active_set_cone} in Lemma \ref{lemma_temporal_cone} applies to only the subset of active inequality constraints. 

	\begin{remark}
		A non-empty temporal contingent cone $T_\mathcal{F}^{t}(x)$   for all time $t$ is a necessary condition to ensure the existence of the control input $u(t)$ associated with the time-varying vector field $f(x(t), u(t), t)$. A sufficient condition to guarantee non-empty $T_\mathcal{F}^{t}(x)$ along the solution $x(t)$ and time $t$ is the \textit{forward Lipschitz continuity} of the set $\mathcal{F}(t)$ with respect to time $t$ (see \cite[Theorem 1]{hauswirth2018time}). According to \cite[Proposition 4]{hauswirth2018time}, a sufficient condition to ensure the forward Lipschitz continuity of the set $\mathcal{F}(t)$ is (i) the Jacobian matrix  $\nabla_x^\top g_\chi(x(t), t)$ has full rank, and (ii) the time-varying function $g_i(x(t, t))$ is Lipschitz continuous in $t$. In this paper we may occasionally further suppose $g_i(x(t), t)$ is a $C^1$ function of both the state $x(t)$ and time $t$ whose Jacobian has full rank,   which automatically guarantees the second condition. By imposing Assumption \ref{assum:function_g}, in the following we always ensure  that the set $\mathcal{F}(x(t),t)$ parameterized by a set of  inequality/equality constraints of the time-varying function $g_i(x(t), t)$ is forward Lipschitz continuous. 
	\end{remark}

	\subsection{Controlled temporal invariance}  \label{sec:temporal_viability_regulation}
	We first present the following theorem on controlled temporal invariance with time-varying constraints, which  extends the Nagumo theorem and  standard results in controlled invariance theory (\cite{blanchini1999set}).
	\begin{theorem} \label{theorem:invariant_temporal_set}
		(\textbf{Controlled temporal invariant set})  Consider a forward Lipschitz continuous time-varying set $\mathcal{F}(t)$ parameterized as in \eqref{eq:parameterize_set_F}. Then the set $\mathcal{F}(t)$ is positively temporal invariant under the dynamical control system $\dot x(t)$, $t \in [0, \bar t)$, of \eqref{eq:control_system} if $x(0) \in \mathcal{F}(0)$, and $\dot x(t) =  f(x(t), u(t), t) \in T_\mathcal{F}^{t}(x)$ with the temporal contingent cone $T_\mathcal{F}^{t}(x)$ defined in \eqref{eq:inactive_set_cone} and \eqref{eq:active_set_cone2} of Lemma \ref{lemma_temporal_cone}. 
		Equivalently, to guarantee the controlled temporal invariance of the set $\mathcal{F}(t)$, the (possibly time-varying) vector function $f$ should satisfy 
		\begin{align}
			f(x(t), u(t), t) \in \mathbb{R}^n, \,\,\, \text{if}\,\,\chi(x, t) = \emptyset
		\end{align}
		and
		\begin{align} \label{eq:control_input_general}
			\nabla_x^\top g_i(x(t), t) f(x(t), u(t), t) + \nabla_t g_i(x(t), t) \leq 0, \nonumber \\ \,\,\, \forall i \in \chi(x, t), \forall t \in [0, \bar t). 
		\end{align}
	\end{theorem}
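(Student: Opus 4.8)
The plan is to reduce the statement to the already-established temporal Nagumo-type machinery by combining Lemma \ref{lemma_temporal_cone} (the explicit formula for the temporal contingent cone) with the characterization of temporal viability in terms of membership of $\dot x(t)$ in $T_\mathcal{F}^{t}(x)$. The theorem has two parts that amount to the same thing: (i) the geometric condition $f(x(t),u(t),t)\in T_\mathcal{F}^{t}(x)$ guarantees positive temporal invariance of $\mathcal{F}(t)$, and (ii) this geometric condition is equivalent, via Lemma \ref{lemma_temporal_cone}, to the analytic inequality \eqref{eq:control_input_general} on the active constraints. So the proof naturally splits into an ``invariance from contingent-cone membership'' step and a ``translate the cone into inequalities'' step.

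First I would handle the equivalence of the two formulations. When $\chi(x,t)=\emptyset$, i.e.\ $x(t)$ is in the interior of $\mathcal{F}(t)$, Lemma \ref{lemma_temporal_cone} gives $T_\mathcal{F}^{t}(x)=\mathbb{R}^n$, so the condition $f\in T_\mathcal{F}^{t}(x)$ is vacuous, matching the first displayed condition in the theorem. When $\chi(x,t)\neq\emptyset$, Lemma \ref{lemma_temporal_cone} gives
\begin{align*}
	T_\mathcal{F}^{t}(x) = \left\{v \,\middle|\, \nabla_x^\top g_\chi(x(t),t)\, v + \nabla_t g_\chi(x(t),t) \leq 0\right\},
\end{align*}
and substituting $v = f(x(t),u(t),t)$ and reading the vector inequality componentwise over $i\in\chi(x,t)$ yields exactly \eqref{eq:control_input_general}. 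This step is essentially bookkeeping once Lemma \ref{lemma_temporal_cone} is granted, which the excerpt permits.

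Second, for the invariance claim proper, I would invoke the time-varying analogue of the Nagumo theorem (as developed in \cite{hauswirth2018time} and in the temporal viability framework of Section \ref{sec:temporal_viability}): under forward Lipschitz continuity of $\mathcal{F}(t)$ and the standing regularity Assumption \ref{assum:function_g}, the trajectory $x(t)$ of \eqref{eq:control_system} starting from $x(0)\in\mathcal{F}(0)$ satisfies $x(t)\in\mathcal{F}(t)$ for all $t\in[0,\bar t)$ if and only if $\dot x(t)\in T_\mathcal{F}^{t}(x)$ along the solution. The forward direction is the standard argument: if at some first exit time $t^\star$ one has $x(t^\star)\in\mathrm{bnd}(\mathcal{F}(t^\star))$ with some constraint $g_i$ active, then $\frac{\mathrm d}{\mathrm dt}g_i(x(t),t)\big|_{t^\star} = \nabla_x^\top g_i\,\dot x(t^\star) + \nabla_t g_i \le 0$ by \eqref{eq:control_input_general}, so $g_i$ cannot become positive — contradicting the assumption that the trajectory leaves $\mathcal{F}$. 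The forward Lipschitz continuity hypothesis is what guarantees the temporal contingent cone is nonempty and the exit-time argument is not vacuous, paralleling \cite[Theorem 1]{hauswirth2018time}.

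The main obstacle, and the place where care is genuinely needed rather than routine, is the passage from the pointwise derivative inequality to actual trapping of the trajectory in the \emph{time-varying} set — the subtlety absent in the classical (time-invariant) Nagumo theorem. One must rule out the trajectory ``slipping out'' through a boundary that is itself moving, and the clean way to do this is exactly the temporal contingent cone formulation: membership $\dot x\in T_\mathcal{F}^{t}(x)$ already encodes the correct comparison of the vector field's motion against the set's motion. Thus I would lean on the temporal viability theorem (stated in the surrounding subsection and attributed to \cite{hauswirth2018time, aubin2009set}) as the workhorse, so that Theorem \ref{theorem:invariant_temporal_set} becomes its specialization to sets parameterized by smooth inequalities, with Lemma \ref{lemma_temporal_cone} supplying the concrete cone formula. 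A secondary point to be explicit about is that the full-rank hypothesis on $\nabla_x^\top g_\chi$ from Lemma \ref{lemma_temporal_cone} is in force, ensuring the cone formula \eqref{eq:active_set_cone2} is valid at the relevant boundary points.
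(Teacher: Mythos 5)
Your proposal is correct and follows essentially the same route as the paper's proof: both rest on Lemma \ref{lemma_temporal_cone} for the explicit cone formula, invoke the temporal Nagumo/viability condition $\dot x(t)\in T_\mathcal{F}^{t}(x)$ from the framework of \cite{hauswirth2018time}, use forward Lipschitz continuity to guarantee the cone is nonempty, and split into the inactive case ($T_\mathcal{F}^{t}(x)=\mathbb{R}^n$) versus the active case where substituting $v=f$ into \eqref{eq:active_set_cone2} yields \eqref{eq:control_input_general}. Your sketch of the first-exit-time argument is a slightly more explicit elaboration than the paper offers, but it is not a different approach.
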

	
	\begin{proof}
		The proof is based on the explicit formulas in Lemma~\ref{lemma_temporal_cone} that characterize the  temporal contingent cone set as a function  of  time. Forward Lipschitz continuity of the set $\mathcal{F}(t)$, which is guaranteed by Assumption \ref{assum:function_g} on the constraint function  $g_i(x(t), t)$ and full rank of $\nabla_x g_\chi(x(t), t)$ as computed from the active constraint functions $g_\chi$ at all time $t$, implies the temporal contingent cone is a non-empty set:  
		\begin{align}
			T_\mathcal{F}^{t}(x) \neq \emptyset, \forall t \in [0, \bar t). 
		\end{align}
		The necessary and sufficient condition to ensure that the time-varying set $\mathcal{F}(t)$ is viable under the time-varying control system $\dot x(t) =  f(x(t), u(t), t)$ is 
		\begin{align} \label{eq:necessary_suffi_condition}
			f(x(t), u(t), t) \cap T_\mathcal{F}^{t}(x)  \neq \emptyset, \forall t \in [0, \bar t). 
		\end{align}
		When the time-varying constraint functions are all inactive at time $t$ (i.e., $\chi(x, t) = \emptyset$ at time $t$), the temporal contingent cone is the whole space  $T_\mathcal{F}^{t}(x) = \mathbb{R}^n$ at time $t$, which implies that the time-varying vector function can be any vector $f(x(t), u(t), t) \in \mathbb{R}^n$. 
		When a set of constraint functions become active at time $t$, the temporal contingent cone formula in \eqref{eq:active_set_cone2} renders an equivalent formulation as in  \eqref{eq:control_input_general} to ensure that the viability condition of  \eqref{eq:necessary_suffi_condition} is always satisfied.  The control input \eqref{eq:control_input_general} provides  corrective actions that regulate the states of the dynamical system $\dot x(t) =  f(x(t), u(t), t) \in T_\mathcal{F}^{t}(x)$ to be controlled temporal  invariant in the set $\mathcal{F}(t)$.   
	\end{proof}

	When specializing the temporal viability theory to  control affine systems  
	of the   general form 
	\begin{align} \label{eq:control_affine_system}
		\dot p(t) = f_0(p(t)) + \sum_{j = 1}^l f_j(p(t)) u_j(t),
	\end{align}
	one obtains the following theorem on controlled temporal viability.
	\begin{lemma} \label{theorem:invariant_temporal_set_affine}
		(\textbf{Temporal viability of control affine systems})  
		\begin{itemize}
			\item (\textbf{Time-varying inequality constraints})
			Consider a forward Lipschitz continuous time-varying set $\mathcal{F}(t)$ parameterized  
			by a set of time-varying function inequalities of the form $
			\mathcal{F}(t) = \{(p(t),t) | \,\,| \,\,g_i(p(t), t) \leq 0, i = 1, \cdots, m\}
			$. 
			The set $\mathcal{F}(t)$ is controlled temporal viable under the control affine system $\dot p(t)$, $t \in [0, \bar t)$, of \eqref{eq:control_affine_system} if $p(0) \in \mathcal{F}(0)$ and the control input $u_j$ satisfies  (whenever a set of constraints is active):
			\begin{align} \label{eq:viable_condition_affine}
				& \sum_{j = 1}^l u_j   \nabla_p^\top g_i(p(t), t)  f_j(p(t))  \nonumber \\
				& \leq  -\nabla_p^\top g_i(p(t), t) f_0(p(t))    -\nabla_t g_i(p(t), t),  \nonumber \\ &\forall i \in \chi(x, t), \forall t \in [0, \bar t). 
			\end{align}
			and when all inequality constraints are inactive the control input takes arbitrary value in the sense that $u_j \in \mathbb{R}, j = 1, \cdots, l$. 
			
			\item (\textbf{Time-varying equality constraints}) Consider a forward Lipschitz continuous time-varying set $\mathcal{F}(t)$ parameterized by time-varying equality constraint of  $\mathcal{F}(t) = \{(p(t), t) | g(p(t), t) = 0\}$. The set $\mathcal{F}(t)$ is controlled temporal viable under the control affine system $\dot p(t)$ of \eqref{eq:control_affine_system} if $p(0) \in \mathcal{F}(0)$ and the control input $u_j$ satisfies    
			\begin{align} \label{eq:viable_condition_affine_equality}
				&\sum_{j = 1}^l u_j \nabla_p^\top g(p(t), t) f_j(p(t))   \nonumber \\ 
				&=  -\nabla_p^\top g(p(t), t) f_0(p(t)) -\nabla_t g(p(t), t).
			\end{align}
		\end{itemize}
		
	\end{lemma}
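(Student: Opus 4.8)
The plan is to obtain Lemma~\ref{theorem:invariant_temporal_set_affine} as a direct specialization of Theorem~\ref{theorem:invariant_temporal_set}, the only real work being to push the control-affine structure \eqref{eq:control_affine_system} through the invariance condition \eqref{eq:control_input_general} and to separate the control-dependent part using linearity of the pairing $\langle\cdot,\cdot\rangle$ in its second argument. First I would note that the set $\mathcal{F}(t)$ is forward Lipschitz continuous by hypothesis, so Lemma~\ref{lemma_temporal_cone} supplies explicit formulas for the temporal contingent cone, and Theorem~\ref{theorem:invariant_temporal_set} applies verbatim: $\mathcal{F}(t)$ is controlled temporal invariant iff $p(0)\in\mathcal{F}(0)$ and $\dot p(t)=f_0(p(t))+\sum_{j=1}^l f_j(p(t))u_j(t)\in T_\mathcal{F}^{t}(p)$ for all $t\in[0,\bar t)$.

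For the inequality part, when $\chi(p,t)=\emptyset$ the cone is $\mathbb{R}^n$ by \eqref{eq:inactive_set_cone}, so any $u_j\in\mathbb{R}$ is admissible. When $\chi(p,t)\neq\emptyset$, I would substitute the control-affine expression for $f$ into \eqref{eq:control_input_general} and use $\nabla_p^\top g_i(p,t)\,f = \nabla_p^\top g_i(p,t)\,f_0(p) + \sum_{j=1}^l u_j\,\nabla_p^\top g_i(p,t)\,f_j(p)$; collecting the $u$-linear terms on the left and moving the drift and time-derivative terms to the right yields exactly \eqref{eq:viable_condition_affine}, one scalar inequality per active index $i\in\chi(p,t)$. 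This step is purely algebraic and should be a couple of lines.

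For the equality part, the set $\mathcal{F}(t)=\{(p,t)\mid g(p,t)=0\}$ is a time-varying level set, and the appropriate analogue of Lemma~\ref{lemma_temporal_cone} (equivalently, the equality specialization of Lemma~\ref{lemma:invariant_set_control}, cf.\ the remark following Theorem~\ref{thm:equality_inequality_1}) gives the temporal contingent cone as the affine subspace $T_\mathcal{F}^{t}(p)=\{v\mid \nabla_p^\top g(p,t)\,v + \nabla_t g(p,t)=0\}$; this subspace is nonempty precisely because $\nabla_p g(p,t)$ has full rank. Imposing $\dot p(t)\in T_\mathcal{F}^{t}(p)$ and again inserting the control-affine form of $\dot p$ produces \eqref{eq:viable_condition_affine_equality}. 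I would also remark that in this case the condition is simultaneously necessary and sufficient, since every point of $\mathcal{F}(t)$ is a boundary point of the defining constraint.

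The main obstacle I anticipate is the equality case: one is tempted to replace $g=0$ by the pair $g\le 0$, $-g\le 0$ and reuse the inequality result, but then the stacked Jacobian $[\nabla_p g;\,-\nabla_p g]$ is rank-deficient and the full-rank hypothesis of Lemma~\ref{lemma_temporal_cone} fails to hold literally. The clean fix is to recompute the temporal contingent cone for an equality-defined set directly (the tangent hyperplane to $\{g(\cdot,t)=0\}$ translated by $-\nabla_t g$), which follows from an implicit-function-theorem argument in the same spirit as the computation in \cite{hauswirth2018time} already cited for Lemma~\ref{lemma_temporal_cone}; apart from this care, the remaining steps are routine substitutions and do not require new ideas.
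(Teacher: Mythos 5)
Your proposal is correct and follows the same route the paper takes: the paper's entire justification is the one-line remark that the lemma ``can be obtained as a consequence of Theorem~\ref{theorem:invariant_temporal_set},'' and your substitution of the control-affine field $f_0+\sum_{j}f_ju_j$ into \eqref{eq:control_input_general}, followed by collecting the $u$-linear terms, is exactly the intended specialization for the inequality case. The one place where you go beyond the paper is the equality case: your observation that rewriting $g=0$ as the pair $g\le 0$, $-g\le 0$ produces a rank-deficient stacked Jacobian, so the full-rank hypothesis of Lemma~\ref{lemma_temporal_cone} cannot be invoked literally, is a genuine subtlety the paper does not address; your fix (computing the temporal contingent cone of the level set directly as the affine subspace $\{v\mid \nabla_p^\top g\,v+\nabla_t g=0\}$) is valid, and an even more elementary alternative is to note that \eqref{eq:viable_condition_affine_equality} is precisely $\tfrac{\mathrm{d}}{\mathrm{d}t}g(p(t),t)=0$, which together with $g(p(0),0)=0$ gives $g\equiv 0$ along the trajectory without any contingent-cone machinery.
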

	
	The above lemma can be obtained as a consequence of Theorem \ref{theorem:invariant_temporal_set}, which also recovers the main result on controlled invariance of time-varying algebraic sets  in a recent paper \cite{yuno2020invariance}. 
	With the preparation of temporal viability theory, we are now ready to develop the coordination control framework with time-varying constraints. 
	
	%\subsection{Special Case: Viability Regulation under Time-varying Equality Constraints}
	% Now we consider a special case that the  $\mathcal{F}(t)$ is parameterized by time-varying \textit{equality} constraints:
	% \begin{align} \label{eq:parameterize_set_F_equality}
	%     \mathcal{F}(t) = \{(p(t), t) | g(p(t), t) = 0\},
	% \end{align}
	% where $g: \mathbb{R}^n \times \mathbb{R} \rightarrow \mathbb{R}^m$ is a $C^1$ vector function of the state $p(t)$ and is Lipschitz continuous with respect to the time $t$. 
	% Following Theorem \ref{theorem:invariant_temporal_set_affine} we have the following result.
	
	% \begin{lemma}
	% Consider a forward Lipschitz continuous time-varying set $\mathcal{F}(t)$ parameterized by time-varying equality constraint in \eqref{eq:parameterize_set_F_equality}. The set $\mathcal{F}(t)$ is controlled temporal viable under the control affine system $\dot p(t)$ of \eqref{eq:control_affine_system} if $p(0) \in \mathcal{F}(0)$ and the control input $u_j$ satisfies    
	% \begin{align} \label{eq:viable_condition_affine_equality}
	% &\sum_{j = 1}^l u_j \nabla_p g(p(t), t) ^T f_j(p(t))   \nonumber \\ 
	% &=  -\nabla_p g(p(t), t) ^T f_0(p(t)) -\nabla_t g(p(t), t).
	% \end{align}
	% \end{lemma}
	
	\subsection{Coordination under time-varying constraint tasks}
	\label{sec:time_varying_coordination_constraints}
	% In this section we formulate motion constraints from coordination tasks using distributions/codistributions. We consider two types of constraints, \textit{equality} constraints and \textit{inequality} constraints, which both  involve inter-vehicle geometric relationships, in modelling a general form of coordination tasks. 
	
	\subsubsection{Coordination with time-varying equality constraints}
	In the coordination task, a set of time-varying equality constraints $\Phi$ is associated with the edge set of the coordination graph, denoted as $\Phi_\mathcal{E} = \{\Phi_{ij}(p_i, p_j, t)\}_{(i,j)}$ with  $ {(i,j)} \in \mathcal{E}$. For each edge $(i,j)$, $\Phi_{ij}$ is a time-varying continuously differentiable vector function of the states $p_i$ and $p_j$  defining time-varying coordination constraints between the vehicle pair $i$ and $j$.
	The constraint for edge $(i,j)$ is enforced if $\Phi_{ij} (p_i, p_j, t) = 0$.    To satisfy the equality constraint for edge $(i,j)$,  it should hold that
	\begin{align} \label{eq:constraints_ij}
		\frac{\text{d}}{\text{d}t} \Phi_{ij} = \frac{\partial \Phi_{ij}}{\partial p_i} \dot p_i + \frac{\partial \Phi_{ij}}{\partial p_j} \dot p_j +  \frac{\partial \Phi_{ij}}{\partial t}   = 0.
	\end{align}
	
	By collecting the equality constraints for all the edges and defining an overall  constraint denoted by $\Phi_\mathcal{E}  = [\cdots, \Phi_{ij}^{\top}, \cdots]^{\top} = \bf 0$, the coordination feasibility with equality constraints means that the constraints $\Phi_\mathcal{E}(P, t)$ are strictly satisfied along the  trajectories of all vehicles in time by the following
	\begin{align} \label{eq:constraints_all}
		\frac{\text{d}}{\text{d}t} \Phi =\frac{\partial \Phi}{\partial P} \dot P +  \frac{\partial \Phi}{\partial t} = 0,
	\end{align}
	
	Now we group all the constraints for all the edges by writing down a compact form $ T_E = - [\cdots, (\frac{\partial \Phi_{ij}}{\partial t})^{\top}, \cdots]^{\top}$ and identify a codistribution matrix $\Omega_E$  associated with the Jacobian $\partial \Phi/\partial P$ using the nominal dual coordinate bases $[\text{d}P]$. 
	%\begin{align}
	%T_F = -\left[
	%\begin{array}{c}
	%\frac{\partial C_1}{\partial t}\\
	%\frac{\partial C_2}{\partial t}\\
	%\vdots \\
	%\frac{\partial C_n}{\partial t}\\
	%\end{array}
	%\right]
	%\end{align}
	We now can reexpress Equation \eqref{eq:constraints_all} as
	\begin{equation}\label{eq:equality_constraint_overall}
		\Omega_E(P)\dot P = T_E.
	\end{equation}
	The vector field  $\dot P$ defined by the above equation \eqref{eq:equality_constraint_overall} represents possible motions for all the vehicles that respect the coordination time-varying equality constraint.
	
	\subsubsection{Coordination with time-varying inequality constraints}
	A set of time-varying inequality coordination constraints $\mathcal{I}_\mathcal{E} = \{\mathcal{I}_{ij}\}_{(i,j)}$ is indexed by the edge set $\mathcal{E}$, and  each edge $(i, j)$ is associated with a time-varying vector function $\mathcal{I}_{ij}(p_i, p_j, t)$ that satisfies Assumption~\ref{assum:function_g}. The constraints for the edge $(i, j)$ are enforced if $\mathcal{I}_{ij}(p_i, p_j, t) \leq 0,\;\;\forall t>0$. Now we consider the subset of \textit{active} constraints among all the edges at time $t$
	\begin{align}
		\chi(P, t) = \{(i, j), \;i, j = 1,2, \cdots, n\; |\; \mathcal{I}_{ij}(p_i, p_j, t) = 0\}.
	\end{align}
	At any point in time, all the active constraints in the edge set $\chi(P)$ generate a codistribution
	\begin{align}
		\Omega_I = [\cdots, \Omega_{I, ij}^\top, \cdots]^\top, \,\,\,\forall (i, j) \in \chi(P, t),
	\end{align}
	where the   subscript $I$ stands for \emph{inequality} constraints, and   $\Omega_{I, ij}$ is obtained by the Jacobian of the vector function $\mathcal{I}_{ij}$ using the nominal coordinate bases $[\text{d}p_i, \text{d}p_j]$ associated with the active constraint $\mathcal{I}_{ij}(p_i, p_j, t) = 0$. 
	%: $\Omega_{I, ij}^\top  = \{\partial I_{ij}(p_i, p_j)\}$ with $\{\partial I_{ij}(p_i, p_j) (\dot p_i, \dot p_j)  = 0\}$. In a compact form, there holds $\Omega_I \dot P = 0$, $\forall (v_i, v_j) \in \chi(P)$. 
	We group all the active constraints for all the edges and specify a compact form $ T_I = - [\cdots, (\frac{\partial \Omega_{I, ij}}{\partial t})^{\top}, \cdots]^{\top}$
	According to Lemma~\ref{theorem:invariant_temporal_set},  to guarantee that the time-varying inequality constraints are met, the control input $u(t) = [u_1(t)^\top, \cdots, u_n(t)^\top]^\top$ for each vehicle should be designed such that $\Omega_I \dot P(P, u(t)) \leq T_I$, $\forall (i, j) \in \chi(P, t)$.

	\subsection{Coordination feasibility and motion generation with time-varying  equality and inequality task   constraints}
	We now consider   coordination tasks with both equality and inequality constraints.  
	Together with the active inequality constraints, one can state the following theorem that determines coordination feasibility with various time-varying constraints. 
	\begin{theorem} \label{thm:time-varying_equality_inequality}
		The coordination task with both time-varying equality and inequality constraints has feasible motions if the following \textit{mixed} differential-algebraic equations and inequalities have solutions
		\begin{subequations}
			\begin{align}
				\Omega_K(P) \dot P &= T_K(P),  \label{eq:theorem5_motion_kine} \\
				\Omega_E(P) \dot P &= T_E,  \label{eq:theorem5_motion_eq} \\
				\Omega_I(P) \dot P  & \leq  T_I,\,\,\, \forall (i, j) \in \chi(P, t), \label{eq:theorem5_motion_ineq}
			\end{align} 
		\end{subequations}
		where $\chi(P, t)$ denotes the set of active constraints among all the edges at time $t$. 
	\end{theorem}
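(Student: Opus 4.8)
The plan is to mirror the structure of the proofs of Theorems~\ref{thm:feasible_equality} and~\ref{thm:equality_inequality_1}, but now invoking the temporal viability machinery of Section~\ref{sec:temporal_viability_regulation} --- specifically Theorem~\ref{theorem:invariant_temporal_set} and Lemma~\ref{theorem:invariant_temporal_set_affine} --- in place of the time-invariant Nagumo theorem and Lemma~\ref{lemma:invariant_set_control}. First I would recall that the kinematic identity~\eqref{eq:theorem5_motion_kine} is simply the affine-codistribution reformulation of the control-affine dynamics~\eqref{eq:system_drift} collected over all $n$ vehicles, exactly as in Section~\ref{sec:constraint_kinematics}; this part is an equivalence, so any composite $\dot P$ obeying $\Omega_K(P)\dot P=T_K(P)$ is realizable by a unique control $u$ via the full-column-rank block-diagonal matrix $F$ (as in the construction around~\eqref{eq:abstract_motion}), and conversely. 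Hence it suffices to show that a $\dot P$ satisfying the remaining two relations at each instant keeps the time-varying coordination constraints invariant over $[0,\bar t)$.

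Second, for the equality part I would write the composite constraint $\Phi_\mathcal{E}(P,t)=0$, differentiate along a trajectory to get $\tfrac{\partial\Phi}{\partial P}\dot P+\tfrac{\partial\Phi}{\partial t}=0$, and identify $\Omega_E$ with the Jacobian $\partial\Phi/\partial P$ and $T_E=-\partial\Phi/\partial t$ in the nominal bases $[\mathrm{d}P]$; this is precisely~\eqref{eq:theorem5_motion_eq}. Substituting $\dot P=F_0+Fu$ exhibits~\eqref{eq:theorem5_motion_eq} as the composite codistribution form of the equality condition~\eqref{eq:viable_condition_affine_equality} of Lemma~\ref{theorem:invariant_temporal_set_affine}, so that satisfying~\eqref{eq:theorem5_motion_eq} renders the time-varying set $\{(P,t)\mid\Phi_\mathcal{E}(P,t)=0\}$ controlled temporal invariant. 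For the inequality part, for each active edge the chain rule gives $\nabla_P^\top\mathcal{I}_{ij}\,\dot P+\partial_t\mathcal{I}_{ij}\le 0$; stacking the active rows into $\Omega_I$ and the time-partials into $T_I$ yields~\eqref{eq:theorem5_motion_ineq}, which is exactly the composite form of~\eqref{eq:viable_condition_affine}. By Theorem~\ref{theorem:invariant_temporal_set} (equivalently the inequality case of Lemma~\ref{theorem:invariant_temporal_set_affine}), whenever $\chi(P,t)\neq\emptyset$ this inequality is the statement $\dot P\in T^{t}_\mathcal{F}(P)$ for the temporal contingent cone of the inequality-parameterized set~\eqref{eq:parameterize_set_F}; when $\chi(P,t)=\emptyset$ the cone is the whole ambient space and the requirement is vacuous.

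Finally I would assemble the pieces: a solution of the mixed system~\eqref{eq:theorem5_motion_kine}--\eqref{eq:theorem5_motion_ineq} supplies, at every time $t$, a motion direction $\dot P$ that is kinematically realizable, lies in the tangent space of the time-varying equality manifold, and lies in the temporal contingent cone of the active inequality set --- hence lies in $T^{t}_\mathcal{F}(P)$ for the combined feasible set $\mathcal{F}(t)$. The temporal Nagumo-type result of Theorem~\ref{theorem:invariant_temporal_set} then yields a viable coordinated trajectory on $[0,\bar t)$, and the per-vehicle controls are recovered by $u=(F^\top F)^{-1}F^\top(\dot P-F_0)$ as in Section~\ref{sec:motion}. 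The main obstacle --- and the reason this statement is only an implication, unlike the equivalences in Theorems~\ref{thm:feasible_equality}--\ref{thm:equality_inequality_1} --- is the passage from pointwise solvability of the algebraic relations to the existence of an actual viable solution curve: this needs the temporal contingent cone to be non-empty along the trajectory, which is ensured by forward Lipschitz continuity of $\mathcal{F}(t)$ (Assumption~\ref{assum:function_g} together with the full-rank hypothesis on $\nabla_x^\top g_\chi$ from Lemma~\ref{lemma_temporal_cone}), plus a viability/selection argument (understood in the Filippov sense when the chosen virtual inputs $w_l$, hence $u$, switch). I would state these standing hypotheses explicitly and remark that necessity can fail precisely when forward Lipschitz continuity is lost.
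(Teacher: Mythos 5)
Your proof is correct and follows essentially the same route as the paper's: the kinematic codistribution equivalence from Section~\ref{sec:constraint_kinematics}, Lemma~\ref{theorem:invariant_temporal_set_affine} for the time-varying equality constraints, and Theorem~\ref{theorem:invariant_temporal_set} together with Lemma~\ref{theorem:invariant_temporal_set_affine} for the active inequalities. Your closing observation is also apt: the theorem is stated as a one-way implication while the paper's own proof asserts necessity and sufficiency, and the forward-Lipschitz/non-empty-temporal-contingent-cone caveat you identify is exactly the hypothesis that would need to be made explicit to justify the stronger claim.
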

	\begin{proof}
		The transformation of the kinematic motion constraints in \eqref{eq:theorem5_motion_kine} is equivalent   to the distributions and co-distribution description as detailed in Section~\ref{sec:constraint_kinematics}. 
		The equality condition to maintain time-varying equality coordination tasks in \eqref{eq:theorem5_motion_eq} is due to the special case of equality constraint maintenance  (Lemma \ref{theorem:invariant_temporal_set_affine}). 
		The motion inequalities in \eqref{eq:theorem5_motion_ineq} are both   necessary and sufficient to ensure all coordination time-varying inequality tasks are satisfied for the overall vehicle group due to controlled temporal invariant set of Theorem~\ref{theorem:invariant_temporal_set} and control affine systems in   Lemma~\ref{theorem:invariant_temporal_set_affine}. In summary, the composite equality/inequality conditions in  \eqref{eq:theorem5_motion_kine}-\eqref{eq:theorem5_motion_ineq} are both necessary and sufficient to ensure that feasible motions for all vehicles satisfy both kinematic constraints and time-varying coordination tasks. 
	\end{proof}
	For motion generation, we can follow Algorithm~\ref{algorithm:feasibility_undirected} to compute a feasible motion for each heterogeneous vehicle. In this case, the first step of solving the codistribution  equality is updated by 
	\begin{align}
		\left[ \begin{array}{c}
			\Omega_K\\
			\Omega_E   
		\end{array} \right]  \dot P=\left[ \begin{array}{c}
			T_K \\
			T_E   
		\end{array} \right],
	\end{align}
	while the null vector in the composite codistribution matrix $ \Omega := [\Omega_K^\top,
	\Omega_E^\top]^\top$ remains unchanged, and the special solution $\bar K$ should be updated according to the temporal information $T = [T_K^\top,
	T_E^\top ]^\top $ in the time-varying constraints. Then with a set of virtual inputs $w_l$, one can further proceed to choose feasible motions to satisfy the codistribution \eqref{eq:theorem5_motion_ineq} of time-varying inequality constraints.

	\subsection{Revisiting a robot coordination example from \cite{tabuada2005motion}:  fully heterogeneous vehicles with time-varying equality constraints}
	\label{sec:applications}
	% In this section, we apply the developed theory on temporal viability regulation to coordinate two unicycle vehicles, subject to time-varying equality and inequality motion constraints on their joint state. First, we give the relevant vehicle models, whose kinematics can be equivalently  formulated as (A) time-invariant kinematic equality constraints. We then proceed to detail (B) time-varying velocity equality constraints, (C) time-varying inequality constraints on the inter-vehicle distance and (D) a time-varying visibility inequality constraint. Finally, we present a closed-loop simulation where the system controls are regulated in accordance with Theorem~\ref{theorem:invariant_temporal_set_affine} subject to (A)-(D), thereby yielding a state trajectory which satisfies all constraints at all times $t\in[0,\bar{t}]$.

	In this subsection, by using an illustrative example with a group of heterogeneous vehicles, we show how to apply the previous results to determine coordination feasibility and generate feasible motions under  time-varying coordination constraints.
	
	The illustrative example   is adopted from  \cite{tabuada2005motion}, but here we consider a more complicated multi-vehicle coordination problem when all vehicles have heterogeneous dynamics. Consider an undirected formation consisting of three vehicles, while  vehicle 1 is modelled by the kinematic equation \eqref{eq:constant_speed_model}  with both nonholonomic constraint and drift term, vehicle 2 is a unicycle-type robot with nonholonomic constraint and  adjustable forward speed described by  \eqref{eq:example_unicycle}, and vehicle 3 is modelled by single integrators \eqref{eq:integrator}. 
	
	By following the computation of (affine) codistributions of each vehicle kinematics in Section \ref{sec:kinematic_model} and putting together kinematic constraints for all vehicles, one can obtain a compact form of the co-distribution matrix 
	%\begin{align}
	%\Omega_K = [\omega_{1,1}; \omega_{1,2}; f_{2,1}; \text{sin}(\theta_2) \text{d}x_2-\text{cos}(\theta_2) \text{d}y_2]
	%\end{align}
	\begin{align}
		\Omega_K =  \left[
		\begin{array}{c}
			\text{sin}(\theta_1) \text{d}x_1 - \text{cos}(\theta_1)\text{d}y_1  \\
			\text{cos}(\theta_1)\text{d}x_1+\text{sin}(\theta_1)\text{d}y_1 \\
			\text{sin}(\theta_2) \text{d}x_2-\text{cos}(\theta_2) \text{d}y_2
		\end{array}
		\right]
	\end{align}
	with $T_K = [0, v_1, 0]^{\top}$.

	The coordination  constraint follows similarly from \cite[Example 4.2]{tabuada2005motion}, with some slight modifications. We suppose that the coordination task consists of maintaining  a relative position constraint and a time-varying distance constraint. Specifically, the constraint associated with the edge $(1,2)$ is defined by $ c_{12} = \left[
	x_1 - x_2 - \delta_x,
	y_1 - y_2 - \delta_y,
	\theta_1-\theta_2
	\right]^\top $,
	where $\delta_x$ and $\delta_y$ are positive constants specifying the desired relative displacement between vehicle 1 and vehicle 2.
	The constraint between vehicle 1 and vehicle 3 is defined by the function
	$
	c_{13}(t) = \frac{1}{2}(x_1 - x_3)^2 + \frac{1}{2}(y_1 - y_3)^2 - \delta_{13}(t)
	$,
	where $\delta_{13}(t)$ is a positive function specifying the desired time-varying distance between vehicle 1 and vehicle 3.
	The codistribution for characterizing the coordination constraint can be described by
	%\begin{figure*}[ht]
	\begin{align} \label{eq:feasibility_codistribution_exampe2}
		\Omega_E =
		\left[
		\begin{array}{c}
			\text{d}x_1-\text{d}x_2 \\
			\text{d}y_1-\text{d}y_2 \\
			\text{d}\theta_1-\text{d}\theta_2 \\
			(x_1 - x_3) \text{d}x_1 + (y_1 - y_3) \text{d}y_1 +  (x_3 - x_1) \text{d}x_3 \\
			+  (y_3 - y_1) \text{d}y_3
		\end{array}
		\right]
	\end{align}
	%\end{figure*}
	and $T_E = [0,0,0, \gamma_{13}(t)]$, where $\gamma_{13}(t) = - \frac{\partial c_{13}(t)}{\partial t} = \frac{\partial \delta_{13}(t)}{\partial t}$, and we have used the standard basis $\{\text{d}x_1, \text{d}y_1, \text{d}\theta_1, \cdots, \text{d}x_3, \text{d}y_3\}$ to describe each covector in $\Omega_E$. By grouping both the kinematic constraint and formation constraint, one can write down a composite  constraint matrix $\Omega$ as shown in \eqref{eq:example} (see next page).
	\iffalse
	\begin{figure*}[ht]
		\begin{subequations}
			\begin{align} \label{eq:example}
				\Omega &=   \begin{pNiceMatrix} 
					\Omega_K\\
					\hdottedline
					\Omega_E   
				\end{pNiceMatrix}   = 
				\begin{pNiceMatrix}
					{\text{sin}(\theta_1)}& - \text{cos}(\theta_1)& 0 &0 &0 &0 &0 &0  \\
					\text{cos}(\theta_1)&  \text{sin}(\theta_1) &0 &0 &0 &0 &0 &0 \\
					0 &0 &  {0} & \text{sin}(\theta_2) & -\text{cos}(\theta_2)&  0 &0 &0   \\
					\hdottedline
					1 &0 & 0  &-1 &0 &0 &0 &0    \\
					0 &1 & 0  &0 &-1 &0 &0 &0    \\
					0 &0 & 1  &0 &0 &-1 &0 &0   \\
					x_1 - x_3 &y_1 - y_3 & 0  &  0 &0 &0 &x_3 - x_1 &y_3 - y_1
				\end{pNiceMatrix}
				\\
				\dot P &=
				\left[
				\begin{array}{cccccccc}
					\dot x_1 & \dot y_1 & \dot \theta_1 & \dot x_2 & \dot y_2 & \dot \theta_2 &  \dot x_3 & \dot y_3
				\end{array}
				\right]^{\top},    \\
				T &= [T_K^\top, T_E^\top]^\top = 
				\left[
				\begin{array}{ccccccc}
					0   & v_1  &0 &0 &0 &0 &\gamma_{13}(t)  
				\end{array}
				\right]^{\top}
			\end{align}
		\end{subequations}
	\end{figure*}
	\fi
	
	\begin{figure*}[ht]
		\begin{subequations}
			\begin{align} \label{eq:example}
				\Omega &=   \begin{bNiceMatrix} 
					\Omega_K\\
					\hdottedline
					\Omega_E   
				\end{bNiceMatrix}   = 
				\begin{bNiceMatrix}
					{\text{sin}(\theta_1)}& - \text{cos}(\theta_1)& 0 &0 &0 &0 &0 &0  \\
					\text{cos}(\theta_1)&  \text{sin}(\theta_1) &0 &0 &0 &0 &0 &0 \\
					0 &0 &  {0} & \text{sin}(\theta_2) & -\text{cos}(\theta_2)&  0 &0 &0   \\
					\hdottedline
					1 &0 & 0  &-1 &0 &0 &0 &0    \\
					0 &1 & 0  &0 &-1 &0 &0 &0    \\
					0 &0 & 1  &0 &0 &-1 &0 &0   \\
					x_1 - x_3 &y_1 - y_3 & 0  &  0 &0 &0 &x_3 - x_1 &y_3 - y_1
				\end{bNiceMatrix},\hspace{20pt} 
				T =   \begin{bNiceMatrix} 
					T_K\\
					\hdottedline
					T_E   
				\end{bNiceMatrix}=
				\begin{bNiceMatrix}
					0   \\ v_1  \\ 0 \\
					\hdottedline 0 \\ 0 \\ 0 \\ \gamma_{13}(t)  
				\end{bNiceMatrix}
				\\
				\dot P &=
				\begin{bmatrix}
					\dot x_1 \\ \dot y_1 \\ \dot \theta_1 \\ \dot x_2 \\ \dot y_2 \\ \dot \theta_2 \\  \dot x_3 \\ \dot y_3
				\end{bmatrix},\hspace{20pt} 
				\bar K =
				\begin{bmatrix}
					v_1\text{cos}(\theta_1) \\  v_1 \text{sin}(\theta_1) \\ 0 \\ v_1\text{cos}(\theta_1) \\  v_1 \text{sin}(\theta_1) \\ 0 \\ v_1\text{cos}(\theta_1) \\  \frac{\gamma_{13}(t)}{y_3 - y_1} +v_1 \text{sin}(\theta_1)
				\end{bmatrix},\hspace{20pt} 
				K_1 =
				\begin{bmatrix}
					0 \\ 0 \\  0 \\ 0 \\  0 \\  0 \\ y_3 - y_1 \\  x_1 - x_3
				\end{bmatrix},\hspace{20pt} 
				K_2 =
				\begin{bmatrix}
					0 \\  0 \\ 1 \\ 0 \\  0 \\ 1 \\  0 \\  0
				\end{bmatrix}\label{eq:exampleB}
			\end{align}
		\end{subequations}
	\end{figure*}
	
	Straightforward calculation shows that  $\text{rank}(\Omega) = \text{rank}([\Omega, T])$ holds in this example. Thus, according to Theorem~\ref{thm:time-varying_equality_inequality}, there exist  feasible motions to coordinate such a heterogeneous vehicle group while maintaining the coordination task constraints. One special solution $\bar K$ to the equation $\Omega(P) \dot P =  T$ and the two null vectors $K_1$, $K_2$ of $\Omega$ are calculated in \eqref{eq:exampleB} (see next page).
	Then all feasible motions for the heterogeneous vehicle group can be described by the following equivalent abstraction  system
	\begin{align} \label{eq:motion_generation}
		\dot P =  \bar K +  w_1 K_1  +   w_2 K_2
	\end{align}
	Choosing different virtual inputs $w_1$ and $w_2$ will generate different coordinated motions for all vehicles while satisfying the time-varying coordination tasks.  More precisely, all feasible coordination motions can be categorized into the following four types:
	\begin{itemize}
		\item Type (i): the case of $w_1 = 0$ and $w_2 = 0$: vehicle 1 and vehicle 2  translate with the same heading while maintaining the formation constraint  $c_{12}(P) = 0$, and vehicle 3's motion is generated by $\dot x_3 = v_1\text{cos}(\theta_1), \dot y_3 = \frac{\gamma_{13}(t)}{y_3 - y_1} + v_1 \text{sin}(\theta_1)$ that satisfies the time-varying distance constraint $c_{13}(P, t) = 0$;
		\item Type (ii): the case of $w_1 \neq 0$ and $w_2 = 0$:  vehicle 1 and vehicle 2 translate with the same heading, while vehicle 3 performs a combined  motion of both translation and rotation generated by the dynamics $\dot x_3 = v_1\text{cos}(\theta_1) + w_1(y_3 - y_1), \dot y_3 = \frac{\gamma_{13}(t)}{y_3 - y_1} + v_1 \text{sin}(\theta_1) + w_1(x_3 - x_1)$;
		\item Type (iii): the case of $w_1 = 0$ and $w_2 \neq 0$: vehicle 1 and vehicle 2 rotate with the same angular velocity $w_2$ and the same instantaneous phase while maintaining the formation constraint  $c_{12}(P) = 0$, and vehicle 3's motion is generated by $\dot x_3 = v_1\text{cos}(\theta_1), \dot y_3 = \frac{\gamma_{13}(t)}{y_3 - y_1} + v_1 \text{sin}(\theta_1)$;
		\item Type (iv): the case of $w_1 \neq 0$ and $w_2 \neq 0$:  vehicle 1 and vehicle 2 rotate with the same angular velocity $w_2$ and have the same instantaneous phase, while vehicle 3 performs a combined  motion of both translation and rotation generated by the dynamics $\dot x_3 = v_1\text{cos}(\theta_1) + w_1(y_3 - y_1), \dot y_3 = \frac{\gamma_{13}(t)}{y_3 - y_1} + v_1 \text{sin}(\theta_1) + w_1(x_3 - x_1)$.
	\end{itemize}
	Evidently, by using the general motion solution in the abstraction   system \eqref{eq:motion_generation}, all kinds of feasible motions for the networked heterogeneous vehicles can be obtained.

	\section{Heterogeneous multi-vehicle coordination in a leader-follower structure} \label{sec:leader_follower_coordination}
	In this section we extend the above results to a leader-follower vehicle framework. The leader-follower structure involves a directed tree graph that describes the interaction relation within each individual vehicle, and has been used as a typical and benchmark framework in multi-vehicle coordination control (see e.g., \cite{panagou2013viability}). 
	
	\subsection{Formulation and feasibility of leader-follower constrained coordination}
	In a leader-follower structure, each follower vehicle has only one leader, whose motions constrain its immediate followers. For each (directed) edge $(i, j)$ we associate a vector  function $\Phi_{ij}(p_i, p_j, t)$ to describe equality coordination constraints. Different to the undirected graph modelling in the previous section,  here only the follower vehicle $j$ is responsible to maintain the equality constraint   $\Phi_{ij} = 0$ associated with edge $(i,j)$, and the leader vehicle $i$  is not affected by the equality constraint $\Phi_{ij}$. 
	
	The equality constraint $\Phi_{ij}(p_i, p_j, t)$ for edge $(i, j)$ is enforced along the trajectories of vehicles $i$ and $j$ if and only if $\Phi_{ij}(0) = 0$ and $\dot \Phi_{ij}(t) = 0, \forall t>0$. This gives 
	\begin{align}
		\frac{\text{d}\Phi_{ij}(t)}{\text{d} t} = \frac{\partial \Phi_{ij}(t)}{\partial p_i} \dot p_i +   \frac{\partial \Phi_{ij}(t)}{\partial p_j} \dot p_j  + \frac{\partial \Phi_{ij}(t)}{\partial t} = 0
	\end{align}
	Therefore, to enforce the equality constraint in the directed edge $(i, j)$, vehicle $j$'s motion should satisfy
	\begin{align}
		\frac{\partial \Phi_{ij}(t)}{\partial p_j} \dot p_j  = - \frac{\partial \Phi_{ij}(t)}{\partial p_i} \dot p_i  - \frac{\partial \Phi_{ij}(t)}{\partial t}    
	\end{align}

	Now we further consider the inequality constraint $\mathcal{I}_{ij}(p_i, p_j, t)$ associated with the edge $(i, j)$, while the follower vehicle $j$ is responsible to take care of the inequality constraint $\mathcal{I}_{ij}(p_i, p_j, t) \leq 0$. Suppose at time $t$ the inequality constraint is active in the sense that $\mathcal{I}_{ij}(p_i, p_j, t) = 0$. By the temporal viability theory and temporal set-invariance control in Lemma~\ref{theorem:invariant_temporal_set_affine}, vehicle $j$'s motion should satisfy 
	\begin{align}
		\frac{\text{d}\mathcal{I}_{ij}(t)}{\text{d} t} = \frac{\partial \mathcal{I}_{ij}(t)}{\partial p_i} \dot p_i +   \frac{\partial \mathcal{I}_{ij}(t)}{\partial p_j} \dot p_j + \frac{\text{d}\mathcal{I}_{ij}(t)}{\text{d}t} \leq 0
	\end{align}
	or equivalently 
	\begin{align}
		\frac{\partial I_{ij}(t)}{\partial p_j} \dot p_j \leq -   \frac{\partial I_{ij}(t)}{\partial p_i} \dot p_i   - \frac{\text{d}\mathcal{I}_{ij}(t)}{\text{d}t}
	\end{align}
	
	Further note that vehicle $j$'s motion is subject to the kinematic constraint described by an equivalent codistribution \eqref{eq:dynamics_constraint_transformation}
	\begin{align} 
		\Omega_{K_j} (p_j) \dot p_j = T_{K_j} (p_j).
	\end{align}
	
	To summarize, the condition for feasible coordination for a leader-follower vehicle team is stated as follows.
	\begin{theorem} \label{theorem:leader_follower}
		The coordination task for a leader-follower vehicle team with both  equality constraints and inequality constraints has feasible motions if and only if, for all follower vehicles $j = 2, \cdots, n$, the following \textit{mixed} differential-algebraic (in)equalities have solutions
		\begin{subequations}
			\begin{align}
				\Omega_{K_j}(p_j)  \dot p_j &= T_{K_j}(p_j)   \\
				\frac{\partial \Phi_{ij}(t)}{\partial p_j} \dot p_j  &= - \frac{\partial \Phi_{ij}(t)}{\partial p_i} \dot p_i  - \frac{\partial \Phi_{ij}(t)}{\partial t}, (i,j) \in \mathcal{E}    \\
				\frac{\partial I_{ij}(t)}{\partial p_j} \dot p_j &\leq -   \frac{\partial I_{ij}(t)}{\partial p_i} \dot p_i,\,\,\, \text{if}\, (i, j) \in \chi(P, t)
			\end{align} 
		\end{subequations}
		where $\chi(P, t)$ denotes the set of active constraints among all the edges at time $t$. 
	\end{theorem}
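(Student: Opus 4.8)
The plan is to reduce the claim to three ingredients that are already in place---the codistribution rewriting of each vehicle's kinematics from Section~\ref{sec:constraint_kinematics}, the equality case of Lemma~\ref{theorem:invariant_temporal_set_affine}, and the controlled temporal invariance condition of Theorem~\ref{theorem:invariant_temporal_set}---and then to exploit the fact that the interaction graph is a rooted directed tree (vehicle $1$ the root) to decouple the feasibility conditions edge by edge. First I would fix a topological ordering of the vertices so that every follower $j\ge 2$ has a unique leader $i=i(j)$ that precedes it, and note that a coordinated motion $\dot P=(\dot p_1,\dots,\dot p_n)$ is feasible precisely when each $\dot p_j$ respects vehicle $j$'s kinematics and every edge constraint, equality and active inequality, is maintained along the trajectory.

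For necessity, suppose a feasible coordinated motion exists. Each $\dot p_j$ satisfies the kinematic constraint, which by Section~\ref{sec:constraint_kinematics} is exactly $\Omega_{K_j}(p_j)\dot p_j=T_{K_j}(p_j)$. Differentiating each maintained edge equality $\Phi_{ij}(p_i,p_j,t)\equiv 0$ along the trajectory and isolating the follower's term gives the second displayed equation, with the leader's already-available $\dot p_i$ as right-hand-side data. For each active edge the controlled temporal invariance condition of Theorem~\ref{theorem:invariant_temporal_set}, specialized to control affine dynamics in Lemma~\ref{theorem:invariant_temporal_set_affine}, is necessary; isolating the follower's term yields the third displayed inequality. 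Hence the stacked differential-algebraic (in)equalities for every $j$ are solvable.

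For sufficiency I would argue by induction on depth in the tree. The root's motion is only required to satisfy $\Omega_{K_1}(p_1)\dot p_1=T_{K_1}(p_1)$, which is always consistent since it is merely the annihilating-codistribution rewriting of the control affine system $\dot p_1=f_{1,0}+\sum_k f_{1,k}u_{1,k}$; choose any such $\dot p_1$. Inductively, assume feasible motions have already been committed for all ancestors of $j$, so in particular $\dot p_i$ is fixed. By hypothesis the stacked system for $j$ admits a solution $\dot p_j$, which respects $j$'s kinematics, keeps $\Phi_{ij}=0$ (its time derivative vanishes and, by assumption, the constraint holds initially), and keeps every active $\mathcal{I}_{ij}\le 0$ by the sufficiency half of Theorem~\ref{theorem:invariant_temporal_set}/Lemma~\ref{theorem:invariant_temporal_set_affine}. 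Concatenating these choices over the whole tree produces a feasible coordinated motion.

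The delicate point, which I would spell out carefully, is the order of quantifiers implicit in ``the following (in)equalities have solutions'': the solvability of follower $j$'s system is conditioned on the leader's motion $\dot p_i$, so a careless reading is circular. The acyclicity of the leader-follower graph is exactly what defuses this---each leader's motion can be committed before any of its followers' systems are examined, and no constraint ever forces revision of an already-committed motion---so the induction on depth is well founded and the hierarchical ``solvability'' statement is unambiguous. I would also remark that, unlike the undirected case of Theorem~\ref{thm:time-varying_equality_inequality} (which needs a rank/compatibility condition coupling all vehicles), feasibility here factorizes along the tree, one follower at a time.
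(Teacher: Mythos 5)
Your proposal is correct and follows essentially the same route as the paper: the paper presents no separate proof, treating the theorem as a summary of the immediately preceding edge-by-edge derivation (kinematic codistribution rewriting, differentiation of the maintained equality $\Phi_{ij}$, and the temporal set-invariance condition of Lemma~\ref{theorem:invariant_temporal_set_affine} for active inequalities), and then remarks that feasibility is checked by a recursive, decentralized pass from the top leader down the directed tree. Your explicit induction on tree depth and your discussion of the quantifier ordering (committing each leader's motion before examining its followers' systems) merely formalize what the paper states informally as the recursive procedure, so the two arguments coincide in substance.
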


	To determine motion feasibility of the coordination task for the whole team, by following Algorithm \ref{algorithm:feasibility_undirected}, a recursive procedure can be performed embracing  all the vehicles in the leader-follower group, starting from the top leader to the last follower in the underlying directed tree graph. We note that in contrast to the undirected graph case,   such a recursive procedure for the directed tree graph for a leader-follower team enables a \textit{decentralized}  checking of the feasibility condition for each vehicle where the codistribution matrix $\Omega_K, \Omega_E, \Omega_I$ only involves vehicle~$j$ and the associated edge $(i, j)$ in Algorithm~\ref{algorithm:feasibility_undirected}, and the procedure can be terminated within a finite number of steps.

	\subsection{Simulation example: visibility maintenance coordination in leader-follower vehicle group}
	%{\color{red}TODO: redo a simulation example here. Suggested example: visibility maintenance coordination in leader-follower vehicle group. See the famous Panagou and Kumar paper \cite{panagou2014cooperative}}. 
	\subsubsection{Unicycle coordination with time-varying constraints} 
	In the context of temporal viability regulation, it may be useful to restrict certain parts of the joint state to desired state trajectories. As in Section~\ref{sec:complexleaderfollower}, let $v_{r}(t), u_{r}(t)$ define a desired reference trajectory for the first vehicle ($i=1$), expressed by the annihilating codistribution
	$$
	\cos(\theta_1)\text{d}x_1 + \sin(\theta_1)\text{d}y_1 = v_{r}(t), \quad \text{d}\theta_1 = u_{r}(t),
	$$
	here written compactly with $\Omega_{E}\dot{P} = T_{E}$, where then
	\begin{equation}\label{eq:timevaryingVelocity}
		\Omega_{E}\hspace{-1pt}=\hspace{-1pt}
		\begin{bmatrix}
			\cos(\theta_1)\text{d}x_1 \hspace{-1pt}+\hspace{-1pt} \sin(\theta_1)\text{d}y_1 \\ 
			\text{d}\theta_1
		\end{bmatrix}
		,\;
		T_{E}\hspace{-1pt}=\hspace{-1pt}
		\begin{bmatrix}
			v_{r}(t)\\
			u_{r}(t)
		\end{bmatrix}.
	\end{equation}
	This can be replaced by a general vector field for more complex applications. For instance, we can  compute a cell decomposition as described in \protect{\cite[Appendix C]{panagou2014cooperative}} along with a cell potential as described in~\protect{\cite[Theorem 1]{lindemann2009simple}}, thereby enabling movement though cluttered environments. However, to illustrate the time-varying constraint satisfaction, we here keep the example minimal with specific $(v_r,u_r)$.
	
	%\mgcomment{Notation should be consistent throughout - should we use $p$ or $P$ here, in either case, we should define it.}
	
	\subsubsection{Time-varying distance inequality constraints}
	As previously demonstrated, it is often useful to pose inequality constraints on inter-vehicular distances in mobile vehicle coordination tasks. By Theorem~\ref{theorem:leader_follower}, we can consider time-varying constraints in the form
	\begin{align}
		\frac{1}{2}d_{ij}^{\mathrm{min}}(t)^2\leq \frac{1}{2}(x_i - x_j)^2 +  \frac{1}{2}(y_i - y_j)^2  \leq \frac{1}{2}d_{ij}^{\mathrm{max}}(t)^2
	\end{align}
	for some time-varying lower and upper bounds $0\leq d_{ij}^{\mathrm{min}}(t)<d_{ij}^{\mathrm{max}}(t)$, which may be equivalently  expressed as a vector-valued inequality
	$$
	g_{ij}^d(P, t) := \frac{1}{2}
	\begin{bmatrix}
		+(x_i - x_j)^2 +  (y_i - y_j)^2 - d_{ij}^{\mathrm{max}}(t)^2\\
		-(x_j - x_i)^2 -  (y_i - y_j)^2 + d_{ij}^{\mathrm{min}}(t)^2\\
	\end{bmatrix} \leq 0.
	$$
	The utility of such a constraint cannot be understated, as the time-varying constraints may be dynamically updated when interacting with the environment, the only requirement being that $d^{\mathrm{min}}_{ij}(t)$ and $d^{\mathrm{max}}_{ij}$ are  continuously differentiable in time. The constraint can be written on the form of Theorem~\ref{theorem:leader_follower}, as $\Omega_I^d \dot{P} \leq T_I^d$ where
	\begin{align}
		\Omega_{I,ij}^d &:= \nabla_P^\top g_{ij}^d(P, t) \nonumber \\
		&=
		\begin{bmatrix}
			(x_i-x_j)(\text{d}x_i-\text{d}x_j) + (y_i-y_j)(\text{d}y_i-\text{d}y_j)
			\\
			(x_j-x_i)(\text{d}x_i-\text{d}x_j)  + (y_j-y_i)(\text{d}y_i-\text{d}y_j)
		\end{bmatrix},  \nonumber \\ \label{eq:timevaryingDistance}
		T_{I,ij}^d &:= -\nabla_t  g_{ij}^d(P, t)= 
		\begin{bmatrix}
			+d_{ij}^{\mathrm{max}}(t)\dot{d}_{ij}^{\mathrm{max}}(t)
			\\
			-d_{ij}^{\mathrm{min}}(t)\dot{d}_{ij}^{\mathrm{min}}(t)
		\end{bmatrix},
	\end{align}
	using the standard dual bases $\{\text{d}x_i, \text{d}y_i, \text{d}\theta_i, \text{d}x_j, \text{d}y_j, \text{d}\theta_j\}$ in representing the annihilating codistribution associated with the gradient. 
	\subsubsection{Time-varying visibility inequality constraints}
	Similarly, we can consider time-varying visibility constraints posed on the relative rotation of vehicles, presented in a time-invariant form in~\cite{sun2019feasible} which were used in Section~\ref{sec:complexleaderfollower}. Here, the cosine angle of the body direction of system $j$, $b_{j} := [\cos(\theta_j), \sin(\theta_j)]$, and the  direction of system $i$ relative to $j$, $a_{ij} := [x_i - x_j, y_i - y_j]$ is bounded (see Fig.~\ref{fig:geometry}). In other words, we enforce a constraint on the time-varying apex angle, $2\alpha_{ij}(t)$, defining a cone of visibility of system $j$, such that
	\begin{align}\label{eq:cosineideas}
		g_{ij}^v(P, t) := \cos(\alpha_{ij}(t)) - \frac{\langle a_{ij},b_{j} \rangle}{\langle a_{ij}, a_{ij} \rangle^{1/2}}\leq  0.
	\end{align}
	By letting $c_{j} := [-\sin(\theta_j), \cos(\theta_j)]$, the associated annihilating codistribution of this time-varying inequality constraint becomes
	\begin{align}
		\Omega_{I,ij}^v\hspace{-1pt}:=&\nabla_P^\top g_{ij}^v(P, t) \nonumber \\
		=&\hspace{-1pt}\dfrac{\langle a_{ij}, c_j\rangle}{\sqrt{\langle a_{ij}, a_{ij} \rangle}} \Bigg(
		\dfrac{1}{\langle a_{ij}, a_{ij} \rangle} \Bigg\langle a_{ij},
		\hspace{-3pt}
		\Bigg[
		\begin{array}{c}
			\hspace{-5pt}\text{d}x_i-\text{d}x_j\hspace{-5pt}\\
			\hspace{-5pt}\text{d}y_j-\text{d}y_i\hspace{-5pt}
		\end{array}
		\Bigg]
		\hspace{-1pt}
		\Bigg\rangle
		\hspace{-2pt}+
		\hspace{-2pt}\text{d}\theta_j\Bigg).\nonumber
		\\\label{eq:timevaryingAngle}
		T_{I,ij}^v :=& -\nabla_t  g_{ij}^v(P, t) = \sin(\alpha_{ij}(t))\dot{\alpha}_{ij}(t).
	\end{align}
	With these general descriptions of the annihilating codistributions associated with the time-varying constraints, we proceed to show how they may be implemented in a numerical example.
	
	\begin{figure}[t]
		\begin{center}
			%\vspace{-20pt}
			\includegraphics[width=0.40\textwidth]{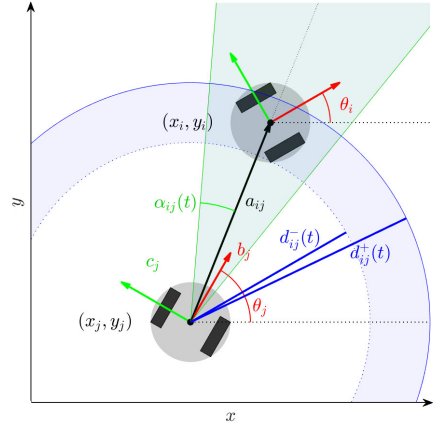}
			\vspace{-15pt}
			\caption{Illustration of a visibility inequality constraint, ${g}^{v}_{ij}(p,t)$, bounding the direction $b_j$ to the green cone defined by the vector $a_{ij}$ and the half apex angle $\alpha_{ij}$, and the two-sided distance constraint (blue).}
			\label{fig:geometry}
		\end{center}
	\end{figure}
	
	\subsubsection{Simulation demonstration} 
	
	We now tackle the challenging task of controlling a multi-vehicle coordination system consisting of $n=3$ vehicles, where vehicle $i=1$ is a car vehicle obeying the kinematics in~\eqref{eq:car_model}, and the remaining two vehicles are unicycles, as described in~\eqref{eq:example_unicycle}. A set of time-varying inequality constraints are posed on the joint state of the vehicles, $P(t)\in \mathbb{R}^{10}$. This example is similar to that in Section~\ref{sec:complexleaderfollower}, but now lifting the constraints that vehicle $i=1$ should be visible from vehicle $i = 3$, and making all of the inequality constraints time-varying and using different kinematics. This is achieved by defining
	\begin{align*}
		\Omega_{K,i}\dot{P}&=0, &&i\in\{1,2,3\},&&(\text{see Section~\ref{sec:kinematic_model}})\\
		\Omega_{E,1}\dot{P}&=T_{E,1},&&&&
		(\text{see}~\eqref{eq:timevaryingVelocity})\\
		\Omega_{I,ij}^d\dot{P}&\leq T_{I,ij}^d,&&(i,j)\in\{(1,2),(2,3)\},&&(\text{see}~\eqref{eq:timevaryingDistance})\\
		\Omega_{I,ij}^v\dot{p}&\leq T_{I,ij}^v,&&(i,j)\in\{(1,2),(2,3)\},&&(\text{see}~\eqref{eq:timevaryingAngle})
	\end{align*}
	with the right hand side defined by eight arbitrary time-varying functions. For illustrative purposes, these functions are chosen as
	\begin{subequations}
		\begin{align*}
			(u_{1,1}(t),u_{1,2}(t))&=(1+0.1\sin(t),\sin(t)),\\
			(d^{\mathrm{min}}_{12}(t),d^{\mathrm{min}}_{23}(t))&=(1.0,1.0)+0.2\sin(t)^2 (-1,1),\\
			(d^{\mathrm{max}}_{12}(t),d^{\mathrm{max}}_{23}(t))&=(1.3,1.3)+0.2\sin(t)^2 (-1,1),\\
			(\alpha_{12}(t),\alpha_{23}(t))&=(0.05, 0.1) + 0.1(\cos(t)^2, \sin(t/2)^2).
		\end{align*}
	\end{subequations}
	
	In effect, this set of constraints result in the vehicles satisfying the associated kinematic equations of a car in Equation~\eqref{eq:car_model} and the unicycle in Equation~\eqref{eq:example_unicycle}, while  vehicle $i=1$ follows a path given by $u_{1,1}(t), u_{1,2}(t)$; vehicle $i=2$ maintains a set of time-varying inequality constraints with respect to the vehicle $i=1$; and vehicle $i=3$ maintains a set of time-varying inequality constraints with respect to the vehicle $i=2$. In this demonstration, we exemplify  a heterogeneous leader-follower scenario defined using inequality constraints (here permitted to be time-varying) instead of the more common equality constraints.

	\iffalse
	\begin{figure}[t]
		\begin{center}
			\vspace{-10pt}
			\includegraphics[width=0.5\textwidth]{}
			\vspace{-20pt}
			\caption{\textit{Top:} Positional state trajectories of the vehicles, $i=1$ (red) and $i=2$ (blue), with the inequality constraints in distance (blue) and visibility (green) at the time $t=12$. \textit{Center:} The time-varying distance inequality constraint (center). \textit{Bottom:} The narrow time-varying visibility constraint.}
			\label{fig:simulationresult}
		\end{center}
	\end{figure}
	\fi
	
	\begin{figure}[t]
		\begin{center}
			\includegraphics[width=0.8\columnwidth]{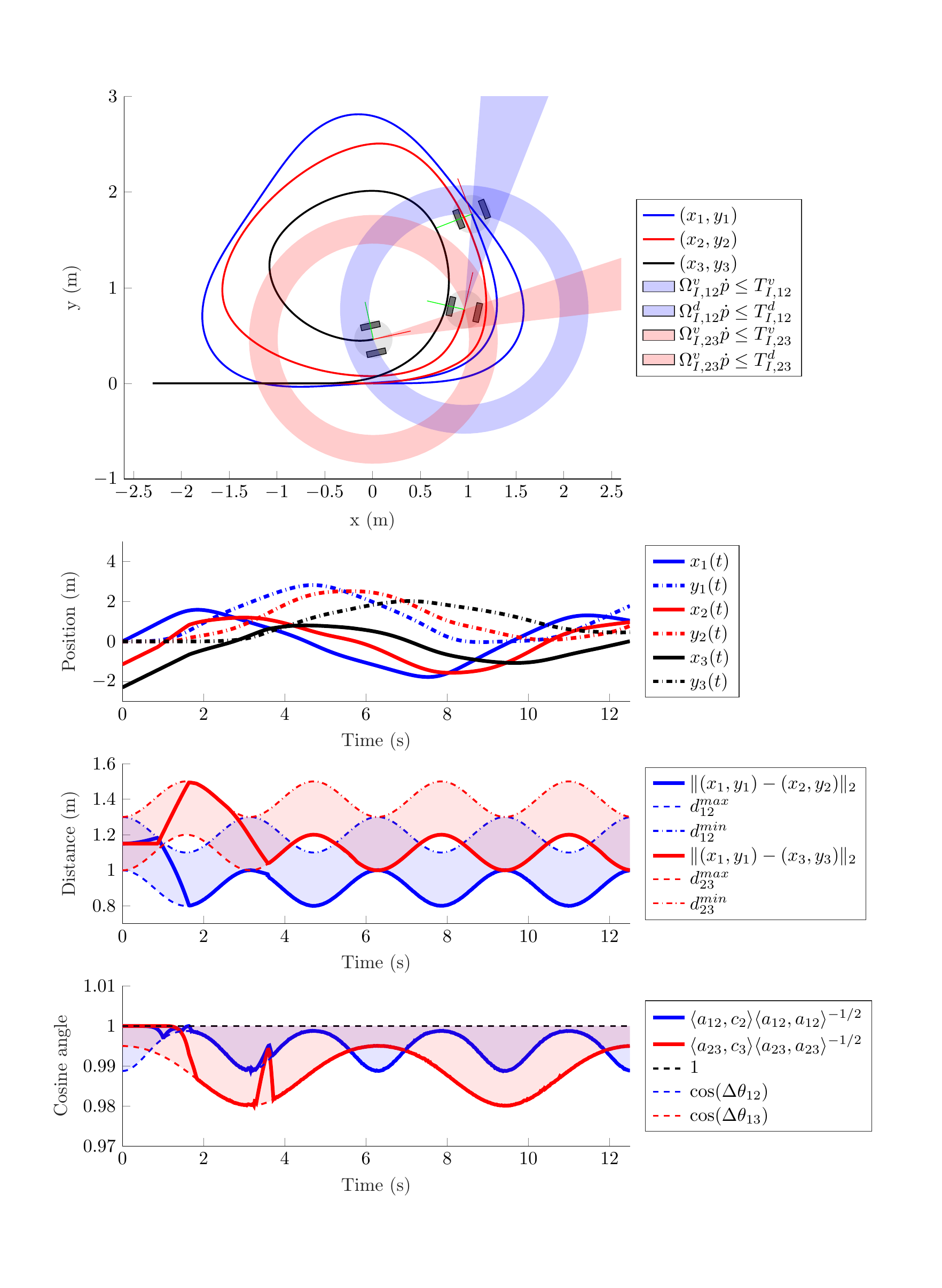}
			\caption{Complex vehicle coordination in a leader-follower structure. \textit{Top:} Positional trajectories of the vehicles, $i=1$ (red), $i=2$ (blue) and $i=3$ (black), with the inequality constraints in the relative distances and visibility depicted at the terminal time $t=12.5$. \textit{Top, center:} The positional trajectories of the vehicles in time. \textit{Bottom, center:} The time-varying distance inequality constraints associated with the vehicles $(i,j)=(1,2)$ (blue) and $(i,j)=(2,3)$ (red). \textit{Bottom:} The narrow time-varying visibility inequality constraints (c.f. Figure~\ref{fig:complexleaderfollower}) associated with vehicles $(i,j)=(1,2)$ (blue) and $(i,j)=(2,3)$ (red). A video of the simulation can be accessed through~\cite{greiff2021video}.}
			\label{fig:tvsimulation}
		\end{center}
	\end{figure}
	
	For every possible combination of active inequality constraints, the resulting set of algebraic equations in Theorem~\ref{theorem:invariant_temporal_set_affine} is solved symbolically offline. In our case, there is a total of six inequality constraints, generating many combinations of possible active constraints. Each of the corresponding solutions will be associated with varying degrees of freedom, $l\in [0,4]$, in which we may choose any $W(t)=[w_1(t),...,w_l(t)]^{\top}\in \mathbb{R}^l$ such that Theorem~\ref{theorem:invariant_temporal_set_affine} is satisfied at a time $t$. In the simulation, we restrict $W(t)$ to a compact set and simply pick an admissible control   which minimizes $\|(\mathrm{d}^2/\mathrm{d}t^2)P(t)\|_{2}$ subject to the active algebraic constraints, similar to the example in Section~\ref{sec:complexleaderfollower}. Studying the simulation result in Figure~\ref{fig:tvsimulation}, it is clear that the time-varying inequality constraints on the inter-vehicle distance and the visibility are met at all times. The particular choice of   $\alpha_{12}(t)$ generates an extremely narrow cone of visibility at certain points in time, which makes it more challenging to generate constrained coordinated motions. Despite this, all of the constraints are satisfied at all times, and we also note that one or more of the inequality constraints are saturated at almost all times. 
	\section{Conclusions} \label{sec:conclusions}
	In this paper, we discuss the cooperative constrained motion coordination  problem for multiple mobile heterogeneous vehicles subject to many different constraints (holonomic/nonholonomic motion constraints, geometric coordination  constraints, equality or inequality constraints, among others). Using tools from differential geometry, distribution/codistributions for control-affine systems and viability theory, we have developed a general framework to determine whether feasible motions exist for a multi-vehicle group that meet both heterogeneous kinematic constraints and various coordination constraints. The coordination task is encoded by a mix of inequality and equality functions for describing a coordination task, which  typically include distance-constrained formation, vehicle heading coordination,  and visibility maintenance, etc.
	A motion generation algorithm is proposed to find feasible motions and trajectories for heterogeneous vehicles to achieve a coordination task. 
	
	To address time-varying coordination tasks, we have developed a general theory on temporal viability and controlled invariant set with temporal functions. We then extend the framework to deal with the cooperative coordination problem with time-varying coordination tasks and leader-follower vehicle structure.  
	Several case study examples and simulation experiments involving the coordination of multiple unicycle vehicles, constant-speed nonholonomic vehicles, fully-actuated vehicles, and car-like vehicles are provided to illustrate the proposed coordination control schemes.   In the future, we will investigate cooperative motion \textit{stabilization} control for heterogeneous vehicles under various physical and geometrical constraints (e.g., saturated control input and combined input-state constraints). We are currently looking into the extension of the proposed coordination framework to address  collaborative motion control of multiple fixed-wing systems \cite{sun2021collaborative} under speed constraints. Other interesting future directions include more efficient
	algorithmic techniques for cooperative motion generation, and distributed and optimal motion coordination for networked heterogeneous systems.    
	
	% {\color{red}time-varying, leader-follower structure, several examples on unicycle vehicle, constant-speed (fixed-wing UAV) model, and car-like vehicle. Several numerical examples have shown the applications of constrained motion coordination of heterogeneous vehicle coordination. General coordination constraints are often described by geometric functions (under equality or inequality constraints), which typically include distance-constrained formation, heading synchronization, and visibility maintenance etc.  Several examples have shown that the proposed method can deliver the feasible motions to satisfy cooperative coordination for heterogeneous vehicles under complex mixed constraints.  }

	\bibliography{Coordination}
	\bibliographystyle{ieeetr}
	
	\begin{comment}
	\appendix
	\subsection{Proof of Theorem~\ref{thm:feasible_equality}}\label{app:thmA}
	Perhaps we should state the proof more explicitly?
	
	\subsection{Proof of Theorem~\ref{thm:equality_inequality}}\label{app:thmB}
	Perhaps we should state the proof more explicitly?
	\end{comment}
	
	\begin{small}

		\begin{IEEEbiography}[{\includegraphics[width=1in,height=1.25in,clip,keepaspectratio]{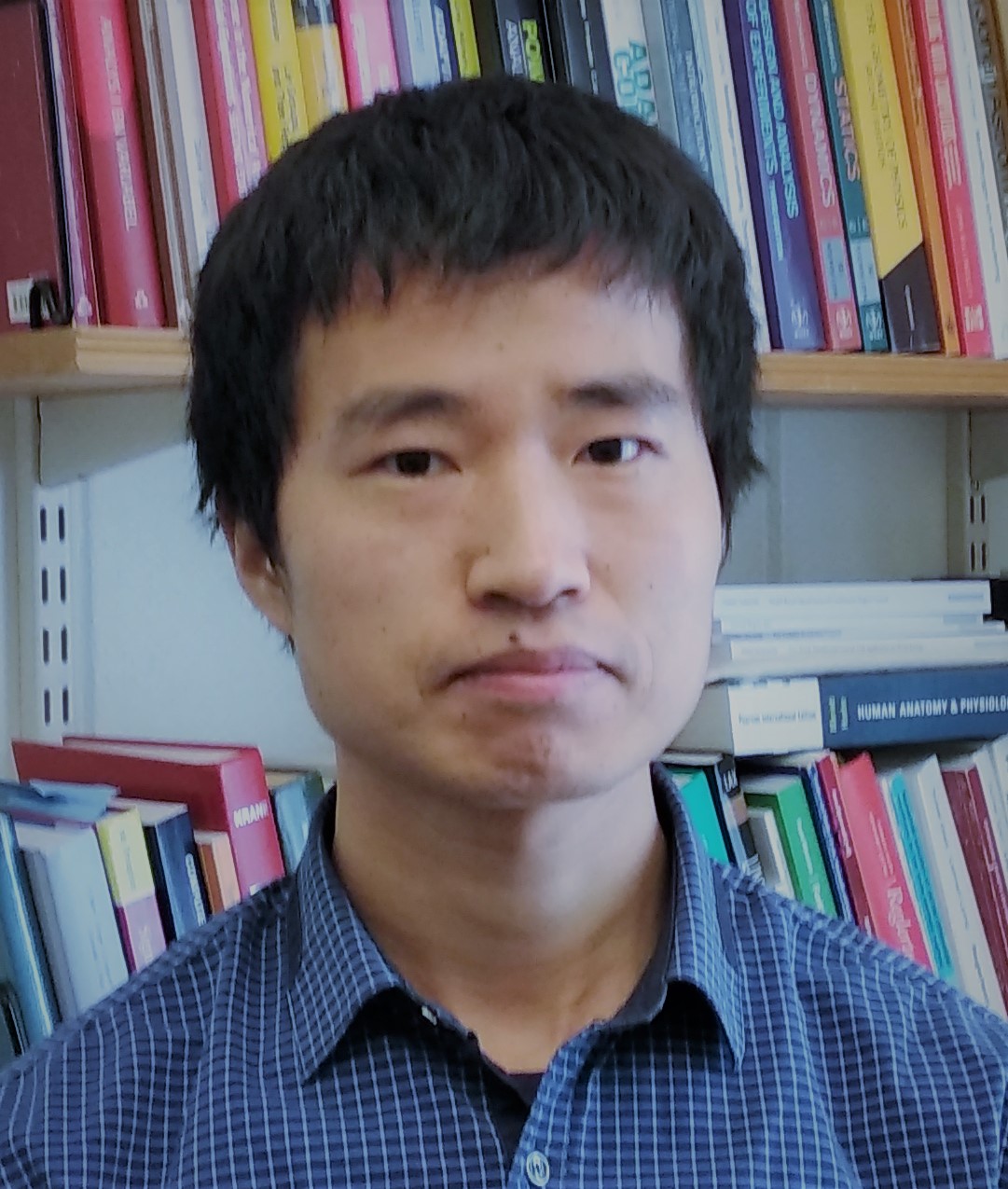}}]{Zhiyong Sun} (Member, IEEE) received the Ph.D. degree from The Australian National University (ANU), Canberra ACT, Australia, in February 2017. He was a Research Fellow/Lecturer with the Research School of Engineering, ANU, from 2017 to 2018. From June 2018 to January 2020, he worked as a postdoctoral researcher at Department of Automatic Control, Lund University, Lund, Sweden. Since January 2020 he has joined Eindhoven University of Technology (TU/e), the Netherlands, as an assistant professor. His research interests include multi-robotic systems, control of autonomous formations, distributed control and optimization.
			
			Dr. Sun was a recipient of the Australian Prime Minister’s Endeavour Postgraduate Award in 2013 from the Australian Government, and the Outstanding
			Overseas Student Award from the Chinese Government in 2016.
			He  received a Best Student Paper Finalist Award from the 54th IEEE Conference on Decision and Control (CDC) at Osaka, Japan, and was the winner of the Best Student Paper Award from the 5th Australian Control Conference at Gold Coast, Australia. He was awarded the Springer PhD Thesis Prize from Springer in 2017, and has authored the book titled \textit{Cooperative Coordination and Formation Control for Multi-agent Systems} (Springer, 2018).
		\end{IEEEbiography}
		
		\begin{IEEEbiography}[{\includegraphics[width=1in,height=1.25in,clip,keepaspectratio]{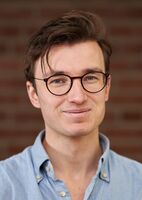}}]{Marcus Greiff
			} (Student member, IEEE) received the Ph.D. degree from the Department of Automatic Control at Lund University   in November 2021. He is  affiliated with Wallenberg AI, Autonomous Systems and Software Program (WASP), and currently working as a part-time consultant for MERL in Boston.
			
			His research topics concern nonlinear control and estimation in the context of quad-rotor UAV control. He is the winner of the best student paper award at the 2020 IEEE Conference on Control Technology and Applications (CCTA). 
		\end{IEEEbiography}
		
		\begin{IEEEbiography}[{\includegraphics[width=1in,height=1.25in,clip,keepaspectratio]{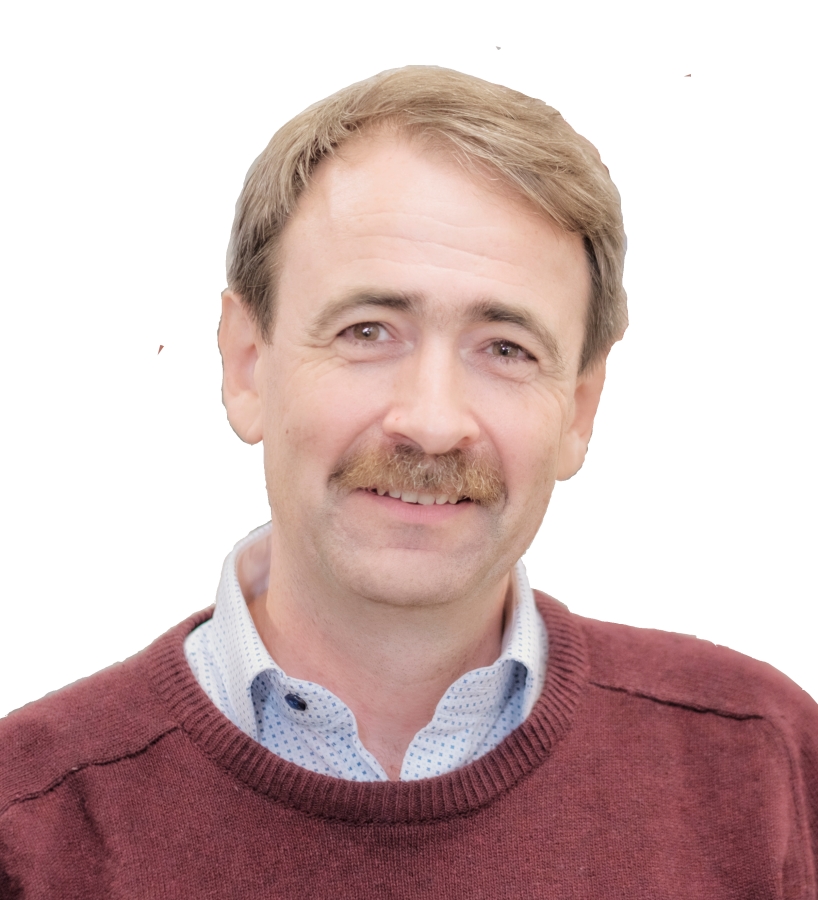}}]{Anders Robertsson} (Senior Member, IEEE) received the M.Sc. degree in electrical engineering and the Ph.D. degree in automatic control from LTH, Lund University, Lund, Sweden, in 1992 and 1999, respectively. He was appointed as a Docent in 2005 and an Excellent Teaching Practitioner in 2007. Since 2012, he has been a Full Professor with the Department of Automatic Control, LTH, Lund University. His research interests include nonlinear estimation and control, robotics research on mobile and industrial manipulators, and feedback control of computing systems.
		\end{IEEEbiography}
		
		\begin{IEEEbiography}[{\includegraphics[width=1in,height=1.25in,clip,keepaspectratio]{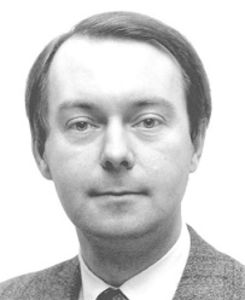}}]{Rolf Johansson} (Fellow, IEEE) received the
			M.Sc. degree in technical physics, the bachelor of-medicine degree, the Doctorate degree in
			control theory, and the M.D. degree from Lund
			University, Lund, Sweden, in 1977, 1980, 1983,
			and 1986, respectively. Since 1986, he has been with the Department
			of Automatic Control, Lund University, where he
			is currently a Professor of control science. In his
			scientific work, he has been involved in research
			in adaptive system theory, mathematical modeling, system identification, biomathematics, robotics, and combustion
			engine control.
			
			Dr. Johansson is a Fellow of the Royal Physiographic Society, Section
			of Medicine. He received the 1995 Ebeling Prize of the Swedish Society
			of Medicine for distinguished contribution to the study of human balance.
			Since 1999, he has been an Associate Editor of the \textit{International Journal
				of Adaptive Control and Signal Processing}. He is also an Editor of
			\textit{Intelligent Service Robotics} and a Member of the editorial boards of
			\textit{Mathematical Biosciences}.
		\end{IEEEbiography}
		
		\begin{IEEEbiography}[{\includegraphics[width=1in,height=1.25in,clip,keepaspectratio]{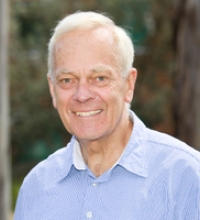}}]{Brian D. O. Anderson}  (M’66-SM’74-F’75-LF’07) was born in Sydney, Australia, and educated at Sydney University in mathematics and electrical engineering, with PhD in electrical engineering from Stanford University in 1966. 
			
			He is an Emeritus Professor at the Australian National University (having retired as Distinguished Professor in 2016), and was formerly Distinguished Professor at Hangzhou Dianzi University, and Distinguished Researcher in Data-61 CSIRO, Australia. His awards include the IEEE Control Systems Award of 1997, the 2001 IEEE James H Mulligan, Jr Education Medal, and the Bode Prize of the IEEE Control System Society in 1992, as well as several IEEE and other best paper prizes. He is a Fellow of the Australian Academy of Science, the Australian Academy of Technological Sciences and Engineering, the Royal Society, and a foreign member of the US National Academy of Engineering. He holds honorary doctorates from a number of universities, including Universite Catholique de Louvain, Belgium, and ETH, Zurich. He is a past president of the International Federation of Automatic Control and the Australian Academy of Science. His current research interests are in distributed control, social networks and econometric modelling.
		\end{IEEEbiography}
	\end{small}

\end{document}